 \def\colorful{0}
 \def\nnewcolor{1}
 \newtheorem{theorem}{Theorem}[section]
 \newtheorem{lemma}[theorem]{Lemma}
 \newtheorem{remark}[theorem]{Remark}
 \newtheorem{definition}[theorem]{Definition}
 \newcommand\numberthis{\addtocounter{equation}{1}\tag{\theequation}}
 \newcommand{\UU}{\mathcal{U}}
\def\E{\ensuremath{\mathrm{\mathbf{E}}}}
\def\Pr{\ensuremath{\mathrm{\mathbf{Pr}}}}
\def\Var{\ensuremath{\mathrm{\mathbf{Var}}}}
\def\Cov[2][]{\ensuremath{\mathrm{\mathbf{Cov}}}}
 \title{Testing Properties of Multiple Distributions with Few Samples}
\author  {Maryam Aliakbarpour
 	\thanks{Massachusetts Institute of Technology, Cambridge, MA 02139. Email:~\url{maryama@mit.edu}. Research supported by MIT-IBM Watson AI Lab (Agreement No. W1771646), NSF grants IIS-1741137, and CCF-1733808.}
 	\and Sandeep Silwal
 	\thanks{Massachusetts Institute of Technology, Cambridge, MA 02139. Email:~\url{silwal@mit.edu}. Research supported by the NSF Graduate Research Fellowship under Grant No. 4000054330.}}
 \def\accept{{\fontfamily{cmss}\selectfont accept}\xspace}
 \def\reject{{\fontfamily{cmss}\selectfont reject}\xspace}
\begin{document}
 	
 	\maketitle
 	
 	\setcounter{page}{0}
 	
 	\thispagestyle{empty}

\begin{abstract}
We propose a new setting for testing properties of distributions while receiving samples from several distributions, but few samples per distribution. Given samples from $s$ distributions, $p_1, p_2, \ldots, p_s$, we design testers for the following problems: (1) Uniformity Testing: Testing whether all the $p_i$'s are uniform or $\epsilon$-far from being uniform in $\ell_1$-distance (2) Identity Testing: Testing whether all the $p_i$'s are equal to an explicitly given distribution $q$ or $\epsilon$-far from $q$ in $\ell_1$-distance, and (3) Closeness Testing: Testing whether all the $p_i$'s are equal to a distribution $q$ which we have sample access to, or $\epsilon$-far from $q$ in $\ell_1$-distance. By assuming an additional natural condition about the source distributions, we provide sample optimal testers for all of these problems. 
\end{abstract}

\section{Introduction} \label{sec:intro}

Statistical tests are a crucial tool in scientific endeavors to analyze data: We routinely model data to be a set of samples from an unknown distribution, and use statistical tests to infer or verify the properties of the underlying distribution. While these tests typically operate under the assumption that data points are drawn from a {\em single} underlying distribution, in  applications, usually the data is gathered from multiple sources. Furthermore in many situations, it is the case that the dataset contains only a few data points from each source. For example, an online shop may have the purchase history of thousands of customers while each customer may shop at the store a small number of times. Alternatively, a medical dataset might record the lifestyle behaviors of patients of a particular disease while only having few data points from any specific demographic (such as age). 

On the other hand, data that comes from multiple sources may result in a dataset consisting of a collection of unconnected and unrelated data points. For example, it might not be possible to derive any meaningful conclusions from a dataset that contains the blood pressure of patients with heart diseases, Alzheimer patients, and healthy individuals. However, if there is some consensus among the sources, we may be able to make reasonable inferences based on the data. Therefore, an important question to ask is: how can we 
mathematically model agreement among the sources such that it is possible to design testers with theoretical guarantees?

In this work, we propose a framework for hypothesis testing, one of the most fundamental problems in statistics, while allowing for the underlying data to be drawn from multiple distributions (sources) 
and only receiving ``few" samples from each distribution. More specifically, we study the following problem: suppose we have $s$ source distributions, $p_1, \ldots, p_s$. We have a distribution $q$ (hypothesis), and we aim to distinguish between the case where all the source distributions are equal to $q$ and the case where all the source distributions are far from $q$. We propose a \emph{structural condition} in order to model the agreement among the sources to enable us to draw meaningful conclusions.

Our \emph{structural condition} requires all the sources to have the same preference for every element, meaning that for each domain element $x$, either all the sources assign higher probability than the ``speculated" probability, $q(x)$, or all of them assign lower probabilities. However, the sources can go arbitrarily higher or lower than $q(x)$ as long as they stay on the same side of the $q(x)$. For example, suppose one has tried several prize wheels (lottery machines) in a casino. The player spins the wheel and expects to receive one of the prizes uniformly. Given the results of each spin, our goal is to test whether all the machines were fair (i.e., selecting the prize uniformly), or they are far from being fair. In this case, we can naturally assume that if the machines are unfair, the house will assign a lower probability to the expensive prizes, and higher probability to cheap ones. Another example is political affiliation at a local vs. national level. Suppose a political party polls its constituents in a district about their opinion on the most crucial policy and compares it with national polls. It is natural to assume that the policies of national interest will receive the same responses in different districts. On the other hand, if a policy affects the district positively (or negatively), members of the district are more (less) likely to pick them. It is worth noting that if no structural condition is assumed, the problem becomes vacuous even in the simplest cases. The main issue is that two completely different sets of distributions may result in identical set of samples. For example, suppose each $p_i$ is a singleton distribution on a random element $x \in [n]$. If we draw one sample from each distribution, the samples we obtain will be indistinguishable from the samples that are i.i.d.\@ from a uniform distribution over $[n]$. See Section~\ref{sec:Motivation} for more elaboration.

Given our agreement condition, we consider three different cases for our hypothesis $q$: $(i)$ \emph{Uniformity testing:} $q$ is uniform. $(ii)$ \emph{Identity testing (goodness of fit):} $q$ is explicitly known. $(iii)$ \emph{Closeness testing (equivalence test):} $q$ is accessible through samples. We require each source distribution to provide \emph{exactly one} sample for uniformity and one sample in expectation for identity and closeness testing.
We develop sample optimal testers for all these three problems. In fact, the sample complexity of our testers is exactly equal to the standard versions of these problems when samples are drawn from a single source. These results lead to the belief that our agreement condition provides the same power as the standard setting for designing the testers while operating under a weaker assumption.

Our sample complexity upper bounds are achieved by using variants of testers previously used for distribution testing in the case where samples are drawn from a fixed distribution. The challenge however, lies in analyzing these testers in our more general setting with multiple sources. The sample complexity lower bounds follow directly from the single distribution setting.  For a full description of our contributions and approaches, see Section~\ref{sec:contributions}.

\subsection{Necessity of modeling multiple sources} We might hypothesize that data points drawn from different distributions can be thought of as coming from some `average' or `aggregated' distribution. Indeed, we know by de Finetti's theorem that an infinite sequence of exchangeable random variables is actually drawn from a mixture of product distributions. In other words, there is some latent variable such that conditioning on this variable, all the samples are independently drawn from one probability distribution. However in the case that we have finitely many samples (or equivalently finitely many sources), de Finetti type theorems only hold up to some approximation error and in the case where the number of samples is sublinear in the domain size, we give a family of distributions where the sequence of random variables with each sample drawn from a different distribution cannot be seen as a mixture of product distributions. This result implies that modeling data as samples from a single distribution is not sufficient when multiple sources are involved. 
See Section~\ref{sec:definetti_fail} for more information.

\subsection{Comparison with other models} Studying properties of a collection of distributions has been studied prior to our work in~\cite{LRR13, ABR16, DiakonikolasK:2016}. These papers consider two primary models for sampling a collection of distributions. In the first model, which is called the \emph{query model}, the user can query each distribution and receive a sample from it. In the second model, which is called the \emph{sampling model}, the user does not get to choose the source distribution, but the user receives a pair $(i,j)$ which can be interpreted as a sample from the collection: the first element $i$ indicates that distribution $i$ was selected with a probability proportional to some (known or unknown) weight, and then $j$ is a sample drawn from the $i$-th distribution.

There are few differences between our model and two models listed above. In these models, there is no limit on the number of samples that can come from a distribution. This is in contrast to our setting where every distribution contributes only one sample in expectation. On the other hand, in these two models, there is no agreement condition imposed between the different distributions, and their goal is to distinguish if all the distributions are equal or their \emph{average distance} from a single distribution is at least $\epsilon$. Considering the average distance essentially turns this problem into testing closeness of a distribution over the domain $[n]\times[s]$ which requires more samples.

While our problems are inherently different, 
none of the results in the papers cited above  solve the problems we consider using a sublinear number of queries. In fact in some regime of the parameters, their algorithms draw $\omega(1)$ samples (even in expectation). In some special case, where the number of samples per distribution is $\Theta(1)$ in expectation, the sample complexity of their algorithm is greatly larger than ours. In particular, suppose we have $s$ distributions over a domain of size $n$ and we draw $m$ samples from them in total. In the query model, the provided algorithms pick a few distributions and draw $O(n^{2/3})$ samples from them which is in contrast to our requirement of one sample per distribution. Moreover for the sampling model, the optimal algorithm needs $m = O(\sqrt{n s}/\epsilon^2 + n^{2/3}s^{1/3}/\epsilon^{4/3})$ samples in total.
Roughly speaking, if the number of distributions is asymptotically smaller than $n$, i.e., $s = o(n)$, then certainly the number of samples, $m$, has to be $\omega(s)$ meaning that we need more than $\Theta(1)$ samples per distribution. 
On the other hand, if the number of  distributions, $s$, is $\Omega(n)$, then the number of samples, $m$, has to be $\Omega(n/\epsilon^2)$ which is drastically larger than our sample complexity, $O(\sqrt{n}/\epsilon^2 + n^{2/3}/\epsilon^{4/3})$.

In~\cite{TianKV17, vinayak19a}, the authors consider a similar setting as our paper. In their setting, they have $N$ distributions over the domain of size two. Each distribution is determined by a parameter which indicates the probability of the first domain element, and the algorithm receives $t$ samples from each distribution. However, these papers consider a very different problem compared to ours as their goal is to optimally learn the histogram of the parameters with approximation error as a function of $t$.

\subsection{Other related work}
Distribution property testing is a framework for investigating properties of a distribution(s) upon receiving samples. This framework was first introduced in~\cite{GGR98, Batu:2000}, and it is part of the broader topic of hypothesis testing in statistics~\cite{NeymanP, lehmann2005testing}. In this framework, we wish to determine if one or more unknown distributions satisfy a certain property or are `far' from satisfying the property. The goal is to obtain an algorithm, or tester, for this task that has the optimal sample complexity. Since its introduction, several properties have been considered. See~\cite{Rub12, canonne2015survey,GoldreichBook17} for a survey of results.

The problems of testing uniformity, identity, and closeness of distributions have first been considered in \cite{GR00,Batu:2000, BatuFFKRW}where it is assumed that samples are always drawn from a fixed distribution. Many subsequent work improved on their results, and eventually testers with optimal sample complexities of $\Theta(\sqrt{n}/\epsilon^2)$ for identity and uniformity testing, and $\Theta(n^{2/3}/\epsilon^{4/3} + \sqrt{n}/\epsilon^2)$ for closeness testing were obtained. See~\cite{Valiant:2008, Paninski:08, VV11, ChanDVV14, DiakonikolasKN14, ADK15, DiakonikolasK:2016,Goldreich2016TheUD, DiakonikolasGPP16, DiakonikolasGPP18, BlaisC19,Batu2017GeneralizedUT}. For a survey of techniques used for these problems, see~\cite{canonne2015survey}.

\subsection{Organization}
We start with definitions and preliminaries in Section~\ref{sec:preliminaries}. In Section \ref{sec:unif_testing}, we study uniformity testing with samples from multiple sources. 
In Section \ref{sec:id_testing}, we study identity testing with non-identically drawn samples. In Section \ref{sec:closeness_testing}, we study closeness testing with non-identically drawn samples. Finally, we prove Theorem \ref{thm:definettiex} in Appendix \ref{sec:definetti}.

\section{Preliminaries} \label{sec:preliminaries}
\subsection{Notation and Definitions}
We use $[n]$ to denote the set $\{1, \cdots, n\}$. We consider discrete distributions over $[n]$, which are non-negative functions $p:[n] \rightarrow [0,1]$ such that $\sum_{i \in [n]} p(i) = 1$. We let $p(i)$ denote the probability assigned to element $i \in [n]$ by a distribution $p$ and for a set $A \subseteq [n]$, we define $p(A) = \sum_{i \in A} p(i)$. For $q \ge 1$, the $\ell_q$-norm of distribution $q$ is defined as $\| p \|_q = (\sum_{i \in [n]} p(i)^q )^{1/q}$. Given two distributions $p$ and $p'$, the $\ell_q$-distance between them is defined as the $\ell_q$-norm of the vector of their differences: $\| p - p'\|_q = (\sum_{i \in [n]} |p(i)-q(i)|^q)^{1/q}$. The total variation distance of two distributions $p$ and $p'$ is defined as $\|p - p'\|_{TV} = \sup_{A} |p(A)-p'(A)|$ which is known to be equal to $\|p-p'\|_1/2$. We say that two distributions $p$ and $p'$ are $\epsilon$-far in $\ell_q$-distance if $\|p-p'\|_q \ge \epsilon$. Otherwise, we say that $p$ and $p'$ are $\epsilon$-close in $\ell_q$-distance. In this paper, we primarily focus on $\ell_1$-distance. We denote the uniform distribution over $[n]$ by $\UU_n$. Also, we refer to a Poisson random variable with parameter $\lambda$ as $\text{Poi}(\lambda)$.

\subsection{The Structural Condition} \label{sec:Motivation}
We introduce the structural condition used in our multiple source distribution testing setting. This condition models the assumption that the different sources have an {\em agreement} of the preferences which we explain earlier.

\begin{definition}[Structural Condition]\label{def:structural}  Given a sequence of distributions $p_1, p_2, \cdots$ over $[n]$ and another distribution $q$ over $[n]$, we say that $\{p_i\}_{i \ge 1}$ satisfy the structural condition if there exist sets $A \subset [n], B = [n] \setminus A$, such that for all the $p_i$'s,
\begin{align*} 
		p_i(j) \geq q(j) & \quad \quad \forall j \in A \,,\\ 
		p_i(j) \leq q(j) & \quad \quad \forall j \in B.
	\end{align*}
	Note that we \textbf{do not} assume the knowledge of what the sets $A$ and $B$ are, we only assume the existence of the two sets. 
\end{definition}
\subsubsection{Alternative agreement conditions}
To motivate Definition \ref{def:structural}, our \emph{structural condition}, we focus on the problem of uniformity testing. In the usual setting of uniformity testing, we are given sample access to a \emph{single} unknown probability distribution $p$ over $[n]$, and we wish to determine if $p$ is equal to $\UU_n$  or if $\|p-\UU_n\|_1 \ge \epsilon$. 

The most general relaxation of the single source assumption is to allow each sample to be drawn from a possibly different distribution. In particular, we wish to distinguish the completeness case, where each sample is i.i.d.\@ from $\UU_n$, from the soundness case, where sample $i$ is drawn independently from $p_i$, and $p_i$ and $p_j$ are not necessarily the same for $i \ne j$, and $\|p_i - \UU_n\|_1 \ge \epsilon$ for all $i$. 
By using the relation between the $\ell_1$-norm and the total variation distance, this general setting can be written in the following way in the soundness case which we require the total variation distance between every $p_i$ and the uniform distribution to be at least $\epsilon/2$:
\begin{equation}\label{eq:def1}
\min_{i} \max_{A \subseteq [n]} | p_i(A)-\UU_n(A)| \ge \epsilon/2. 
\end{equation}
However, we cannot hope to drive meaningful conclusions in this setting. Consider the case where each $p_i$ is a singleton distribution on a random element $x \in [n]$. If we draw one sample from each distribution, the samples we obtain will be indistinguishable from the samples that are i.i.d.\@ from a uniform distribution over $[n]$.
A natural strengthening of \eqref{eq:def1} is to assume that in the soundness case, not only each distribution is different from $\UU_n$ on some set $A$, as we had above, but all the $p_i$'s are far from $\UU_n$ on the \emph{same} set $A$. This can be written as:
\begin{equation}\label{eq:def2}
\max_{A\subseteq [n] }\min_{i} | p_i(A)-\UU_n(A) | \ge \epsilon/2. 
\end{equation}
(Note that the min and max are switched from Equation~\eqref{eq:def1}). In other words, there is some fixed set $A$ such that $p_i$ and $\UU_n$ are assigning very different probability mass to the set $A$. However, this assumption is still too weak to support uniformity testing in sublinear time. The main reason is that we can come up with $s$ distribution satisfying Equation~\eqref{eq:def2}, but the samples drawn from them look the same as uniform distribution. In general, for testing a symmetric property (i.e., a property that does not depend on the labeling of the elements), e.g., uniformity, we only consider the number of repetition in the sample set. The main sources of information is how many elements repeated $t$ many times in the sample set. In the single distribution setting, these information is related to the moments of the underlying distribution, and it is known that distributions with the similar moments requires a lot of samples to tell them apart~\cite{RaskhodnikovaRSS:2007, Valiant08, wu2015chebyshev}.  

Consider the following example where we have $s < n$ distributions, and each distribution $p_i$ is supported on $[1,i] \subset [n]$. For $i \in [s]$ The distribution $p_i$ assigns the following probability to the domain element $x \in [n]$.
$$
p_i(x) =
\begin{cases}
\frac{1}n & \text{if } x < i\\
1- \frac{i-1}n & \text{if } x = i\\
0& \text{if } x > i
\end{cases}.
$$
Let $A$ be the set of elements that all the $p_i$'s assign zero probability to them: $\{s+1, s+2, \ldots, n\}$. Clearly in our example Equation~\eqref{eq:def2} holds for a parameter $\epsilon < 1$. As long as $s \le (1-\epsilon)n$ since $\|p_i-\UU_n\|_1/2 \geq |p_i(A) - \UU_n(A)| \geq (n - s)/n \geq \epsilon$ for all $i$. Now, the probability that samples $i$ and $j$, where $i < j$, are equal is
$$ \frac{i-1}{n^2} + \left(1 - \frac{i-1}n \right)\frac{1}n = \frac{1}n  $$
which is exactly the probability of a collision between two different samples in the completeness case. Furthermore, for any $k \le s \le (1-\epsilon)n$, we can compute the probability that any $k$ samples $i_1 <  \cdots < i_k$ match. Due to the support of $p_{i_1}$, we know that this quantity is precisely
$$ \frac{i_1-1}{n^{k}} + \left( 1 - \frac{i_1-1}n \right) \frac{1}{n^{k-1}} = \frac{1}{n^{k-1}} $$
which is exactly the probability that any $k$ samples all match if all samples are drawn from the uniform distribution. Therefore with some generalized notion of the moments, the set of distributions in the above example match the first $O(n)$ moments of the uniform distribution. Due to the matching of these moments, we cannot hope to test uniformity (or any other symmetric property). Hence, a stronger structural condition than \eqref{eq:def2} is needed to allow testing in our setting. In this work, we proposed a natural strengthening of the assumption \eqref{eq:def2}, given in Definition \ref{def:structural}, which is enough to perform uniformity testing, along with other hypothesis testing problems. This is elaborated in Section \ref{sec:contributions}.

	\subsection{Our Contributions} \label{sec:contributions}

\subsubsection{Uniformity testing with multiple sources} In our multiple source distributions setting for uniformity testing, we have $s$ distributions, $p_1, \ldots, p_s$, and each distribution provides {\em exactly one} sample. Our goal is to distinguish the following cases with probability at least $2/3$\footnote{Note that the constant $2/3$ is arbitrary here. One can boost the accuracy to $1-\delta$ for an arbitrary small $\delta$ by increasing the number of samples (distributions) by a $O(\log \delta^{-1})$ factor.}:
\begin{itemize}
	\item {\textbf{Completeness case:}} $p_1, p_2, \cdots$ are all uniform on $[n]$. 
	\item{\textbf{Soundness case:}} $p_1, p_2, \cdots$ are all $\epsilon$-far from uniform on $[n]$ in $\ell_1$-distance.
\end{itemize}
Furthermore, we impose that in the soundness case, the distributions $\{p_i\}_{i=1}^s$ satisfy the \textit{structural condition} given in Definition \ref{def:structural} with $ q$ being $\UU_n$, the uniform distribution. That is in the soundness case, all the distributions have mass at least $1/n$ on the elements in $A$ and at most $1/n$ on the elements in $B$ for some sets $A$ and $B$ that are \textbf{unknown} to us. Note that the structural condition trivially holds in the completeness case when all the $p_i$'s are the same distribution. Therefore, we can think of our setting as a generalization of uniformity testing. 

We show that the standard collision-based algorithm used in the single distribution case of uniformity testing (\cite{Goldreich2011, Batu:2000, DiakonikolasGPP16}) is able to distinguish the completeness and the soundness case in our multiple sources setting. The statistic that we calculate is the number of pairwise collisions among the samples. We show that in the completeness case, there are `few' collisions among the samples whereas in the soundness case, we see `many' collisions. 
The main challenge is the analysis of this statistic in the soundness case, since the distributions $p_1, p_2, \cdots$ are not necessarily the same.

In the completeness case that all the $p_i$'s are equal to some distribution $p$, the collision statistic is equal to a multiple of the $\ell_2$-norm of $p$.   We proceed similarly by introducing a more general notion of $\ell_2$-norm in our setting. In addition, we argue that our statistic is sufficiently concentrated by calculating its variance. We generalize the tight variance analysis of \cite{DiakonikolasGPP16}, which shows that the collision based tester is optimal in the single source setting. Again if all the $p_i$'s are equal to some distribution $p$, as is the case in the single source uniformity testing setting, the variance is related to the $\ell_3$-norm of $p$. In our case where the $p_i$'s are not necessarily the same, we introduce a generalized notion of $\ell_3$-norm and relate it to our notion of $\ell_2$-norm. This argument relies on Maclaurin's inequality. Altogether, our analysis shows that we can perform uniformity testing in our setting using $O(\sqrt{n}/\epsilon^2)$ samples, which is optimal since the standard single source uniformity testing is a special case of our setting, has a known sample complexity lower bound of $\Omega(\sqrt{n}/\epsilon^2)$~\cite{Paninski:08}. This result is presented in Section~\ref{sec:unif_testing}.

\subsubsection{Identity testing with multiple sources} We now describe identity testing in the multiple source distributions setting. We first assume that we explicitly know some fixed distribution $q$ over $[n]$. We suppose we have $s$ distributions, $p_1, \ldots, p_s$. Our goal then is to distinguish the following cases with probability at least $2/3$:
\begin{itemize}
	\item {\textbf{Completeness case:}} $p_1, p_2, \cdots, p_{s}$ are identical to $q$
	\item{\textbf{Soundness case:}} $p_1, p_2, \cdots, p_{s}$ are all $\epsilon$-far from $q$ in $\ell_1$-distance.
\end{itemize}
Furthermore, we impose that the distributions $\{p_i\}_{i=1}^s$  and $q$ satisfy the \textit{structural condition} given in Definition \ref{def:structural}. For identity testing with multiple sources, we use a generalization of the poissonization method used in many distribution testing problem (see \cite{canonne2015survey}): we assume that we receive $\text{Poi}(1)$ samples, as opposed to one sample from each distribution $p_i$ that we had in the uniformity case. Clearly, each distribution provides one sample in expectation, and with high constant probability, no distribution provides more than $O(\log s)$ samples. 

In standard single distribution identity testing, a modified version of Pearson's $\chi^2$-test statistic is picked to calculate the expected value of $\|q-p\|_2^2$, where $q$ is the known distribution and all samples are from $p$~\cite{ValiantV14, ChanDVV14, ADK15, DiakonikolasK:2016}. In our case, we generalize this approach and give a new statistic, again a modified version of Pearson's $\chi^2$-test, which calculates a variant of the $\ell_2$-distance between our known distribution $q$ and the distributions that our samples come from. 

In particular, if we take $s$ samples, we show that the expected value of our statistic is $\left\| \sum_{j = 1}^s \vec{\mathbf{e}}_j \right \|_2^2$, where $\vec{\mathbf{e}}_j$ is a vector in $\mathbb{R}^n$ where the $x$-th entry is $|p_j(x)-q(x)|$ for $x \in [n]$. Note that one can think of this quantity as a generalization of $\|q-p\|_2^2$. We then show that the sample complexity of distinguishing the soundness and completeness cases for our generalized identity testing depends on $\|q\|_2$, the $\ell_2$-norm of the known distribution. The main technical issue is to analyze the variance of our statistic which is challenging since each sample can come from a possibly distinct distribution. Finally, we show how to reduce $\|q\|_2$ using a `flattening' scheme adapted from \cite{DiakonikolasK:2016} that only enlarges the domain size by a constant factor which results in the sample complexity of $O(\sqrt{n}/\epsilon^2)$ which is optimal since the standard single distribution uniformity testing is a special case of our generalized identity testing, and it requires $\Omega(\sqrt{n}/\epsilon^2)$ samples~\cite{Paninski:08}.

\noindent\textbf{Remark on Goldreich's reduction from identity to uniformity testing:\quad}
There is a reduction from identity testing to uniformity testing given in \cite{goldreich_reduction} in which Goldreich gives mappings $F_1$ and $F_2$ such that if $p$ is $\epsilon$-far from $q$, then $F_2(F_1(p))$ is a distribution that is $O(\epsilon)$-far from the uniform distribution over a larger domain of size $m = n/\gamma$ where $\gamma$ is a parameter of the reduction. This reduction works partially in our case. Denote $F = F_2 \circ F_1$. Then we can check that $F(p_i)$ is $O(\epsilon)$-far from $q$ for every $p_i$ in the soundness case. Furthermore, if $x \in [n]$ is also in $A$, then the domain elements corresponding to $x$ in $[m]$ are all at least $1/m$ and similarly, if $x \in [n]$ is in the set $B$, the domain elements corresponding to $x$ in $[m]$ are all at most $1/m$. In particular, $F$ maps the set $A \subseteq [n]$ to a set $A' \subseteq [m]$ that has the same properties as $A$ and similarly, $F$ maps the set $B \subseteq [n]$ to another subset $B' \subseteq [m]$.

The issue in applying this reduction to our setting is with $F_1$. In particular, $F_1(p_i)$ increases the domain size from $[n]$ to possibly $[n+1]$, and there is no guarantee if the domain elements in $m$ corresponding to $n+1$ will be in $A'$ or $B'$. In particular, it could be that for some $p_i$'s, these domain elements will be in $A'$ and for other $p_i$'s, these domain elements can possibly be in $B'$. This could create potential `cancellations' that hide collisions when observing samples from $F(p_i)$. To fix this, we would have to make sure these domain elements don't have much probability mass, which leads to letting $\gamma = O(\epsilon)$. This ultimately leads to a sub optimal query complexity in terms of $\epsilon$ for identity testing. Therefore, we do not pursue this approach.

\subsubsection{Closeness testing with multiple sources} We now describe our generalized version of closeness testing. We assume that we have access to two streams of samples. In the first stream, all samples are i.i.d.\@ from some fixed distribution $q$ over $[n]$ that is unknown to us. In the second stream, samples are drawn independently from distributions $p_1, \ldots, p_s$ where $p_i$ and $p_j$ are not necessarily the same distribution for $i \ne j$. Our goal then is to distinguish the following cases with probability at least $2/3$:
\begin{itemize}
	\item {\textbf{Completeness case:}} $p_1, p_2, \cdots, p_s$ are identical to $q$
	\item{\textbf{Soundness case:}} $p_1, p_2, \cdots, p_s$ are all $\epsilon$-far from $q$ in $\ell_1$-distance.
\end{itemize}
We also impose that the distributions $\{p_i\}_{i=1}^s$  and $q$ satisfy the \textit{structural condition} given in Definition \ref{def:structural}. Note that the structural condition trivially holds in the completeness case.

Our approach to closeness testing with multiple sources is very similar to our approach for identity testing above. We make use of the poissonization method. In particular, we draw $\text{Poi}(s)$ samples from distribution $q$. Also, we take $\text{Poi}(1)$ samples from each of the distributions $p_i$, so in total we have $\text{Poi}(s)$ samples from the distributions $\{p_i\}_{i=1}^s$. Furthermore, we use a (different) modified version of Pearson's $\chi^2$-test proposed in~\cite{ChanDVV14, DiakonikolasK:2016} and show that the expected value of our statistic is $\left\| \sum_{j = 1}^s \vec{\mathbf{e}}_j \right \|_2^2$ where the vector $\vec{\mathbf{e}}_j$ is the same as in the identity testing case above. With a careful analysis of the statistic, in contrast with~\cite{DiakonikolasK:2016}, we show that the sample complexity only depends on the $\ell_2^2$-norm of the $q$. \footnote{Similar analysis has appeared in~\cite{ADKR19} before this work.} Finally, we use a (randomized) `flattening' scheme from \cite{DiakonikolasK:2016} which results in the sample complexity of $O(n^{2/3}/\epsilon^{4/3} + \sqrt{n}/\epsilon^2)$  which is optimal since there is a known lower bound of $\Omega(\max(n^{2/3}/\epsilon^{4/3}, \sqrt{n}/\epsilon^2))$ for the single distribution setting of closeness testing~\cite{DiakonikolasK:2016}. Using the same techniques, we also obtain a tester which uses asymptotically different number of samples from $q$ compared to the number of sources (known as testing with unequal-sized samples). See Remark~\ref{remark:unequal-sized-closeness} for more details.

\subsubsection{Failure of de Finetti's Theorem with sublinear number of samples}\label{sec:definetti_fail}
An infinite sequence $X_1, X_2, \ldots$ of random variables is called exchangeable if for all $m \ge 1$, the distribution of the sequence $X_1, \ldots, X_m$ is identical to the distribution of $X_{\sigma(1)}, \ldots, X_{\sigma(m)}$ for  any permutation $\sigma$ on $m$ elements. de Finetti's theorem states that any {\em  infinite} exchangeable sequence is a mixture of product distributions. In other words, there exists a probability measure $\mu$ such that conditioned on $\mu$, $X_1, X_2, \ldots$ can be seen as i.i.d. samples from a distribution. 

Similarly, a finite sequence $X_1, \ldots, X_m$ is called exchangeable if all the permutations of the sequence have the same distribution. If an exact version of de Finetti's theorem were to hold for finite sequences, our new setting where each sample can come from a different distribution reduces to the known setting where all the samples are i.i.d.\@ (since an algorithm can turn the samples it sees into an exchangeable sequence by randomly permuting the samples). However, all the known finite versions of de Finetti's type theorems have an error term which roughly states that finite exchangeable sequences are only {\em approximately} close to mixtures of product distributions (see \cite{main_exchangeability, definetti_alt, definettibook}). 

In Section \ref{sec:definetti}, we give an example of a finite sequence of random variables that falls in the soundness case of our setting of uniformity testing with multiple sources that is $\Omega(1)$-far from any mixture of product distributions. More precisely, our theorem, Theorem \ref{thm:definettiex},  tells us that it is not always possible to approximate a finite exchangeable sequence $X_1, \ldots, X_s$ arbitrarily well by a mixture of product distributions. This suggests that it is not possible to use de Finetti's theorem in our setting and therefore, more refined tools are needed rather than a hammer like de Finetti's theorem. More formally, our theorem is the following.

\begin{restatable}{theorem}{definettiex}
	\label{thm:definettiex}
	Let $s = O(\sqrt{n})$ be the number of samples required by Algorithm \ref{alg:uniformity-tester} for $\epsilon = 1/3$. There exists an exchangeable sequence $X_1, \ldots, X_s$ such that $X_i$ is drawn from distribution $q_i$ which are all supported in $[n]$ and satisfy $\|q_i-\UU_n\|_1 \ge 1/3$ for all $i$. Furthermore, $\{q_i\}_{i=1}^s$ all satisfy the structural condition given in Definition \ref{def:structural} with $q = \UU_n$. Let $P$ denote the distribution of the sequence $X_1, \ldots, X_s$. Then $P$ is $\Omega(1)$-far in $\ell_1$-distance from any mixture of product distributions.
\end{restatable}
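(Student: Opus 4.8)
The plan is to realize the sequence explicitly as sampling \emph{without replacement} from half of $[n]$, and then to separate its law from every mixture of i.i.d.\ distributions by means of the ``all samples pairwise distinct'' event --- an event that is certain under our construction but provably \emph{not} certain under any i.i.d.\ mixture, precisely because the algorithm's sample size sits at the birthday scale $s=\Theta(\sqrt n)$.

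Concretely, let $a:=\lfloor n/2\rfloor$, $S:=\{1,\ldots,a\}\subset[n]$, and let $P$ be the law of a uniformly random tuple $(X_1,\ldots,X_s)$ of pairwise distinct elements of $S$; this is well defined since $s=\Theta(\sqrt n)\le a$ for large $n$. The sequence is exchangeable by symmetry, and each $X_i$ has the common marginal $q^{*}:=\UU_S$. I would then check the hypotheses of the theorem directly: $\|q^{*}-\UU_n\|_1=a(\tfrac1a-\tfrac1n)+(n-a)\tfrac1n=2(1-a/n)\ge 1>1/3$, so every $q_i=q^{*}$ is $\tfrac13$-far from $\UU_n$; and the structural condition of Definition~\ref{def:structural} (with $q=\UU_n$) holds by taking the set $A$ there to be $S$, because $q^{*}(j)=1/a\ge 1/n$ for $j\in S$ and $q^{*}(j)=0\le 1/n$ for $j\notin S$.

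For the separation, consider the symmetric event $D=\{X_1,\ldots,X_s\text{ pairwise distinct}\}$, so that $P(D)=1$. The crux is to upper bound $Q(D)$ uniformly over all mixtures of product distributions, i.e.\ all $Q=\E_{p\sim\mu}[p^{\otimes s}]$ (this is the form a finite de Finetti theorem would produce; see the discussion preceding the theorem). For i.i.d.\ draws from a fixed $p$ on $[n]$ one has $P_{p^{\otimes s}}(D)=s!\,e_s(p(1),\ldots,p(n))$, where $e_s$ is the $s$-th elementary symmetric polynomial, and Maclaurin's inequality applied to the nonnegative numbers $p(1),\ldots,p(n)$ (which sum to $1$) gives $e_s(p)\le\binom{n}{s}n^{-s}$; hence $P_{p^{\otimes s}}(D)\le\prod_{k=0}^{s-1}(1-k/n)\le\exp(-s(s-1)/(2n))$. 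Averaging over $\mu$, \emph{every} mixture of i.i.d.\ laws satisfies $Q(D)\le\exp(-s(s-1)/(2n))$, which is bounded away from $1$ by an absolute constant since $s=\Theta(\sqrt n)$ makes the exponent $\Theta(1)$. Setting $\eta:=1-\sup_Q Q(D)=\Omega(1)$, we conclude
\[
\|P-Q\|_1=2\|P-Q\|_{TV}\ge 2(P(D)-Q(D))\ge 2\eta=\Omega(1)
\]
for every mixture of product distributions $Q$, which is exactly the assertion of the theorem.

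I expect the main difficulty to be conceptual rather than computational: one must pick the right symmetric statistic. ``All distinct'' is forced to equal $1$ under the correlated law $P$, yet is strictly below $1$ under every i.i.d.\ mixture, and this gap persists only because the tester's sample complexity lands exactly at the collision threshold $s=\Theta(\sqrt n)$ --- for $s=o(\sqrt n)$ the two sides would merge, while for $s=\omega(\sqrt n)$ one could not even draw $s$ distinct elements from a constant fraction of $[n]$. The sole nontrivial inequality is that the uniform distribution maximizes $P(D)$ among i.i.d.\ laws, which I would obtain from Maclaurin's inequality (already invoked elsewhere in the paper); a Bonferroni or second-moment estimate would also work but is messier. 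I would also note the small subtlety that ``mixture of product distributions'' must be understood in the de Finetti sense of a mixture of i.i.d.\ laws $p^{\otimes s}$, since otherwise every distribution on $[n]^s$ trivially has this form.
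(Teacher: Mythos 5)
Your argument is correct mathematics but it proves the wrong statement. As the discussion at the start of Section~\ref{sec:definetti} and the line ``We then draw sample $Y_i$ independently from $q_i$'' in the paper's proof make explicit, the sequence $(X_1,\ldots,X_s)$ of the theorem is meant to arise as a uniformly random permutation of \emph{independent} draws $Y_i\sim q_i$ --- that is, as the symmetrized version of one round of samples from the soundness case of multiple-source uniformity testing. The $q_i$'s in the statement are those source distributions, not the one-dimensional marginals of the $X_i$'s (which, by exchangeability, all equal $\tfrac1s\sum_j q_j$). Sampling without replacement from $S$ cannot be realized in this form. A random permutation preserves the event $D$, so $P(D)=1$ would force the $Y_i$'s to be almost surely pairwise distinct, hence to have pairwise disjoint supports; but then (i) the $q_i$'s cannot all satisfy the structural condition with a single common set $A$, since a $q_i$ vanishing on any positive-size part of $A$ violates $q_i(j)\ge 1/n$ there, and (ii) in any case, the unordered multiset $\{X_1,\ldots,X_s\}$ would be constrained to hit each support exactly once, which does not look like a uniform $s$-subset of $S$ (already at $s=2,\,|S|=3$, any two-block partition of $S$ forbids one of the three possible $2$-subsets). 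So $P(D)=1$, the lynchpin of your separation, is unattainable for any admissible construction.

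Your ancillary calculations are sound and the observation is a nice one: $s!\,e_s(p)\le\prod_{k<s}(1-k/n)\le e^{-s(s-1)/2n}$ by Maclaurin, so every i.i.d.\ mixture leaves $D$ a constant gap below $1$ precisely because $s=\Theta(\sqrt n)$ sits at the birthday threshold. But an admissible $P$ \emph{does} see collisions, which is why the paper works harder: every $q_i$ puts mass $1-O(\delta)$ on a shared heavy element and only $\delta/20$ on a window $\{2,\ldots,s+1\}$, and the separation goes through a dichotomy --- each i.i.d.\ component either has $\ge\delta^2$ mass off the heavy element (making a collision on the window overwhelmingly likely, event $\mathcal{E}_1$, which $P$ avoids) or has $\ge 1-\delta^2$ mass on it (making the heavy-hit count over-concentrate, event $\mathcal{E}_2$, which $P$ also avoids). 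To repair your approach you would need an explicitly permuted-independent construction and a symmetric statistic that, unlike $D$, does not require the $Y_i$'s to dodge one another with certainty.
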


The proof of Theorem \ref{thm:definettiex} uses ideas from Diaconis and Freedman in \cite{main_exchangeability}. For other variants and finite extension of de Finetti's theorem, see \cite{definettibook}.

\section{Uniformity Testing with Multiple Sources} \label{sec:unif_testing}

We now present our algorithm, \textsc{Uniformity-Tester}, for uniformity testing with multiple sources. We show the standard collision based statistic, introduced in ~\cite{GR00, BFRSW}, is a sufficient statistic to distinguish whether all sources are uniform or all sources are $\epsilon$-far from uniform in our multiple sources setting. 
The collision statistic is selected based on a simple observation: if we draw two samples from a distribution, the probability that these two samples are equal (also known as a \emph{collision}) is lowest when the distribution is uniform. Thus, the number of  pairwise collisions tends to be ``small" if the samples are drawn from a uniform distribution. We show that this observation still holds in our setting. Our algorithm takes $s$ samples (for a parameter $s$ which we determine later) and calculates the number of pairwise collisions in the samples. Then, it compares the number collisions to a threshold, $\tau$, which we specify later. If the number of collisions is less than $\tau$, we infer the sources are uniform and output \accept; otherwise, we infer the sources are far from uniform, and output \reject. We present our approach in~Algorithm~\ref{alg:uniformity-tester} along with the main theorem, Theorem~\ref{thm:unif}, which proves the correctness of our algorithm. 



\begin{algorithm}[t]
	\SetKwInOut{Input}{Input}
	\SetKwInOut{Output}{Output}
	\Input{$n$, $\epsilon$, one sample from each of $p_1, p_2, \ldots, p_s$}
	\Output{\accept or \reject}
	\DontPrintSemicolon
	$s \gets \frac{c_1 \, \sqrt n}{\epsilon^2}$
	\;
	Take $s$ samples $X_1, \cdots, X_s$. 
	\;
	For each $1 \le i < j \le s$, let $\sigma_{ij}$ be the indicator variable for the event $X_i = X_j$.
	\;
	$\tau \gets \frac{1 + \epsilon^2/16}{n}$ 
	\;
	$Z \gets \sum_{i < j} \sigma_{ij} / \binom{s}2$ \;
	\If{$Z \ge \tau$ }{Output \reject and abort.}
	Output \accept.
	\caption{$\textsc{Uniformity-Tester}$}
	\label{alg:uniformity-tester}
\end{algorithm}

\begin{theorem}[Correctness of \textsc{Uniformity-Tester}]\label{thm:unif}
	There exists a constants $c_1$ independent of $n$ such that  the following statements hold with probability $2/3$:
	\begin{itemize}
		\item \textbf{Completeness case:} \textsc{Uniformity-Tester} outputs  \accept if each of the $s$ distributions $p_1, \ldots, p_s$ are uniform.
		\item \textbf{Soundness Case:} \textsc{Uniformity-Tester} outputs 
		\reject if the $p_i$'s are $\epsilon$-far from the uniform distribution (i.e., $\|p_i - \UU_n \|_1 \ge \epsilon$) and $\{p_i\}_{i=1}^s$ satisfy the \textit{structural condition} of Definition \ref{def:structural} with $q = \UU_n$.
	\end{itemize}
\end{theorem}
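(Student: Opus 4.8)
The plan is to analyze the statistic $Z$ through its first two moments and finish with Chebyshev's inequality. First I would compute $\E[Z]$. Since $X_i\sim p_i$ independently, $\E[\sigma_{ij}]=\Pr[X_i=X_j]=\sum_{x\in[n]}p_i(x)p_j(x)=\langle p_i,p_j\rangle$, so $\E[Z]=\binom{s}{2}^{-1}\sum_{i<j}\langle p_i,p_j\rangle$. In the completeness case every $p_i=\UU_n$, hence $\langle p_i,p_j\rangle=1/n$ and $\E[Z]=1/n<\tau$. For the soundness case, write $\delta_i=p_i-\UU_n$, so $\sum_x\delta_i(x)=0$, $\|\delta_i\|_1\ge\epsilon$, and $\langle p_i,p_j\rangle=1/n+\langle\delta_i,\delta_j\rangle$. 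Here the structural condition is essential: since every $\delta_i$ is nonnegative on the common set $A$ and nonpositive on the common set $B$, we get $\delta_i(x)\delta_j(x)\ge 0$ for every $x$ and every pair $i,j$, so there are no cancellations and $\langle\delta_i,\delta_j\rangle=\sum_x|\delta_i(x)|\,|\delta_j(x)|\ge 0$.

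To make this quantitative, let $\delta_i^-(x)=\max(0,\tfrac1n-p_i(x))$ be the nonnegative ``negative part'' of $\delta_i$, supported inside $B$. Nonnegativity of $p_i$ forces $\|\delta_i^-\|_\infty\le 1/n$, while $\sum_x\delta_i(x)=0$ and $\|\delta_i\|_1\ge\epsilon$ force $\|\delta_i^-\|_1\ge\epsilon/2$. Using $\langle\delta_i,\delta_j\rangle\ge\langle\delta_i^-,\delta_j^-\rangle$ together with the identity $2\sum_{i<j}\langle\delta_i^-,\delta_j^-\rangle=\|\sum_i\delta_i^-\|_2^2-\sum_i\|\delta_i^-\|_2^2$, the Cauchy--Schwarz bound $\|\sum_i\delta_i^-\|_2^2\ge\|\sum_i\delta_i^-\|_1^2/n=(\sum_i\|\delta_i^-\|_1)^2/n\ge s^2\epsilon^2/(4n)$, and $\sum_i\|\delta_i^-\|_2^2\le\sum_i\|\delta_i^-\|_\infty\|\delta_i^-\|_1\le s/n$, I obtain $\E[Z]-1/n\ge\Omega(\epsilon^2/n)$ once $s=c_1\sqrt n/\epsilon^2$ with $c_1$ large enough. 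Thus $\E[Z]\ge(1+\Omega(\epsilon^2))/n$, which lies strictly above $\tau=(1+\epsilon^2/16)/n$. This is exactly the ``generalized $\ell_2$-norm'' viewpoint: $\binom s2^{-1}\sum_{i<j}\langle p_i,p_j\rangle$ plays the role that $\|p\|_2^2$ plays in the single-source collision tester.

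Next I would bound $\Var(Z)$. Expanding $\Var(\sum_{i<j}\sigma_{ij})$, the term $\mathrm{Cov}(\sigma_{ij},\sigma_{k\ell})$ vanishes whenever $\{i,j\}\cap\{k,\ell\}=\emptyset$, since those $\sigma$'s depend on disjoint, hence independent, samples. So only the $\binom s2$ variance terms, each $\Var(\sigma_{ij})=\langle p_i,p_j\rangle-\langle p_i,p_j\rangle^2\le\langle p_i,p_j\rangle$, and the $O(s^3)$ ``shared-vertex'' covariances $\mathrm{Cov}(\sigma_{va},\sigma_{vb})=\sum_x p_v(x)p_a(x)p_b(x)-\langle p_v,p_a\rangle\langle p_v,p_b\rangle$ (with $a,b\ne v$, $a\ne b$) survive. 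The variance terms contribute $\E[Z]/\binom s2=O(\E[Z]/s^2)$, which is $\ll(\E[Z]-\tau)^2$ precisely for $s\gg\sqrt n/\epsilon^2$. The key observation for the covariance terms is that each one vanishes identically in the completeness case (all factors are powers of $1/n$), so the completeness case needs nothing further; in general I would bound $\sum_x\sum_v p_v(x)\sum_{a<b}p_a(x)p_b(x)$ by a ``generalized $\ell_3$-norm'' (a sum over $x$ of a symmetric third-order function of $p_1(x),\dots,p_s(x)$) and relate it back to the generalized $\ell_2$-norm via Maclaurin's inequality applied coordinatewise — the normalized third elementary symmetric mean of nonnegative reals is at most the $3/2$ power of the normalized second one — followed by Cauchy--Schwarz over $[n]$. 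This is the multi-source generalization of the tight variance analysis of~\cite{DiakonikolasGPP16}.

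Putting the mean gap and the variance bound together, Chebyshev's inequality applied to $Z$ around $\E[Z]$ against the threshold $\tau$ gives the $2/3$ success probability in both cases. The main obstacle is the soundness-case variance bound: unlike in the single-source setting, each sample comes from a possibly distinct distribution, so the triple-product covariance terms do not collapse, and controlling them requires setting up the right multi-distribution analogues of $\|p\|_2$ and $\|p\|_3$ and invoking Maclaurin's inequality; by contrast the mean computation and the completeness-case analysis (where all covariances are exactly zero) are comparatively routine.
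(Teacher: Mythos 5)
Your proposal is correct and follows essentially the same route as the paper: compute $\E[Z]$ via the decomposition $\langle p_i,p_j\rangle = 1/n + \langle \delta_i,\delta_j\rangle$, use the structural condition to get sign-definiteness, lower-bound the aggregate via Cauchy--Schwarz, bound the variance by only the shared-vertex covariances, control the generalized $\ell_3$ quantity via Maclaurin, and finish with Chebyshev. Your mean-gap derivation, with $\|\delta_i^-\|_\infty \le 1/n$, $\|\delta_i^-\|_1 \ge \epsilon/2$, and the identity $2\sum_{i<j}\langle\delta_i^-,\delta_j^-\rangle = \|\sum_i\delta_i^-\|_2^2 - \sum_i\|\delta_i^-\|_2^2$, is exactly the paper's Lemma~\ref{lem:ev_lb} in slightly different notation (the paper's $e_i$ restricted to $B$ is your $\delta_i^-$, and its $\epsilon'_i = 2\|\delta_i^-\|_1$). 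The one genuine divergence is how you would prove the key ``strengthened Maclaurin'' bound: the paper proves
\[
\Bigl(2\sum_{i<j<k}\sum_{x}e_i(x)e_j(x)e_k(x)\Bigr)^2 \le \Bigl(\sum_{i<j}\sum_{x}e_i(x)e_j(x)\Bigr)^3
\]
by induction on $n$ with an AM--GM step, whereas you propose applying Maclaurin coordinatewise at each $x$ to get $b_x \le a_x^{3/2}$ (with $a_x=\sum_{i<j}e_i(x)e_j(x)$, $b_x=2\sum_{i<j<k}e_i(x)e_j(x)e_k(x)$) and then aggregating via $\sum_x a_x^{3/2} \le (\sum_x a_x)^{3/2}$. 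Your aggregation step is not literally Cauchy--Schwarz as you write, but it does follow in one line for nonnegative $a_x$ (e.g.\ $\sum_x a_x^{3/2} \le (\max_x a_x^{1/2})\sum_x a_x \le (\sum_x a_x)^{3/2}$, or via Cauchy--Schwarz plus $\|a\|_2\le\|a\|_1$), and it is arguably cleaner than the paper's induction; both buy the same bound. The remaining bookkeeping — showing the shared-vertex covariance sum and the $\E[Z]/\binom{s}{2}$ variance term are both $\ll (\E[Z]-\tau)^2$ for $s = c_1\sqrt{n}/\epsilon^2$ — matches the paper's case analysis on $T_1$ versus $T_2$ and checks out.
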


\begin{remark}
	The sample complexity of Algorithm~\ref{alg:uniformity-tester} is optimal due to the lower bound of $\Omega(\sqrt{n}/\epsilon^2)$ for testing uniformity in the standard single source setting presented in~\cite{Paninski:08}.
\end{remark}

\noindent\textbf{Overview of the proof:} \quad To prove the correctness of \textsc{Uniformity-Tester}, we analyze the statistic $Z$ which is the number of collisions in the sample set:
$$ Z = \frac{1}{\binom{s}2} \sum_{1 \le i < j \le s} \sigma_{ij}$$
where $\sigma_{ij}$, for $i < j$, is the indicator that sample $i$ is equal to sample $j$. The algorithm outputs \accept or \reject by comparing the statistic $Z$ to a threshold $\tau$. Our goal is to show $Z$ is below the threshold in the completeness case and above the threshold in the soundness case. To do so, we first compute the expectation of $Z$ and then show a sufficiently strong concentration around its expectation by bounding the variance of $Z$. By a careful selection of the number of samples and the threshold $\tau$, we can prove with high probability that $Z$ is on the desired side of the threshold, and consequently the correctness of the algorithm.

\begin{proof}[Proof of Theorem \ref{thm:unif}] We start by setting the parameters: Let the threshold $\tau$ be $(1 + \epsilon^2/16)/n$. Define $\alpha$ to be the solution to $\E[Z] = (1+\alpha)/n$. Let the number of samples, $s$, to be $c_1 \sqrt n /\epsilon^2$ for a sufficiently large constant $c_1$.
	
	Note that in the completeness case, all samples are coming from the uniform distribution. In this case, $Z$ is analyzed in \cite{GR00, Batu:2000, DiakonikolasGPP16}, so we know the expected value of the statistic is as follows:
	$$\E[Z] =  \|\UU_n\|_2^2 = \frac{1}{n} \,.$$ 
	Furthermore, the variance of $Z$ is bounded from above as below (see Lemma~2.3 in \cite{DiakonikolasGPP16}):
	$$\Var[Z] \leq \Theta\left(\frac{s^2 \cdot \|\UU_n\|_2^2 + m^3 \left(\|\UU_n\|_3^3 - \|\UU_n\|_2^4 \right)}{\binom{s}2^2}\right) \leq \Theta\left( \frac{1}{ns^2}\right)\,.$$
	Now, by Chebyshev's inequality, we can bound the probability that $Z$ become larger than the threshold as follows
	\begin{align*}
		\Pr\left[Z \geq \tau \right]	 
		& \leq \Pr\left[\left|Z - \E[Z]\right| \geq \frac{\epsilon^2}{16\, n} \right] 
		\leq \Theta\left(\frac{n}{\epsilon^4 \, s^2} \right) \leq \frac 1 3
		\,
	\end{align*}
	where the last inequality holds for a sufficiently large constant $c_1$ and having $s = c_1 \sqrt n /\epsilon^2$ which proves the correctness of the completeness case.
	
	The main challenge of this proof is to analyze the soundness case when the $p_i$'s are potentially different. We first give a lower bound for the expected value of $Z$. We begin by providing an intuitive overview of our approach. In the soundness case, we can compute that the expected value of the indicator random variable for a collision between the $i$-th and the $j$-th sample is given by
	\begin{equation} \label{eq:l2norm}
	\E[\sigma_{ij}] = \sum_{x \in [n]} p_i(x)p_j(x) 
	\end{equation}
	where $p_i$ and $p_j$ are the distributions that sample $i$ and $j$ are respectively drawn from. One can think of Equation\@ \eqref{eq:l2norm} as a generalization of $\|p \|_2^2$ when two distributions are involved. To bound Equation\@ \eqref{eq:l2norm} from below, we make use of the \textit{structural condition}. Namely, we can define the error terms 
	\begin{equation}\label{eq:error_terms}
	\begin{split}
	e_i(x) &= p_i(x)-\frac{1}n \quad \quad  \forall x \in A \\
	e_i(x)& = \frac{1}n - p_i(x) \quad \quad  \forall x \in B \,.
	\end{split}
	\end{equation}
	We know that 
	$$ \sum_{x \in A} e_i(x) = \sum_{x \in B} e_i(x) \,. $$ 
	In fact, the above quantities are half the $\ell_1$-distance between $p_i$ and the uniform distribution. We define $e_j(x)$ similarly for $p_j$, and the above identity similarly holds for the $e_j$'s as well. Using these equations, we show in Lemma \ref{lem:ev_lb} that 
	$$  \sum_{x \in [n]} p_i(x)p_j(x) = \frac{1}n + \sum_{x \in [n]}e_i(x)e_j(x). $$
	Recall that our goal is to show that the expected number of collisions in the soundness case is substantially larger than $\binom{s}2/n$. Thus, we desired to bound find a lower bound for the second term in the right hand side above. However, since $p_i$ and $p_j$ are not necessarily the same distribution, it could be the case that for a fixed pair $i,j$ we have $\sum_{x \in [n]}e_i(x)e_j(x) = 0$ which is what we would expect if $p_i$ and $p_j$ were both uniform. Thus, instead of bounding $\sum_{x \in [n]}e_i(x)e_j(x)$ for each pair $i$ and $j$, we show that the sum of these terms over \emph{all the pairs} $i < j$ is $\Theta(\epsilon^2)$. More formally, we have the following lemma.
	
	\begin{restatable}{lemma}{lemEVLB} \label{lem:ev_lb}
		Let $\{p_i\}_{i=1}^s$ be distributions over $[n]$ that are all $\epsilon$-far from $\UU_n$ in $\ell_1$-distance, and satisfy the \textit{structural condition} given in Definition \ref{def:structural} with $q = \UU_n$. Let $X_i$ be drawn independently from $p_i$ for all $1 \le i \le s$. Let $\sigma_{ij}$ be the indicator variable for the event $X_i = X_j$ and define $Z = \sum_{i < j}\sigma_{ij}/\binom{s}2$. Then the following estimate holds
		$$ \E[Z] \ge \frac{1 + \,\epsilon^2/8}n \,.$$
	\end{restatable}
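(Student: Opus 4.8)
The plan is to turn the claimed bound on $\E[Z]$ into an inequality purely about the error terms $e_i$ from \eqref{eq:error_terms}, and then to exploit the asymmetry of the structural condition: on $B$ these error terms are pointwise bounded by $1/n$, whereas on $A$ they need not be. \textbf{Step 1 (the collision identity).} By linearity of expectation and \eqref{eq:l2norm}, $\E[Z] = \binom{s}{2}^{-1}\sum_{i<j}\sum_{x\in[n]}p_i(x)p_j(x)$. Since each $p_i$ is a probability distribution, $\sum_{x\in A}e_i(x) = \sum_{x\in B}e_i(x)$, and this common value equals $\tfrac12\|p_i-\UU_n\|_1 \ge \epsilon/2$; also $e_i(x) = |p_i(x)-1/n| \ge 0$ for every $x$. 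Expanding $(1/n\pm e_i(x))(1/n\pm e_j(x))$ separately over $A$ and over $B$ and summing, the linear cross terms cancel by the previous identity, leaving $\sum_x p_i(x)p_j(x) = \tfrac1n + \sum_x e_i(x)e_j(x)$. Hence $\E[Z] = \tfrac1n + \binom{s}{2}^{-1}T$ with $T := \sum_{i<j}\sum_x e_i(x)e_j(x)$, and it suffices to prove $T \ge \tfrac{\epsilon^2}{8}\binom{s}{2}/n$.

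\textbf{Step 2 (lower bounding $T$).} A single summand $\sum_x e_i(x)e_j(x)$ can be $0$ (the supports of $e_i$ and $e_j$ may be disjoint), so I keep the whole double sum but discard the contribution of $A$, which is nonnegative:
\[
T \;\ge\; \sum_{x\in B}\sum_{i<j}e_i(x)e_j(x) \;=\; \tfrac12\Bigl(\sum_{x\in B}\bigl(\sum_i e_i(x)\bigr)^2 \;-\; \sum_{x\in B}\sum_i e_i(x)^2\Bigr),
\]
using $\sum_{i<j}a_ia_j = \tfrac12\bigl((\sum_i a_i)^2 - \sum_i a_i^2\bigr)$ coordinatewise. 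For the first term, note $B\neq\emptyset$ in the soundness case (otherwise $p_i=\UU_n$), so Cauchy--Schwarz over the $|B|\le n$ coordinates of $B$, together with $\sum_{x\in B}\sum_i e_i(x) = \sum_i\tfrac12\|p_i-\UU_n\|_1 \ge s\epsilon/2$, gives $\sum_{x\in B}(\sum_i e_i(x))^2 \ge (s\epsilon/2)^2/|B| \ge s^2\epsilon^2/(4n)$. For the second term, the structural condition forces $e_i(x)\le 1/n$ on $B$, hence $\sum_{x\in B}\sum_i e_i(x)^2 \le \tfrac1n\sum_{x\in B}\sum_i e_i(x) \le \tfrac1n\sum_i\tfrac12\|p_i-\UU_n\|_1 \le s/n$. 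Therefore $T \ge \tfrac12\bigl(s^2\epsilon^2/(4n) - s/n\bigr) \ge s^2\epsilon^2/(16n)$, where the last inequality uses $s\epsilon^2\ge 8$, which holds because $s = c_1\sqrt n/\epsilon^2$ for a sufficiently large constant $c_1$. Plugging back into Step 1, $\E[Z] \ge \tfrac1n + \frac{s^2\epsilon^2/(16n)}{s(s-1)/2} \ge \tfrac1n + \tfrac{\epsilon^2}{8n}$, which is exactly the claim.

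\textbf{Main obstacle.} The delicate point, and the reason a per-pair argument is hopeless, is that one cannot afford to run Cauchy--Schwarz over all of $[n]$: there the lower bound on $\sum_x(\sum_i e_i(x))^2$ is only $\Omega(1/\epsilon^2)$ when $s = \Theta(\sqrt n/\epsilon^2)$, whereas the diagonal correction $\sum_i\|e_i\|_2^2$ can be as large as $\Omega(\sqrt n/\epsilon^2)$ for concentrated $p_i$'s, so the resulting estimate would be vacuous. Restricting to $B$ repairs both ends simultaneously: the $A$-part of the correction is exactly absorbed by nonnegative off-diagonal $A$-terms and can simply be thrown away, while on $B$ the pointwise bound $e_i(x)\le 1/n$ collapses the correction to $\le s/n$, which is negligible against the $B$-only main term. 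Verifying this asymmetry (and the clean cancellation of the linear terms in Step 1) is the only real work; the rest is bookkeeping.
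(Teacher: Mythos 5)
Your proof is correct and follows essentially the same route as the paper's: derive the cross-term identity by exploiting $\sum_A e_i = \sum_B e_i$, restrict to $B$ (discarding the nonnegative $A$-contribution), apply Cauchy--Schwarz over $B$ for the main term, and use the pointwise bound $e_i(x)\le 1/n$ on $B$ to absorb the diagonal correction for $s$ sufficiently large. The only cosmetic difference is that you bound the diagonal term crudely by $s/n$ via $\|p_i-\UU_n\|_1\le 2$ rather than carrying the individual distances $\epsilon_i'$ symbolically as the paper does; both yield the same conclusion.
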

	
	\begin{proof}
		Recall the error terms which we defined in Equation~\eqref{eq:error_terms}: 
		\begin{equation*}
			\begin{split}
				e_i(x) &= p_i(x)-\frac{1}n \quad \quad  \forall x \in A \,,\\
				e_i(x)& = \frac{1}n - p_i(x) \quad \quad  \forall x \in B \,.
			\end{split}
		\end{equation*}
		We start by giving a convenient representation of $\E[\sigma_{ij}]$ in terms of the error terms:
		\begin{equation}\label{eq1}
		\E[\sigma_{ij}]  = \frac{1}n + \sum_{x \in [n]} e_i(x)e_j(x)\,.
		\end{equation}
		To prove the above equation, observe that since the sum of the probabilities in any discrete distribution is one, we have:
		\begin{equation}	\label{eq:error_terms_identity}
		\sum_{x \in A} e_i(x) = \sum_{x \in B} e_i(x) \,
		\end{equation}
		and similar for the $e_j$'s. All of the $e_i(x)$'s and the $e_j(x)$'s  are non-negative for any domain element $x$ by definition and the \textit{structural condition}. Thus, we can obtain:
		\begin{align*}
			\E[\sigma_{ij}] & = \sum_{x \in [n]} p_i(x)p_j(x) \\
			&= \sum_{x \in A}\left(e_i(x) + \frac{1}n \right)\left(e_j(x) + \frac{1}n \right) + \sum_{x \in B} \left(\frac{1}n - e_i(x) \right)\left(\frac{1}n  - e_j(x)\right)	
			\\ & 
			= \frac{|A|}{n^2} + \frac{1}n \sum_{x \in A} e_i(x) + \frac{1}n \sum_{x \in A} e_j(x) + \sum_{x \in A} e_i(x)e_j(x)
			\\ & 
			+ \frac{|B|}{n^2} - \frac{1}n \sum_{x \in B} e_i(x) - \frac{1}n \sum_{x \in B} e_j(x) + \sum_{x \in B} e_i(x)e_j(x)
			\,.
		\end{align*}
		Using Equation~\eqref{eq:error_terms_identity}, it is clear that the sum of two middle terms above are zero:
		
		$$ \frac{1}n \sum_{x \in A} e_i(x) -  \frac{1}n \sum_{x \in B} e_i(x) = 0 
		\,,\quad\mbox{and}\quad\quad\quad\quad
		\frac{1}n \sum_{x \in A} e_j(x) -  \frac{1}n \sum_{x \in B} e_j(x) = 0 \,,$$
		which implies the desired identity we claimed in Equation~\eqref{eq1}:
		\begin{equation*} 
			\E[\sigma_{ij}]= \frac{|A| + |B|}{n^2} + \sum_{x \in [n]} e_i(x)e_j(x) = \frac{1}n +  \sum_{x \in [n]} e_i(x)e_j(x)\,.
		\end{equation*}
		Using this identity for all $\binom{s}2$ pair of samples, yields to the following:
		\begin{equation} \label{eq2}
		\E\left[\binom{s}2 Z\right] = \E\left[\sum_{j < i} \sigma_{ij} \right] = \sum_{j < i} \sum_{x \in [n]} p_i(x)p_j(x) = \dbinom{s}2 \frac{1}n +  \sum_{j < i} \sum_{x \in [n]} e_i(x)e_j(x).
		\end{equation}
		Now, we focus on the second term on the right hand side of the equation above. We can compute that
		\begin{equation}\label{eq:sum_error_terms}
		\sum_{ j < i } \sum_{x \in B} e_i(x)e_j(x)  = \frac{1}2 \left( \underbrace{ \sum_{x \in B} \left( \sum_{i=1}^s e_i(x) \right)^2}_{\mbox{first term}} - \underbrace{\sum_{i=1}^s \sum_{x \in B} e_i(x)^2}_{\mbox{second term}} \right)\,.
		\end{equation} 
		To fine a lower bound $\E[Z]$, we find a lower bound for the first term and an upper bound for the second term in the right hand side above. 
		
		\vspace{2mm}\noindent\textbf{Lower bound for the first term:\quad} Note that if $x \in B$, by definition, $e_i(x)$ is at most $1/n$. On the other hand, $\sum_{x \in B} e_i(x)$ is half of the $\ell_1$-distance between $p_i$ and the uniform distribution. Define $\epsilon'_i$ to be $\|p_i - \UU_n\|_1 = 2\,\sum_{x \in B} e_i(x)$. Clearly, $\epsilon'_i$ is at least $\epsilon$. Then, we have the following lower bound for the size of $B$: 
		$$
		|B| \cdot \frac{1}{n} \geq \sum_{x \in B} e_i(x) = \frac{\epsilon'_i}{2}, \quad \Rightarrow \quad |B| \geq \frac{\epsilon'_i n}{2} \geq \frac{\epsilon n}{2} \,.
		$$
		Therefore, it follows that for any $i$, we have
		$$ \sum_{x \in B} e_i(x)^2 \le \frac{1}{n^2} \cdot \frac{\epsilon'_i n}2 = \frac{\epsilon'_i}{2n}. $$
		Now by the Cauchy-Schwarz inequality, and having $|B| \leq n$, we have:
		$$ \sum_{x \in B} \left( \sum_{i=1}^s e_i(x) \right)^2 \ge \frac{1}{|B|} \left( \sum_{i=1}^s \sum_{x \in B} e_i(x) \right)^2 \ge \frac{ \left( \sum_{i=1}^s \epsilon'_i\right)^2}{4\,n}.$$
		\vspace{2mm}\noindent \textbf{Upper bound for the second term:\quad} On the other hand, for the second term in Equation~\eqref{eq:sum_error_terms}, we obtain:
		\begin{align*}
			\sum_{i=1}^s \sum_{x \in B} e_i(x)^2 & \leq \sum_{i=1}^s \sum_{x \in B} \frac{e_i(x)}{n} \leq \frac 1 {2n} \sum_{i=1}^s \epsilon'_i
		\end{align*}
		where the first inequality holds since the $e_i(x)$'s are at most $1/n$.

		\vspace{2mm}\noindent \textbf{Putting it all together:\quad} Using the two bounds above, we achieve the following lower bound for Equation~\eqref{eq:sum_error_terms}:
		$$ \sum_{ j < i } \sum_{x \in B} e_i(x)e_j(x)  \ge \frac{1}2 \left( \frac{\left(\sum_{i=1}^s \epsilon'_i\right)^2}{4n} - \frac{\sum_{i=1}^s \epsilon'_i}{2n} \right).$$
		Observe that since $s = c_1 \sqrt{n}/\epsilon^2$, for a sufficiently large $c_1$, $s$ is at least $\Theta(1/\epsilon)$. Therefore, we have:
		$$\sum_{i=1}^s \epsilon'_i \geq s \epsilon \geq 4\quad \Rightarrow \quad \frac 1 2\left(\sum_{i=1}^s \epsilon'_i \right)^2 - \sum_{i=1}^s \epsilon'_i \geq \frac{\left(\sum_{i=1}^s \epsilon'_i \right)^2 }{4}\,.$$
		Therefore, we obtain: 
		$$ \sum_{ j < i } \sum_{x \in B} e_i(x)e_j(x)  \ge \frac{1}2 \left( \frac{\left(\sum_{i=1}^s \epsilon'_i\right)^2}{4n} - \frac{\sum_{i=1}^s \epsilon'_i}{2n} \right) \geq \frac{\left(\sum_{i=1}^s \epsilon'_i \right)^2 }{16 n} \ge \frac{s^2 \epsilon^2}{16n}\,.$$ 
		Going back to Equation~\eqref{eq2}, we achieve:
		$$ \E \left[ \binom{s}2 Z \right] \ge \binom{s}2 \frac{1}n + \frac{s^2 \epsilon^2}{16n} \ge \binom{s}2 \frac{(1+\epsilon^2/8)}n\,,$$
		which concludes the proof of the lemma.
	\end{proof}
	
	We now proceed with the proof of Theorem \ref{thm:unif}. In the next step, we show a tight bound for the variance of the our statistic $Z$. We generalize the tight variance analysis given in \cite{DiakonikolasGPP16} for the standard collision based tester in the single source setting to our multiple source setting. We start by a useful identity for the variance: $\Var[Z] = \E[Z^2] - \E[Z]^2$. Note that when we expand $Z^2 = (\sum_{i<j} \sigma_{ij})^2$, we get terms of the form $\sigma_{ij}\sigma_{jk}$. In the single distribution case, this term can be related to the $\ell_3$-norm of $p$. In our setting, we introduce a generalization of the $\ell_3$-norm which is the following:
	\begin{equation} \label{eq:l3norm}
	\E[\sigma_{ij}\sigma_{jk}] = \sum_{x \in [n]} p_i(x)p_j(x)p_k(x) .
	\end{equation}
	To upper bound Equation\@ \eqref{eq:l3norm}, we again make use of the \textit{structural condition} and relate it to our version of the $\ell_2$-norm, Equation\@ \eqref{eq:l2norm}, by using Maclaurin's inequality which roughly states that the $\ell_3$-norm is at most the $\ell_2$-norm. More formally, we have the following lemma and our proof is presented in Section \ref{sec:proof_of_lemVarUB}.

	\begin{restatable}{lemma}{lemVarUB} \label{lem:var_ub}
		Let $\{p_i\}_{i=1}^s$ be distributions over $[n]$ that are all $\epsilon$-far from $\UU$ in $\ell_1$-distance and satisfy the \textit{structural condition} given in Definition \ref{def:structural} with $q = \UU_n$. Let $X_i$ be drawn independently from $p_i$ for all $1 \le i \le s$. Let $\sigma_{ij}$ be the indicator variable for the event $X_i = X_j$. Then the following estimate holds
		$$ \Var\left( \sum_{i < j} \sigma_{ij} \right) \le  \frac{18 \alpha s}{n^2}\binom{s}2 + 3 \left(\frac{\alpha}{n} \binom{s}2\right)^{3/2} + \sum_{i < j} \E[\sigma_{ij}]$$
		where $\alpha$ is defined to be the solution to $\E[Z] = (1 + \alpha)/n$, and it is at least $\epsilon^2/8$.
	\end{restatable}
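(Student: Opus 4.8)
The plan is to run the classical second-moment computation for the collision count while carrying through the error-term bookkeeping from Lemma~\ref{lem:ev_lb}, and to close the estimate with Maclaurin's inequality. Set $W=\sum_{i<j}\sigma_{ij}$ and expand $\Var(W)=\E[W^2]-\E[W]^2$ by grouping the products $\sigma_{ij}\sigma_{kl}$ according to $|\{i,j\}\cap\{k,l\}|$. Pairs with disjoint index sets contribute $\E[\sigma_{ij}]\E[\sigma_{kl}]$ by independence and cancel the corresponding part of $\E[W]^2$; pairs with $\{i,j\}=\{k,l\}$ contribute $\sum_{i<j}\E[\sigma_{ij}]-\sum_{i<j}\E[\sigma_{ij}]^2\le\sum_{i<j}\E[\sigma_{ij}]$ (since $\sigma_{ij}^2=\sigma_{ij}$, discarding the nonnegative square sum); and the pairs sharing exactly one index leave
\[
\Var(W)\ \le\ \sum_{i<j}\E[\sigma_{ij}]\ +\ \underbrace{\sum_{m}\ \sum_{a\ne m}\ \sum_{\substack{b\ne m\\ b\ne a}}\bigl(\E[\sigma_{ma}\sigma_{mb}]-\E[\sigma_{ma}]\E[\sigma_{mb}]\bigr)}_{=:\,T}.
\]
Everything reduces to bounding $T$.

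For $T$ I would use the nonnegative error terms $e_i(\cdot)$ from~\eqref{eq:error_terms} and set $c_{ij}:=\sum_x e_i(x)e_j(x)$; by Equation~\eqref{eq1} this equals $\E[\sigma_{ij}]-1/n\ge 0$, so $\sum_{i<j}c_{ij}=\tfrac{\alpha}{n}\binom{s}{2}=:C$ by the definition of $\alpha$ (which also makes ``$\alpha\ge\epsilon^2/8$'' a restatement of Lemma~\ref{lem:ev_lb}). Expanding $\E[\sigma_{ma}\sigma_{mb}]=\sum_x p_m(x)p_a(x)p_b(x)$ from Equation~\eqref{eq:l3norm} via $p_i(x)=\tfrac1n\pm e_i(x)$ and using $\sum_x\bigl(p_i(x)-\tfrac1n\bigr)=0$ gives
\[
\E[\sigma_{ma}\sigma_{mb}]-\E[\sigma_{ma}]\E[\sigma_{mb}]\ =\ \frac{c_{ab}}{n}+\Bigl(\sum_{x\in A}e_m(x)e_a(x)e_b(x)-\sum_{x\in B}e_m(x)e_a(x)e_b(x)\Bigr)-c_{ma}c_{mb}.
\]
Drop $-c_{ma}c_{mb}\le 0$. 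Summing $c_{ab}/n$ over admissible $(m,a,b)$ telescopes: for fixed $m$, $\sum_{a\ne m}\sum_{b\ne m,\,b\ne a}c_{ab}=2\bigl(C-\sum_{b\ne m}c_{mb}\bigr)$, and summing over $m$ gives $2(s-2)C$, so this summand contributes at most $\tfrac{2sC}{n}=\tfrac{2\alpha s}{n^2}\binom{s}{2}$, absorbed into the stated $\tfrac{18\alpha s}{n^2}\binom{s}{2}$. The point to respect is to keep $a\ne b$ throughout so that only cross terms $c_{ab}$ ever appear: the diagonal terms $c_{aa}=\|p_a-\UU_n\|_2^2$ are not controlled by the structural condition and must never be introduced.

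The main obstacle is the ``generalized $\ell_3$'' contribution $T_2:=\sum_{m}\sum_{a\ne m}\sum_{b\ne m,\,b\ne a}\bigl(\sum_{x\in A}-\sum_{x\in B}\bigr)e_m(x)e_a(x)e_b(x)$. As this is symmetric in the three indices it equals $6\sum_{i<j<k}\bigl(\sum_{x\in A}-\sum_{x\in B}\bigr)e_i(x)e_j(x)e_k(x)$; since every $e_i(x)\ge 0$, dropping the nonnegative $B$-sum gives $T_2\le 6\sum_{x\in A}\tilde e_3(x)$ with $\tilde e_3(x):=\sum_{i<j<k}e_i(x)e_j(x)e_k(x)$. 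Now apply Maclaurin's inequality to the nonnegative reals $e_1(x),\dots,e_s(x)$ for each fixed $x\in A$: with $\tilde e_2(x):=\sum_{i<j}e_i(x)e_j(x)$ it yields $\tilde e_3(x)\le\binom{s}{3}\bigl(\tilde e_2(x)/\binom{s}{2}\bigr)^{3/2}$. Summing over $x\in A$, using superadditivity of $t\mapsto t^{3/2}$ and then $\sum_{x\in A}\tilde e_2(x)\le\sum_{i<j}c_{ij}=C$, gives $T_2\le 6\binom{s}{3}(\alpha/n)^{3/2}\le 3\bigl(\tfrac{\alpha}{n}\binom{s}{2}\bigr)^{3/2}$, the last step being the elementary $2\binom{s}{3}\le\binom{s}{2}^{3/2}$ for $s\ge 3$ (for $s\le 2$ there are no one-index-sharing pairs and $\Var(W)\le\sum_{i<j}\E[\sigma_{ij}]$ trivially). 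Adding the three pieces gives $\Var(W)\le\sum_{i<j}\E[\sigma_{ij}]+\tfrac{2\alpha s}{n^2}\binom{s}{2}+3\bigl(\tfrac{\alpha}{n}\binom{s}{2}\bigr)^{3/2}$, which implies the claim. I expect the genuinely delicate points to be, first, isolating the one-index-sharing terms and justifying both the disjoint-term cancellation and the diagonal avoidance, and second, the pointwise Maclaurin estimate and its passage to a global bound via superadditivity --- which is precisely where the structural condition (all $e_i(x)\ge 0$, with the $A/B$ sign split that kills the $B$-contribution) is indispensable.
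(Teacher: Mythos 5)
Your proposal is correct, and the overall plan — second-moment decomposition, cancellation of the disjoint-index covariances, error-term bookkeeping through $c_{ij}=\sum_x e_i(x)e_j(x)$, and a Maclaurin bound on the generalized third moment — is the same as the paper's. The one place where you genuinely diverge is in how Maclaurin's inequality is used. The paper states a ``Strengthened Maclaurin's Inequality'' (Lemma~\ref{lem:macuse}) and proves it by induction on the domain size $n$, with the single-coordinate Maclaurin inequality as the base case and an AM--GM argument in the inductive step. You instead apply Maclaurin pointwise for each fixed $x\in A$ to get $\tilde e_3(x)\le\binom{s}{3}\binom{s}{2}^{-3/2}\tilde e_2(x)^{3/2}$, and then pass to the sum over $x$ via superadditivity of $t\mapsto t^{3/2}$ (together with $2\binom{s}{3}\le\binom{s}{2}^{3/2}$ for $s\ge3$); this proves exactly the inequality of Lemma~\ref{lem:macuse} in one stroke, without the induction. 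Your route is cleaner and arguably more transparent about where the nonnegativity of the $e_i(x)$'s (i.e.\ the structural condition) is being used. You also keep only the $A$-part of the cubic sum, whereas the paper bounds by the full sum over $[n]$; both are valid since the $B$-sum enters with a minus sign. The remaining differences are constant-factor slack: your telescoping of the $c_{ab}/n$ terms yields $\tfrac{2\alpha s}{n^2}\binom{s}{2}$ rather than the paper's looser $\tfrac{18\alpha s}{n^2}\binom{s}{2}$, and your explicit dropping of $-c_{ma}c_{mb}\le 0$ corresponds to the paper's use of $\E[\sigma_{ij}]\ge1/n$; neither affects the stated bound.
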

	We can now prove the correctness of the algorithm by bounding the probability that $Z$ is below the threshold $\tau$. Recall that $\E[Z] = (1+\alpha)/n \geq (1+\epsilon^2/8)/n$ from Lemma~\ref{lem:ev_lb}. Therefore, $\alpha \geq \epsilon^2/8$. By Chebyshev's inequality, we have
	\begin{align*}
		\Pr\left[Z < \tau \right] & = \Pr\left[ \E[Z] - Z \geq \E[Z] - \tau \right] \leq  \Pr\left[ \left| \E[Z] - Z \right| \geq \E[Z] - \tau \right]
		\\ & 
		\leq \Pr\left[ \left| \E[Z] - Z \right| \geq  \frac{\alpha - \epsilon^2/16}{n}\right] 
		\leq \Var[Z] \cdot \left(\frac{n}{\alpha - \epsilon^2/16}\right)^2
		\\ & 
		\leq \Var\left[\sum_{i<j} \sigma_{ij}\right] \cdot \left(\frac{n}{\binom{s}2(\alpha - \epsilon^2/16)}\right)^2
		\,.
	\end{align*}
	Now, Lemma \ref{lem:var_ub} gives us 
	$$ \Var\left( \sum_{i < j} \sigma_{ij} \right) \le  \underbrace{\frac{18 \alpha s}{n^2}\binom{s}2 + 3 \left(\frac{\alpha}{n} \binom{s}2\right)^{3/2}}_{T_1} + \underbrace{\sum_{i < j} \E[\sigma_{ij}]}_{T_2}.$$
	We use $T_1$ and $T_2$ to indicate the two terms in the upper bound above. In either of the cases $T_1 \leq T_2$ or $T_1 > T_2$, we show the error probability is bounded by $1/3$. 
	\vspace{2mm}\noindent \textbf{Case 1: $\pmb{T_1 \leq T_2}$.\quad} In this case, we bound the variance of the number of collisions by $2\,T_2$. We have:
	\begin{align*}
		\Pr\left[Z < \tau \right] & \leq \Var\left[\sum_{i<j} \sigma_{ij}\right] \cdot \left(\frac{n}{\binom{s}2(\alpha - \epsilon^2/16)}\right)^2
		\leq 2 \,T_2 \cdot \left(\frac{n}{\binom{s}2(\alpha - \epsilon^2/16)}\right)^2
		\\ &
		\leq 2 \sum_{i<j} \E[\sigma_{ij}] \cdot \left(\frac{n}{\binom{s}2(\alpha - \epsilon^2/16)}\right)^2	 
		\leq \Theta\left(\frac{s^2 (1 + \alpha)}{n} \cdot \frac{n^2}{s^4 (\alpha - \epsilon^2/16)^2}\right)
		\\ & 
		\leq \Theta\left(\frac{n}{s^2} \cdot \underbrace{\frac{1 + \alpha}{(\alpha - \epsilon^2/16)^2}}_{f(\alpha)} \right) \,.
	\end{align*}
	Define $f(\alpha) := (1 + \alpha)/(\alpha - \epsilon^2/16)^2$. We can compute that $f$ is a decreasing function over the range $[\epsilon^2/8, \infty)$, so we can bound $f(\alpha)$ by $f(\epsilon^2/8) = \Theta(1/\epsilon^2)$ from above. Thus, we bound the probability of $Z < \tau$ as
	\begin{align*}
		\Pr\left[Z < \tau \right] & \leq \Theta\left(\frac{n}{s^2 \epsilon^2} \right) \leq \frac 1 3\,,
	\end{align*}
	where the last inequality holds for a sufficiently large constant $c_1$ and having $s = c_1 \sqrt{n}/\epsilon^2$.
	\vspace{2mm}\noindent \textbf{Case 2: $\pmb{T_1 > T_2}$.\quad} In this case, we bound the variance of $Z$ by $2 \,A$. Note that we know $\alpha \geq \epsilon^2/8$, so we have:
	\begingroup
	\allowdisplaybreaks
	\begin{align*}
		\Pr\left[Z < \tau \right] & \leq \Var\left[\sum_{i<j} \sigma_{ij}\right] \cdot \left(\frac{n}{\binom{s}2(\alpha - \epsilon^2/16)}\right)^2
		\leq 2 \,T_2 \cdot \left(\frac{n}{\binom{s}2(\alpha - \epsilon^2/16)}\right)^2
		\\ 
		& \leq 2 \left(\frac{18 \alpha s}{n^2}\binom{s}2 + 3 \left(\frac{\alpha}{n} \binom{s}2\right)^{3/2} \right)
		\cdot \left(\frac{n}{\binom{s}2(\alpha - \epsilon^2/16)}\right)^2
		\\ & 
		\leq \Theta\left( \left(\frac{\alpha s^3}{n^2} + \frac{\alpha^{3/2} s^3}{n^{3/2}}\right) \cdot \frac{n^2}{s^4 \alpha^2}\right)
		\leq \Theta \left(\frac{1}{s \, \alpha} + \frac{\sqrt n}{s\,\sqrt \alpha}\right).
	\end{align*}
	\endgroup
	The number of samples, $s$ is chosen to be
	
	$$s  = c_1 \cdot \frac{\sqrt n}{\epsilon^2} \geq \Theta \left(\frac{1}{\epsilon^2} + \frac{\sqrt n}{\epsilon}\right)
	\geq \Theta\left( \frac{1}{\alpha} + \frac{\sqrt{n}}{\sqrt{\alpha}}\right) \,,$$
	and therefore, by picking a sufficiently large constant $c_1$, we can bound the probability of outputting the incorrect answer in the soundness case by $1/3$.
\end{proof}

\subsection{Proof of Lemma~\ref{lem:var_ub}}

\label{sec:proof_of_lemVarUB}

\lemVarUB*
\begin{proof}
	For simplicity, let $W$ denote $\sum_{i<j} \sigma_{ij}$. We bound the variance of $W$ from above in the following steps.
	\begin{align*}
		& \Var[W]  = \E[W^2] - \E[W]^2 = \E\left[\left(\sum_{i<j} \sigma_{ij}\right)^2\right] - \left(\sum_{i<j} \E[\sigma_{ij}]\right)^2
		\\ & = 
		\E\left[ \sum_{\substack{i < j, k < \ell \\ \text{ all distinct}}}
		\sigma_{ij} \sigma_{k \ell} + 2 \sum_{i < j < \ell} \left( \sigma_{ij} \sigma_{ik} + \sigma_{ij} \sigma_{jk} + \sigma_{ik} \sigma_{jk}\right) + \sum_{i < j} \sigma_{ij}^2 \right]
		\\ & - 
		\sum_{\substack{i < j, k < \ell \\ \text{ all distinct}}}
		\E[\sigma_{ij}]\,\E[\sigma_{k \ell}] - 2 \sum_{i < j < \ell} \left( \E[\sigma_{ij}]\,\E[\sigma_{ik}] + \E[\sigma_{ij}]\,\E[\sigma_{jk}] + \E[\sigma_{ik}]\,\E[\sigma_{jk}]\right) \\
		&- \sum_{i < j}\E[\sigma_{ij}]^2 
		\,.
	\end{align*}
	Note that if $i, j, k, $ and $\ell$ are all distinct, then $\sigma_{ij}$ is independent from $\sigma_{k\ell}$. Thus, we have: 
	$$\E[\sigma_{ij} \sigma_{k\ell}] = \E[\sigma_{ij}]\,\E[\sigma_{k \ell}]\,.$$
	Moreover, we know that $\E[\sigma_{ij}] \ge 1/n$ for all $i<j$ from Lemma~\ref{lem:ev_lb}. Having $\sigma_{ij}^2 = \sigma_{ij}$, we continue bounding the variance as follows:
	\begin{align*}
		\Var[W] & = 
		2 \sum_{i < j < \ell} \left( \E[\sigma_{ij} \sigma_{ik}] + \E[\sigma_{ij} \sigma_{jk}] + \E[\sigma_{ik} \sigma_{jk}]\right) \\
		&+ \E\left[\sum_{i < j} \sigma_{ij} \right]
		- \binom{s}3\frac{6}{n^2} - \sum_{i < j}\E[\sigma_{ij}]^2 
		\,.
	\end{align*}
	For now, we focus on the first sum in the right hand side above. We bound this term via the error terms we defined in Equation~\eqref{eq:error_terms}. We note that
	\begin{align*}
		&\E[\sigma_{ij} \sigma_{ik}]  = \E[\sigma_{ij} \sigma_{jk}] = \E[\sigma_{ik} \sigma_{jk}] =  \sum_{x \in [n]} p_i(x)p_j(x)p_{k}(x) \\
		&=\sum_{x \in A} \left( \frac{1}n + e_i(x) \right)\left( \frac{1}n + e_j(x) \right)\left( \frac{1}n + e_{k}(x) \right) \\
		&+ \sum_{x \in B} \left( \frac{1}n - e_i(x) \right)\left( \frac{1}n - e_j(x) \right)\left( \frac{1}n - e_{k}(x) \right)\\ 
		&\le \frac{1}{n^2} \left(\underbrace{\sum_{x \in A} e_i(x) - \sum_{x \in B} e_i(x)}_{ = 0} + \underbrace{\sum_{x \in A} e_j(x) - \sum_{x \in B} e_j(x)}_{ = 0} + \underbrace{\sum_{x \in A} e_k(x) - \sum_{x \in B} e_k (x)}_{ = 0} \right)
		\\ & + \frac{1}{n}\left(\sum_{x \in [n]} e_i(x) e_j(x) + e_i(x) e_k(x) + e_j(x) e_k(x) \right) + \sum_{x \in [n]} e_i(x) e_j(x) e_{k}(x) +  \sum_{x\in[n]}\frac{1}{n^3}
	\end{align*}
	where the last inequality holds since all the $e_i(x)$'s are non-negative. Therefore, we can continue bounding the variance as follows:
	\begin{align*}
		\Var[W] & \leq \binom{s}3 \frac {6}{n^2} +  \frac{18 \, s}{n} \sum_{i<j}\sum_{x \in [n]} e_i(x)e_j(x) \\
		&+ 6 \sum_{i<j<k} \sum_{x \in [n]} e_i(x) e_j(x) e_{k}(x) + \E[W] - \binom{s}3\frac{6}{n^2} 
		\,.
	\end{align*}
	
	To bound the above terms, we use Maclaurin's inequality proved in~\cite{maclaurin}. 
	\begin{lemma}[Maclaurin's inequality] 
		\label{lem:maclaurin}
		Let $\{a_i\}_{i=1}^s$ be non-negative real numbers. Define
		$$ S_k = \frac{ 1}{\binom{s}{k}} \sum_{1 \le i_1 < \cdots < i_k \le s} a_{i_1}a_{i_2} \cdots a_{i_k} .$$
		Then,
		$$ S_1 \ge \sqrt{S_2} \ge \sqrt[3]{S_3} \ge \cdots \ge \sqrt[s]{S_s}.$$
	\end{lemma}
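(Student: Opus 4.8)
The plan is to derive the entire chain from \emph{Newton's inequalities}. Write $e_k=e_k(a_1,\dots,a_s)$ for the $k$-th elementary symmetric polynomial, so that $S_k=e_k/\binom{s}{k}$ and $S_0=1$; the lemma is then equivalent to the conjunction of $(\mathrm i)$ the Newton inequalities $S_{k-1}S_{k+1}\le S_k^2$ for every $1\le k\le s-1$, and $(\mathrm{ii})$ the implication that $(\mathrm i)$ together with $S_0=1$ forces $S_1\ge S_2^{1/2}\ge\cdots\ge S_s^{1/s}$. Each of $(\mathrm i)$ and $(\mathrm{ii})$ is a system of polynomial inequalities in $(a_1,\dots,a_s)$, so by continuity it would suffice to prove them on the dense set where every $a_i>0$; I would assume this throughout, so that each $S_k>0$ and all ratios below are well defined.

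\textbf{Step $(\mathrm i)$, Newton's inequalities.} I would use the classical real-rootedness argument. Consider $P(x):=\prod_{i=1}^s(x+a_i)=\sum_{k=0}^s\binom{s}{k}S_k\,x^{s-k}$, all of whose roots (namely $-a_i$) are real. Two operations send a polynomial with only real roots to one with only real roots: differentiation, by Rolle's theorem, and the reversal $Q(x)\mapsto x^{\deg Q}Q(1/x)$. Fixing $k$ with $1\le k\le s-1$, I would apply to $P$, in order, a reversal, then $k-1$ differentiations, then another reversal, then $s-k-1$ differentiations; the outcome is a quadratic with real roots, and a direct bookkeeping of the resulting binomial/factorial factors shows it equals, up to one common positive constant, $S_{k-1}\,x^2+2S_k\,x+S_{k+1}$. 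Non-negativity of its discriminant $(2S_k)^2-4S_{k-1}S_{k+1}$ then gives exactly $S_k^2\ge S_{k-1}S_{k+1}$.

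\textbf{Step $(\mathrm{ii})$, from Newton to Maclaurin.} Put $r_k:=S_k/S_{k-1}$ for $1\le k\le s$. The inequality $S_{k-1}S_{k+1}\le S_k^2$ says precisely $r_{k+1}\le r_k$, so $r_1\ge r_2\ge\cdots\ge r_s$. Because $S_0=1$ we have $S_k=r_1r_2\cdots r_k$, so $S_k^{1/k}$ is the geometric mean of $r_1,\dots,r_k$. Since $r_{k+1}\le r_j$ for every $j\le k$ we get $r_{k+1}^k\le r_1\cdots r_k$, i.e.\ $r_{k+1}\le S_k^{1/k}$, and therefore $S_{k+1}^{1/(k+1)}=(S_k\,r_{k+1})^{1/(k+1)}\le\big(S_k\cdot S_k^{1/k}\big)^{1/(k+1)}=S_k^{1/k}$. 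Ranging $k$ over $1,\dots,s-1$ yields the claimed chain, and the continuity remark above removes the assumption $a_i>0$.

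\textbf{The main obstacle.} The only genuinely delicate part is the coefficient bookkeeping in Step $(\mathrm i)$: checking that the alternating reverse/differentiate procedure really isolates the three consecutive means $S_{k-1},S_k,S_{k+1}$ with relative weights $1,2,1$, so that the discriminant condition collapses to $S_k^2\ge S_{k-1}S_{k+1}$ with no stray constants. Everything else — Rolle's theorem, the monotonicity of the geometric mean of a non-increasing sequence, and the continuity reduction to $a_i>0$ — is routine. If one wishes to avoid this computation altogether, the statement can simply be quoted from \cite{maclaurin}.
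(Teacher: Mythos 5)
The paper does not actually prove this lemma: Lemma~\ref{lem:maclaurin} is stated and immediately attributed to the reference \cite{maclaurin}, with no argument given, and the authors move directly on to the strengthened version they do prove. So there is no in-paper proof to compare against. On its own merits, your proposal is correct and is the standard textbook proof.

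For step $(\mathrm i)$, the bookkeeping you flag does go through. Writing $P(x)=\sum_{j=0}^s\binom{s}{j}S_j\,x^{s-j}$ and applying, in order, reversal, $k-1$ differentiations, reversal, and $s-k-1$ differentiations, the surviving coefficients correspond to $j\in\{k-1,k,k+1\}$, and each falling-factorial factor recombines with $\binom{s}{j}$ to give $s!$; the result is exactly $\frac{s!}{2}\bigl(S_{k-1}x^2+2S_kx+S_{k+1}\bigr)$, so the discriminant condition yields $S_k^2\ge S_{k-1}S_{k+1}$ with no stray constants. Your preliminary reduction to $a_i>0$ is also the right move, as it rules out a root at zero and hence degree drops under reversal, keeping the final polynomial an honest quadratic; and since each assertion $S_k^{k+1}\ge S_{k+1}^k$ is a polynomial inequality, the continuity closure argument is valid. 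Step $(\mathrm{ii})$, passing from the Newton chain $r_1\ge r_2\ge\cdots\ge r_s$ (with $r_k=S_k/S_{k-1}$) to the monotonicity of the geometric means, is also correct. As you yourself observe, since the paper treats this as a citable classical fact, supplying this proof is optional; but the proof you give is sound.
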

	For our purposes, we prove a strengthening of Maclaurin's inequality which is given in the following lemma:
	\begin{lemma}(Strengthened Maclaurin's Inequality) \label{lem:macuse}
		$$ \left( 2\sum_{\substack{ i < j < k }} \sum_{x \in [n]} e_i(x)e_j(x)e_{k}(x) \right)^2 \le \left( \sum_{i < j} \sum_{x \in [n]} e_i(x)e_j(x) \right)^3. $$ 
	\end{lemma}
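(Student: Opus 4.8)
The plan is to prove the inequality coordinate by coordinate and then glue the coordinates together using the estimate $\ell_{3/2}\le\ell_1$. Fix $x\in[n]$ and set $a(x)=\sum_{i<j}e_i(x)e_j(x)$ and $b(x)=\sum_{i<j<k}e_i(x)e_j(x)e_k(x)$, so that the right-hand side of the lemma equals $\big(\sum_{x}a(x)\big)^3$ and the left-hand side equals $\big(2\sum_x b(x)\big)^2$; hence it suffices to show $2\sum_x b(x)\le\big(\sum_x a(x)\big)^{3/2}$.

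First I would apply the classical Maclaurin inequality (Lemma~\ref{lem:maclaurin}) to the nonnegative numbers $e_1(x),\dots,e_s(x)$. Writing $S_2(x)=a(x)/\binom{s}{2}$ and $S_3(x)=b(x)/\binom{s}{3}$, the portion $\sqrt{S_2(x)}\ge S_3(x)^{1/3}$ of the Maclaurin chain gives $S_2(x)^3\ge S_3(x)^2$, that is,
\[ b(x)^2 \ \le\ \frac{\binom{s}{3}^2}{\binom{s}{2}^3}\,a(x)^3 . \]
A direct computation gives $\binom{s}{3}^2/\binom{s}{2}^3 = 2(s-2)^2/\big(9s(s-1)\big)$, and since $8(s-2)^2\le 9s(s-1)$ for every $s\ge2$, this ratio is at most $1/4$; hence $\big(2b(x)\big)^2\le a(x)^3$, i.e. $2b(x)\le a(x)^{3/2}$ for each $x$. (For $s\le2$ there are no triples $i<j<k$, so the left-hand side of the lemma is $0$ and there is nothing to prove.)

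Next I would sum over $x\in[n]$ and invoke $\sum_x a(x)^{3/2}\le\big(\sum_x a(x)\big)^{3/2}$, which holds for nonnegative $a(x)$ because the exponent exceeds $1$ (after normalizing $\sum_x a(x)=1$, each $a(x)\le1$, so $a(x)^{3/2}\le a(x)$). Combining, $2\sum_x b(x)\le\sum_x a(x)^{3/2}\le\big(\sum_x a(x)\big)^{3/2}$, and squaring the two nonnegative sides yields $\big(2\sum_{i<j<k}\sum_x e_i(x)e_j(x)e_k(x)\big)^2\le\big(\sum_{i<j}\sum_x e_i(x)e_j(x)\big)^3$, as claimed.

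I do not anticipate a genuine obstacle: the whole argument is the per-coordinate Maclaurin bound together with $\ell_{3/2}\le\ell_1$. The only points needing care are the bookkeeping with the binomial coefficients — precisely where the factor of $2$ (equivalently, the constant $1/4$) comes from, which is the source of the \emph{strengthening} over a naive application of Maclaurin directly to the doubly-indexed family — and the degenerate small-$s$ cases.
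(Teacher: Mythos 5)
Your proof is correct, and it rests on the same two ingredients as the paper's: the per-coordinate Maclaurin bound $\big(2b(x)\big)^2\le a(x)^3$ (with the same binomial-coefficient bookkeeping, $4\binom{s}{3}^2\le\binom{s}{2}^3$), followed by a superadditivity argument for $t\mapsto t^{3/2}$ to glue the coordinates. The only real difference is how the gluing is packaged. The paper runs an explicit induction on $n$: with $F'^3\ge F^2$ for the first $n-1$ coordinates and $G'^3\ge G^2$ for the $n$-th, it expands $(F'+G')^3$ and controls the cross terms $3F'^2G'+3F'G'^2$ via AM--GM to conclude $(F'+G')^3\ge(F+G)^2$. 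You instead invoke the $n$-term form of the same fact in one stroke, namely $\sum_x a(x)^{3/2}\le\big(\sum_x a(x)\big)^{3/2}$ (i.e.\ $\|\cdot\|_{3/2}\le\|\cdot\|_1$), proved by normalizing $\sum_x a(x)=1$ and using $t^{3/2}\le t$ on $[0,1]$. This avoids the induction and is arguably cleaner, but the structure of the argument is identical; neither route buys any additional generality or strength over the other.
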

	\begin{proof}
		We prove this by induction on $n$. Consider the case $n=1$.
		By Lemma \ref{lem:maclaurin}, we have: 
		$$ \left( \sum_{i < j} e_i(1)e_j(1) \right)^3  \ge \frac{ \binom{s}2^3}{\binom{s}3^2} \left(  \sum_{i < j < k} e_i(1)e_j(1)e_{k}(1) \right)^2.$$
		Now for $s \geq 3$, we have:  $\binom{s}2^3/\binom{s}3^2 > 4$  which proves our claim. We now proceed by induction. Suppose that the induction hypothesis is true for $n-1$. We know by the induction hypothesis that the following two inequalities hold:
		\begin{align*}
			\left(  \underbrace{  2\sum_{\substack{ i < j < k }} \sum_{x \in [n-1]} e_i(x)e_j(x)e_{k}(x) }_{F}\right)^2 &\le \left( \underbrace{\sum_{i < j} \sum_{x \in [n-1]} e_i(x)e_j(x)}_{F'} \right)^3 \\
			\left( \underbrace{ 2\sum_{\substack{ i < j < k }}  e_i(n)e_j(n)e_{k}(n) }_{G} \right)^2 &\le  \left( \underbrace{\sum_{i < j} e_i(n)e_j(n)}_{G'} \right)^3 
		\end{align*}
		where the first inequality is the induction hypothesis and the second inequality is just the base case of the induction which was proved earlier. Let $F, F', G,$ and $G'$ denote the terms as indicated above. The above inequalities after substituting new variables become: $F'^3 \ge F^2$ and $G'^3 \ge G^2$. Since all of these terms are positive, we have:
		$$(F'^3G'^3)^{1/2} \ge FG $$
		Then, by the arithmetic mean-geometric mean inequality, we have
		\begin{align*}
			3 F'^2 G' + 3F'G'^2 \geq 6 (F'G')^{3/2} \geq 6 FG \geq 2 FG.
		\end{align*}
		Using the fact that $F'^3 \ge F^2$ and $G'^3 \ge G^2$ again, yields
		$$ (F'+G')^3 \ge (F+G)^2\,,$$
		which concludes the lemma.
	\end{proof}
	We now proceed to bound the variance of $W$. We know
	\begin{align*}
		\Var[W] & \leq  \frac{18 \, s}{n} \sum_{i<j}\sum_{x \in [n]} e_i(x)e_j(x) + 6 \sum_{i<j<k} \sum_{x \in [n]} e_i(x) e_j(x) e_{k}(x) + \E[W] 
		\,.
	\end{align*}
	In the next step, we bound the two middle term based on $n$, $s$, and $\alpha$.
	Using Lemma \ref{lem:macuse}, we have
	$$\sum_{i < j < k} \sum_{x \in [n]} e_i(x) e_j(x) e_{k}(x) \le \frac{1}2 \left( \sum_{i < j} \sum_{x \in [n]} e_i(x)e_j(x) \right)^{3/2}\,.$$
	Recall that $\E[Z] = (1+\alpha)/n$. Thus, using Equation~\eqref{eq1}, we know
	$$\binom{s}2 \frac{1 + \alpha }{n} = \E[W] = \sum_{i<j} \sum_{x \in [n]} p_i(x)p_j(x) = \sum_{i<j} \sum_{x \in [n]} \left( e_i(x)e_j(x) + \frac{1}n\right) \,, $$
	which immediately implies that
	$$\sum_{i < j} \sum_{x \in [n]} e_i(x)e_j(x) = \binom{s}2\frac{\alpha}{n}\,.$$
	Putting all of it together, we obtain
	\begin{align*}
		\Var[W] & \leq  \frac{18 \, \alpha \, s }{n^2} \binom{s}2 + 3\left(\binom{s}2\frac{\alpha}{n}\right)^{3/2} +  \sum_{i < j} \E[\sigma_{ij}]
		\,,
	\end{align*}
	as desired.
\end{proof}

\section{Identity Testing with Multiple Sources} \label{sec:id_testing}
In this section, we present our algorithm for identity testing with multiple sources and its analysis. Recall that our goal is to distinguish the following two cases with probability at least $2/3$ given knowledge of some fixed distribution $q$ over $[n]$:
\begin{itemize}
	\item {\textbf{Completeness case:}} $p_1, p_2, \cdots, p_{s}$ are identical to $q$
	\item{\textbf{Soundness case:}} $p_1, p_2, \cdots, p_{s}$ are all $\epsilon$-far from $q$ in $\ell_1$-distance
\end{itemize}
where we receive samples from $\{p_i\}_{i=1}^s$. In the soundness case, we also assume that the $p_i$'s satisfy the  \textit{structural condition} given in Definition~\ref{def:structural}: we assume there are disjoint sets $A$ and $B$ that partition $[n]$ such that all $p_i$'s are larger than $q$ on the indices in $A$, and all the $p_i$'s are smaller than $q$ on the indices in $B$. Note that \textit{structural condition} trivially holds in the completeness case. 

\subsection{Algorithm for Identity Testing} \label{sec:id_alg}
We now present our algorithm, \textsc{Identity-Tester}, for identity testing with multiple sources. Suppose we receive $\text{Poi}(1)$ samples from each of the distributions $p_i$. This is a generalization of the standard technique in distribution testing which significantly simplifies the analysis of our algorithm by making certain random variables independent, as we explain later. Furthermore, as $\text{Poi}(s)$ is tightly concentrated around $s$, we can carry out this poissonization method at the expense of only constant factor increases in the sample complexity. Moreover, while we draw one sample in expectation per source, with probability 0.9, we will not receive more than $O(\log s)$ samples per distribution.

Our algorithm calculates a new $\chi$-square type statistics inspired by the previous $\chi^2$-type statistics~\cite{ValiantV14, ADK15, ChanDVV14, DiakonikolasK:2016}). The statistic is designed so that its expected value is related to the `$\ell_2$-norm' of the difference of the distributions, as explained in Section~\ref{sec:contributions}.  Similarly to uniformity testing, our algorithm in this section also proceeds by taking samples  and calculating our statistic. Then, it compares the value of this statistic to a threshold $\tau$. If the value of the statistic is `large', the algorithm outputs \reject and aborts, and outputs \accept otherwise. Ultimately, we prove that the sample complexity of our generalized identity tester depends on the $\ell_2$-norm of $q$, the known distribution. We give a flattening procedure in Section~\ref{sec:flattening_id} which allows us to assume that the $\ell_2$-norm of the known distribution is $O(1/\sqrt{n})$, resulting in the optimal sample complexity. We present our algorithm below along with the main theorem, Theorem~\ref{thm:id}, which proves the correctness of our algorithm. 

\begin{algorithm}[!ht]
	\SetKwInOut{Input}{Input}
	\SetKwInOut{Output}{Output}
	\Input{$n, \epsilon, q$, $\text{Poi}(1)$ samples from each of $p_1, p_2, \cdots, p_s$}
	\Output{\accept or \reject}
	\DontPrintSemicolon
	$s \gets \frac{c_1n \|q\|_2}{\epsilon^2}$ \;
	Draw $\text{Poi}(1)$ samples from each of the $s$ distributions $\{p_j\}_{j=1}^s$.\;
	$T_x \gets$ $\#$ times we see element $x \in [n]$ among the samples. \;
	$\tau \gets  \frac{5s^2 \epsilon^2}{8n}$ \;
	$Z \gets \sum_{x \in [n]} (T_x-sq(x))^2-T_x$\;
	\If{$Z \ge \tau$ }{Output \reject and abort.}
	Output \accept
	\caption{$\textsc{Identity-Tester}$}
	\label{alg:identity-tester}
\end{algorithm}

\begin{theorem}[Correctness of \textsc{Identity-Tester}]\label{thm:id}
	There exist a constant $c_1$ independent of $n$ such that the following statements hold with probability $2/3$:
	\begin{itemize}
		\item \textsc{Identity-Tester} outputs \accept if  each of the $s$ distributions $p_1, \cdots, p_s$ are equal to $q$.
		\item \textsc{Identity-Tester} outputs \reject if the $p_i$'s are $\epsilon$-far from $q$ ($\|p_i - q \|_1 \ge \epsilon$) and $\{p_i\}_{i=1}^s$ satisfy the \textit{structural condition} given in Definition~\ref{def:structural}.
	\end{itemize}
\end{theorem}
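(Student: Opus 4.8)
The plan is to mirror the structure of the \textsc{Uniformity-Tester} analysis: compute $\E[Z]$, bound $\Var[Z]$, and conclude by Chebyshev's inequality with the stated threshold $\tau = 5s^2\epsilon^2/(8n)$ and sample size $s = c_1 n\|q\|_2/\epsilon^2$.

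\textbf{Poissonization and the expectation.}\quad Since source $p_j$ contributes $\mathrm{Poi}(1)$ samples, the usual Poissonization fact makes the counts $\{T_x\}_{x\in[n]}$ mutually independent with $T_x\sim\mathrm{Poi}(\mu_x)$, where $\mu_x:=\sum_{j=1}^s p_j(x)$. Set $\delta_x:=\mu_x-sq(x)=\sum_{j=1}^s(p_j(x)-q(x))$. The elementary identity $\E[(T-c)^2-T]=(\E[T]-c)^2$ for Poisson $T$, applied coordinate-wise with $c=sq(x)$, gives
\[
\E[Z]=\sum_{x\in[n]}\delta_x^2=\Big\|\sum_{j=1}^s\vec{\mathbf{e}}_j\Big\|_2^2,
\]
which is $0$ in the completeness case. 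The \textit{structural condition} is exactly what makes this quantity large in the soundness case: on $A$ every $p_j(x)-q(x)\ge 0$ and on $B$ every $p_j(x)-q(x)\le 0$, so $|\delta_x|=\sum_j|p_j(x)-q(x)|$ and hence $\sum_x|\delta_x|=\sum_j\|p_j-q\|_1\ge s\epsilon$; Cauchy--Schwarz then yields $\E[Z]\ge(\sum_x|\delta_x|)^2/n\ge s^2\epsilon^2/n=\tfrac85\tau$, so in particular $\E[Z]-\tau\ge\tfrac38\E[Z]$.

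\textbf{The variance bound.}\quad By independence of the $T_x$, $\Var[Z]=\sum_x\Var\big[(T_x-sq(x))^2-T_x\big]$, and a short computation with the central moments of $\mathrm{Poi}(\mu_x)$ gives the clean identity $\Var\big[(T-c)^2-T\big]=2\mu^2+4(\mu-c)^2\mu$ for Poisson $T$ with mean $\mu$; with $c=sq(x)$ this is $2\mu_x^2+4\delta_x^2\mu_x$. Summing,
\[
\Var[Z]=2\sum_x\mu_x^2+4\sum_x\delta_x^2\mu_x .
\]
Writing $\mu=sq+\delta$ as vectors, the triangle inequality gives $\sum_x\mu_x^2\le 2s^2\|q\|_2^2+2\E[Z]$. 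For the second sum I would expand $\delta_x^2\mu_x=s\,\delta_x^2q(x)+\delta_x^3$; the first part is at most $s\|q\|_\infty\E[Z]\le s\|q\|_2\E[Z]$, and in the cubic part only coordinates in $A$ (where $\delta_x\ge 0$) contribute positively, so $\sum_x\delta_x^3\le\|\delta\|_\infty\sum_x\delta_x^2\le\|\delta\|_2\,\E[Z]=\E[Z]^{3/2}$. Altogether $\Var[Z]\le 4s^2\|q\|_2^2+4\E[Z]+4s\|q\|_2\E[Z]+4\E[Z]^{3/2}$.

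\textbf{Applying Chebyshev.}\quad In the completeness case $\delta_x\equiv 0$, so $\Var[Z]=2s^2\|q\|_2^2$ and $\Pr[Z\ge\tau]\le\Var[Z]/\tau^2=\Theta\!\big(n^2\|q\|_2^2/(s^2\epsilon^4)\big)=\Theta(1/c_1^2)\le 1/3$. In the soundness case, $\Pr[Z<\tau]\le\Pr[|Z-\E[Z]|\ge\E[Z]-\tau]\le\Var[Z]/(\E[Z]-\tau)^2\le(64/9)\Var[Z]/\E[Z]^2$; dividing the four terms of the variance bound by $\E[Z]^2$ and using $\E[Z]\ge s^2\epsilon^2/n$, the identity $s=c_1n\|q\|_2/\epsilon^2$, and the fact that $n\|q\|_2^2\ge 1$ for any distribution (equivalently $\sqrt n\,\|q\|_2\ge1$), each term is $O(1/c_1)$ or $O(1/c_1^2)$, so the total is $\le 1/3$ once $c_1$ is a sufficiently large absolute constant. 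This proves both bullets.

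\textbf{The main obstacle.}\quad The Poisson moment identities are routine; the real work is the soundness-case variance bound. Because the samples come from different distributions, the per-coordinate variance is governed by $\mu_x$, which mixes $sq(x)$ with the signed error mass $\delta_x$, and the dangerous ingredient is the cubic term $\sum_x\delta_x^3$. The structural condition is what rescues it: it forces $\mathrm{sign}(\delta_x)$ to be constant on $A$ and on $B$, so the negative-$\delta_x$ coordinates can only decrease $\sum_x\delta_x^3$, leaving it controlled by $\|\delta\|_\infty\le\sqrt{\E[Z]}$. The same condition is also precisely what guarantees $\E[Z]\ge s^2\epsilon^2/n$; without it the $\delta_x$'s could cancel and $\E[Z]$ could be $0$ even though every $p_j$ is $\epsilon$-far from $q$.
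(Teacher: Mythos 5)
Your proposal is correct and follows the same overall strategy as the paper: Poissonize so the $T_x$ are independent Poisson, show $\E[Z]=\big\|\sum_j\vec{\mathbf{e}}_j\big\|_2^2$ so that $\E[Z]=0$ in the completeness case and $\E[Z]\ge s^2\epsilon^2/n$ by Cauchy--Schwarz in the soundness case, bound $\Var[Z]$, and conclude by Chebyshev with the same $\tau$ and $s$. Where you diverge is the variance analysis, and your route is genuinely cleaner. The paper (Lemma~\ref{lem:id_var}) expands $\E[Z_x^2]$ via raw moments $\sum_x\lambda_x^k$ for $k\in\{1,2,3,4\}$, which requires the long Appendix~\ref{sec:moment_calcs} computation, then regroups into $4s\|q\|_2\|\sum_j\vec{\mathbf{e}}_j\|_4^2 + 2\|\sum_j\vec{\mathbf{p}}_j\|_2^2 + 4\|\sum_j\vec{\mathbf{e}}_j\|_3^3$. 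You instead establish the exact per-coordinate identity $\Var\big[(T-c)^2-T\big]=2\mu^2+4(\mu-c)^2\mu$ using the central moments of Poisson (which I checked: with $S=T-\mu$, $\Var[S^2+aS]=\mu(1+2\mu+2a+a^2)$ where $a=2\mu-2c-1$, and this simplifies to $2\mu^2+4(\mu-c)^2\mu$), then bound $\sum_x\mu_x^2\le 2s^2\|q\|_2^2+2\E[Z]$ directly. This sidesteps the paper's case split on whether $\|\widetilde{\mathbf{p}}\|_2\le 3\|q\|_2$ entirely, and your final four-term bound $\Var[Z]\le 4s^2\|q\|_2^2+4\E[Z]+4s\|q\|_2\E[Z]+4\E[Z]^{3/2}$ plugs into Chebyshev just as cleanly. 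One small note on emphasis: in your closing paragraph you suggest the structural condition is what controls the cubic term $\sum_x\delta_x^3$, but in fact $\sum_x\delta_x^3\le\sum_x|\delta_x|^3\le\|\delta\|_\infty\|\delta\|_2^2\le\E[Z]^{3/2}$ holds unconditionally; the structural condition is only truly load-bearing in the lower bound $\E[Z]\ge s^2\epsilon^2/n$ (preventing cancellations in $\delta_x=\sum_j(p_j(x)-q(x))$), which you do also acknowledge. This does not affect correctness.
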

\begin{remark} \label{remark:identity_sample_complexity}
	The sample complexity of Algorithm~\ref{alg:identity-tester} is $\Theta(n\|q\|_2/\epsilon^2)$. Using the flattening procedure of Section~\ref{sec:flattening_id}, the sample complexity of \textsc{Identity-Tester} reduces to $\Theta(\sqrt{n}/\epsilon^2)$ which is optimal since the lower bound of $\Omega(\sqrt{n}/\epsilon^2)$ holds for identity testing in the standard single distribution setting~\cite{Paninski:08}.
\end{remark}

\begin{remark}
	Note that identity testing is a generalization of uniformity testing in Section~\ref{sec:unif_testing}. However, we keep our approach for uniformity testing since we only use \emph{exactly one} sample per distribution, rather than one sample in expectation.
\end{remark}

\vspace{2mm}\noindent \textbf{Overview of the proof:\quad} To prove the correctness of  $\textsc{Identity-Tester}$, we analyze the statistic 
$$ Z = \sum_{x \in [n]}(T_x-s\,q(x))^2- T_x \,,$$
where $T_x$ denotes the number of times we observe element $x$ among our $s' \sim  \text{Poi}(s)$ samples. Note that by employing the poissonization method,  $T_x$ is a Poisson random variable with parameter $\lambda_x \coloneqq \sum_{j=1}^s p_j(x)$ and $T_x$ and $T_y$ are independent for $x \ne y$. The independence among $T_x$'s greatly simplifies our calculations. 

Our goal is to show $Z$ is below the threshold $\tau$ in the completeness case, and above the threshold in the soundness case. To do so, we make use of the \textit{structural condition} to first define a convenient representation of $\E[Z]$ in Lemma~\ref{lem:id_ev}. We then show a strong concentration around its expectation by bounding the variance of $Z$ in Lemma~\ref{lem:id_var}. Finally, we show that $Z$ is always on the desired side of the threshold proving the correctness of our algorithm.
\begin{proof}[Proof of Theorem~\ref{thm:id}]
	Note that we set the (expected) number of samples to be $s = c_1n \|q\|_2/\epsilon^2$ for some sufficiently large constant $c_1$, and the threshold $\tau$ to be equal to $5s^2 \epsilon^2/(8n)$. We begin by stating a convenient representation of $\E[Z]$. To motivate our calculations, note that for a fixed $x$,
	\begin{align*}
		\E[(T_x-s\,q(x))^2- T_x] &= \E[T_x^2]-\E[T_x] - 2sq(x) \E[T_x] + s^2q(x)^2\\
		&= \lambda_x^2 -2s\,q(x)\lambda_x + s^2q(x)^2 = (\lambda_x - s\,q(x))^2
	\end{align*}
	which follows from the fact that $T_x$ is a Poisson random variable with parameter $\lambda_x = \sum_{j=1}^s p_j(x)$. Now using the \textit{structural condition}, we can define error terms similar to our uniformity testing section. For each distribution $p_j$, we define
	\begin{align*}
		e_j(x) &= p_j(x)-q(x) \quad \quad  \forall x \in A \\
		e_j(x)& = q(x) - p_j(x) \quad \quad  \forall x \in B.
	\end{align*}
	After plugging in $e_j(x)$ for all $x$ into our expression for $\lambda_x$, we combine these terms into a more useful representation of $\E[Z]$. We precisely show this representation  in Lemma~\ref{lem:id_ev} where we prove that $\E[Z]$ is given by  $\| \vec{\mathbf{e}}_1 + \cdots + \vec{\mathbf{e}}_s\|_2^2$
	where we interpret the vector $\vec{\mathbf{e}}_j \in \mathbb{R}^n$ as the vector with entries $e_j(x) = |q(x)-p_j(x)|$. Note that this is a natural generalization of the quantity $s^2\|q-p\|_2^2$ which is the quantity calculated by all $\chi^2$-based testers in the single distribution setting of identity testing (where all the samples are i.i.d.\@ from a fixed distribution $p$). More formally, we have the following lemma which we prove in Section~\ref{sec:id_ev}.
	
	\begin{restatable}{lemma}{lemidEV}\label{lem:id_ev}
		Let $\{p_i\}_{i=1}^s$ be distributions over $[n]$ that satisfy the \emph{structural condition} given in Definition~\ref{def:structural}. Suppose we draw $\textup{Poi}(1)$ samples from each $p_i$ and let $T_x$ be the number of times we see element $x \in [n]$ among the samples. Let $Z = \sum_{x \in [n]} (T_x-sq(x))^2-T_x$. Then,
		$$ \E[Z]= \| \vec{\mathbf{e}}_1 + \cdots + \vec{\mathbf{e}}_s\|_2^2$$
		where the $x$-th coordinate of $\vec{\mathbf{e}}_j \in \mathbb{R}^n$ is $|q(x)-p_j(x)|$.
	\end{restatable}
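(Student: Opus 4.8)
\textbf{Proof proposal for Lemma~\ref{lem:id_ev}.}\quad
The plan is to compute $\E[Z]$ coordinate by coordinate, exploiting the poissonization and then invoking the structural condition only at the very end to collapse a signed sum into a sum of absolute values. First I would recall that under the poissonization scheme, $T_x \sim \textup{Poi}(\lambda_x)$ with $\lambda_x \coloneqq \sum_{j=1}^s p_j(x)$, and the $T_x$'s are independent across $x \in [n]$ (this independence is not even needed for the expectation, but it will be used later for the variance bound). By linearity of expectation, $\E[Z] = \sum_{x \in [n]} \E[(T_x - sq(x))^2 - T_x]$, so it suffices to evaluate each summand.

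Next I would perform the one-variable Poisson calculation already sketched in the proof overview: for a $\textup{Poi}(\lambda)$ variable $T$ and any constant $c$, $\E[T^2] = \lambda^2 + \lambda$ and $\E[T] = \lambda$, so
\[
\E[(T - c)^2 - T] = (\lambda^2 + \lambda) - 2c\lambda + c^2 - \lambda = (\lambda - c)^2.
\]
Applying this with $c = sq(x)$ gives $\E[(T_x - sq(x))^2 - T_x] = (\lambda_x - sq(x))^2$, hence $\E[Z] = \sum_{x \in [n]} (\lambda_x - sq(x))^2$.

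The remaining step is to rewrite $\lambda_x - sq(x) = \sum_{j=1}^s (p_j(x) - q(x))$ using the error terms. Here is where the structural condition enters: for $x \in A$ every $p_j(x) - q(x) = e_j(x) \ge 0$, so $\lambda_x - sq(x) = \sum_{j=1}^s e_j(x)$; for $x \in B$ every $p_j(x) - q(x) = -e_j(x) \le 0$, so $\lambda_x - sq(x) = -\sum_{j=1}^s e_j(x)$. In both cases $(\lambda_x - sq(x))^2 = \bigl(\sum_{j=1}^s e_j(x)\bigr)^2$, and since $e_j(x) = |q(x) - p_j(x)|$ is exactly the $x$-th coordinate of $\vec{\mathbf{e}}_j$, summing over $x$ yields $\E[Z] = \sum_{x \in [n]} \bigl(\sum_{j=1}^s e_j(x)\bigr)^2 = \|\vec{\mathbf{e}}_1 + \cdots + \vec{\mathbf{e}}_s\|_2^2$, as claimed.

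There is no serious obstacle in this lemma; it is essentially a bookkeeping computation. The only subtlety worth flagging is precisely the use of the structural condition in the last step: without sign-consistency of $p_j(x) - q(x)$ across $j$ for each fixed $x$, the signed sum $\lambda_x - sq(x)$ would not equal $\pm \sum_j e_j(x)$, and $\E[Z]$ would instead pick up cross cancellations rather than cleanly reducing to $\|\sum_j \vec{\mathbf{e}}_j\|_2^2$. This is exactly the point that makes the structural condition the right agreement assumption for this statistic.
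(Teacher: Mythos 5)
Your proof is correct and reaches the result by essentially the same route as the paper: poissonize, compute $\E[Z_x]=(\lambda_x-sq(x))^2$ coordinate by coordinate, and then invoke the structural condition to pass to the error vectors. The one genuine improvement in your write-up is noticing that the structural condition gives $\lambda_x - sq(x) = \pm\sum_{j}e_j(x)$ pointwise (with a single common sign determined by whether $x\in A$ or $x\in B$), so the square collapses immediately to $\bigl(\sum_j e_j(x)\bigr)^2$; the paper instead routes through the full expansion of $\sum_x \lambda_x^2$ in its appendix and then cancels the cross terms involving $\|q\|_2^2$ and $(-1)^{x\in B}q(x)e_j(x)$ by hand. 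Your version is shorter and makes the role of the structural condition more transparent, which is a nice simplification even though the underlying idea is identical.
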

	We now give a tight upper bound for the variance of our statistic $Z$. Let $Z_x$ denote the $x$-th term in $Z$, $(T_x-sq(x))^2-T_x$. As we establish earlier, using the Poissonization method, $T_x$'s are independent from each other. Thus, the $Z_x$'s  are independent as well. Therefore, one can expand the variance of $Z$ as bellow: 
	$$\Var[Z] = \sum_{x \in [n]} \Var[Z_x]  = \sum_{x \in [n]} \E[Z_x^2] - \E[Z_x]^2\,. $$
	As we expand the term $Z_x^2$, to bound $\E\left[Z_x^2\right]$, higher norms of $T_x$, i.e.,  $\E\left[T_x^k\right]$ for $k \in [4]$, appear in our calculation. We can compute the closed-form of these quantities via the known norms of the Poisson distribution. Combining these terms, we again get an upper bound of $\Var[Z]$ in terms of the vectors $\vec{\mathbf{e}}_j$. Formally, we prove the following lemma in Section~\ref{sec:id_var}.
	\begin{restatable}{lemma}{lemidVAR} \label{lem:id_var}
		Let $\{p_i\}_{i=1}^s$ be $s$ distributions over $[n]$ that satisfy the structural condition given in Definition~\ref{def:structural}. Suppose we draw $\textup{Poi}(1)$ samples from each $p_i$, and let $T_x$ be the number of times we see element $x \in [n]$ among the samples. Let $Z = \sum_{x \in [n]} (T_x-sq(x))^2-T_x$. Then, we have:
		$$ \Var[Z] \le 4s \|q\|_2 \left \| \sum_{j=1}^s \vec{\mathbf{e}}_j \right \|_4^2 + 2 \left \| \sum_{j=1}^s \vec{\mathbf{p}}_j \right \|_2^2 + 4  \left \| \sum_{j=1}^s \vec{\mathbf{e}}_j \right \|_3^3\,,$$
		where $|q(x)-p_j(x)|$ is the $x$-th coordinates of $\vec{\mathbf{e}}_j \in \mathbb{R}^n$, and $\vec{p_j}$ is the vector representation of the distribution $p_j$.
	\end{restatable}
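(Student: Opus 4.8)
The plan is to exploit the Poissonization. Since $T_x\sim\mathrm{Poi}(\lambda_x)$ with $\lambda_x=\sum_{j=1}^s p_j(x)$ and the $T_x$ are mutually independent, the summands $Z_x:=(T_x-sq(x))^2-T_x$ of $Z$ are independent, so $\Var[Z]=\sum_{x\in[n]}\Var[Z_x]$, and it suffices to bound each $\Var[Z_x]$ and add up.

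First I would compute $\Var[Z_x]$ in closed form. Writing $a:=sq(x)$ and using $T_x^2-T_x=T_x(T_x-1)$, we have $Z_x=T_x(T_x-1)-2aT_x+a^2$, so from the factorial moments $\E[T_x(T_x-1)\cdots(T_x-k+1)]=\lambda_x^k$ of the Poisson we recover $\E[Z_x]=\lambda_x^2-2a\lambda_x+a^2=(\lambda_x-a)^2$ (consistent with Lemma~\ref{lem:id_ev}). Expanding $\E[Z_x^2]$ and substituting the first four moments of $\mathrm{Poi}(\lambda_x)$, then subtracting $\E[Z_x]^2=(\lambda_x-a)^4$, the quartic (and constant) terms cancel and one is left with the clean identity
$$\Var[Z_x]=4\lambda_x(\lambda_x-sq(x))^2+2\lambda_x^2.$$
This degree-four moment bookkeeping — keeping every coefficient straight and verifying the cancellations — is the main obstacle, though it is purely mechanical.

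Next I would bring in the structural condition to rewrite $\lambda_x-sq(x)$ via the error vectors. Since $p_j(x)\ge q(x)$ for $x\in A$ and $p_j(x)\le q(x)$ for $x\in B$, we get $|\lambda_x-sq(x)|=\sum_{j=1}^s e_j(x)=\big(\sum_{j}\vec{\mathbf{e}}_j\big)_x$, and moreover $\lambda_x\le sq(x)+\sum_{j}e_j(x)$ in both cases (with equality on $A$, and on $B$ because $\lambda_x=sq(x)-\sum_j e_j(x)\le sq(x)$). Substituting these into the identity for $\Var[Z_x]$ yields
$$\Var[Z_x]\le 4sq(x)\Big(\sum_j e_j(x)\Big)^2+4\Big(\sum_j e_j(x)\Big)^3+2\lambda_x^2.$$

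Finally I would sum over $x\in[n]$. The middle term sums to $4\,\|\sum_j\vec{\mathbf{e}}_j\|_3^3$ (the coordinates $\sum_j e_j(x)$ are nonnegative), and the last term sums to $2\sum_x\lambda_x^2=2\,\|\sum_j\vec{\mathbf{p}}_j\|_2^2$ since $\lambda_x=\big(\sum_j\vec{\mathbf{p}}_j\big)_x$. For the first term, Cauchy--Schwarz gives $\sum_x q(x)\big(\sum_j e_j(x)\big)^2\le\|q\|_2\big(\sum_x(\sum_j e_j(x))^4\big)^{1/2}=\|q\|_2\,\|\sum_j\vec{\mathbf{e}}_j\|_4^2$, which produces the claimed $4s\|q\|_2\,\|\sum_j\vec{\mathbf{e}}_j\|_4^2$ term and completes the bound.
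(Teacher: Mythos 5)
Your proof is correct, and it takes a noticeably cleaner route than the paper's. The paper works with global sums $\sum_x \lambda_x^k$ for $k\in\{1,\ldots,4\}$, expands them entirely in terms of the error variables $e_j(x)$ via lengthy polynomial bookkeeping relegated to an appendix, regroups the resulting many-term expression, and only at the very end subtracts off $\sum_x \E[Z_x]^2 = \sum_x\big(\sum_j e_j(x)\big)^4$ to cancel the degree-four error term. You instead keep $\lambda_x$ as a single variable and observe the clean per-coordinate identity $\Var[Z_x]=4\lambda_x(\lambda_x-sq(x))^2+2\lambda_x^2$ (which I have verified: expanding $\E[Z_x^2]-\E[Z_x]^2$ in $\lambda_x$ gives $4\lambda_x^3+(2-8sq(x))\lambda_x^2+4s^2q(x)^2\lambda_x$, matching your expression). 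The quartic cancellation thus happens once and for all at the per-$x$ level, and the structural condition enters only through the two facts $|\lambda_x-sq(x)|=\sum_j e_j(x)$ and $\lambda_x\le sq(x)+\sum_j e_j(x)$, both of which hold on $A$ and on $B$. Summing over $x$ and applying Cauchy--Schwarz to the $q(x)$-weighted sum reproduces the paper's three-term bound exactly, including the constants $4s\|q\|_2$, $2$, and $4$. Your approach dispenses with the paper's appendix of moment identities altogether and makes the role of the structural condition more transparent, so it is both shorter and more conceptually legible; the paper's approach has the minor advantage that its explicit expansion of $\sum_x\E[Z_x^2]$ in error terms is re-used verbatim in the closeness-testing variance lemma (Lemma~\ref{lem:closeness_var}), but your derivation would adapt to that setting just as easily.
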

	We can now proceed to the proof of the theorem in the completeness case. 
	
	\vspace{2mm}\noindent \textbf{Proof of the completeness case:\quad} 
	In this case, Lemma~\ref{lem:id_ev} gives us $\E[Z] = 0$, and Lemma~\ref{lem:id_var} gives us $\Var[Z] \le 2s^2\|q\|_2^2$. Therefore by Chebyshev's inequality,
	$$\Pr[Z \ge \tau] \leq \Pr\left[|Z| = |Z - \E[Z]|  \ge \frac{s^2\epsilon^2}{4n} \right] \le \frac{32s^2\|q\|_2^2n^2}{s^4 \epsilon^4} = \frac{32\|q\|_2^2n^2}{s^2\epsilon^4}.$$
	Recall that we let $s = c_1n \|q\|_2/\epsilon^2$. The right hand side of the above inequality can be made arbitrarily small by picking a sufficiently large constant $c_1$, which proves the completeness case.
	
	\vspace{2mm}\noindent \textbf{Proof of the soundness case:\quad} In this case, Lemma~\ref{lem:id_ev} gives us
	$$ \E[Z] = \| \vec{\mathbf{e}}_1 + \cdots + \vec{\mathbf{e}}_s\|_2^2 = \sum_{x \in [n]} \left( \sum_{j=1}^s e_j(x) \right)^2 \ge \frac{1}n \left( \sum_{j=1}^s \sum_{x \in [n]} e_j(x) \right)^2 \ge \frac{s^2\epsilon^2}n $$
	where the first inequality is Cauchy-Schwarz, and the second inequality follows from the fact that $$ \sum_{x \in [n]} e_j(x) = \|q-p_j\|_1 \geq \epsilon$$ for each $j \in [s]$. Then by Chebyshev's inequality and Lemma~\ref{lem:id_var},
	\begin{align*}
		\Pr\left[ |Z - \E[Z] | \ge \frac{\E[Z]}{4} \right] &\le \frac{16 \Var[Z]}{\E[Z]^2} \\
		& \le  \frac{4s \|q\|_2\left \| \sum_{j=1}^s \vec{\mathbf{e}}_j \right \|_4^2}{\left \| \sum_{j=1}^s \vec{\mathbf{e}}_j \right\|_2^4} +   \frac{2 \left \| \sum_{j=1}^s \vec{\mathbf{p}}_j \right \|_2^2}{\left \| \sum_{j=1}^s \vec{\mathbf{e}}_j \right\|_2^4} +  \frac{4  \left \| \sum_{j=1}^s \vec{\mathbf{e}}_j \right \|_3^3}{\left \| \sum_{j=1}^s \vec{\mathbf{e}}_j \right\|_2^4} \numberthis \label{eq:chebyshevbound}
	\end{align*}
	Now, we bound each of the three terms above separately. We start off by introducing a new distribution denoted by $ \widetilde{\mathbf{p}}$ to be
	$ \widetilde{\mathbf{p}} := \frac{1}s \sum_{j=1}^s \mathbf{p}_j $. In some of our calculation, this new representation simplifies our calculations. 
	\begin{itemize}
		\item \textbf{First term:} Now, we focus on the first term in Equation~\eqref{eq:chebyshevbound}. We note that all the $e_j(x)$ are positive, so we have:
		\begin{equation}\label{eq:ell2-ell4}
		\begin{split}
			\left \| \sum_{j=1}^s \vec{\mathbf{e}}_j \right\|_4^2 & = \sqrt{\sum_{x\in[n]}\left(\sum_{j=1}^s  e_j(x) \right)^4} \leq \sum_{x\in[n]}\left(\sum_{j=1}^s  e_j(x) \right)^2  \le \left  \| \sum_{j=1}^s \vec{\mathbf{e}}_j \right\|_2^2\,. 
		\end{split}
		\end{equation}
		We again use the same Cauchy-Schwarz calculation as in $\E[Z]$ and get:
		\begin{equation}\label{eq:lb_ell2^2}
		    \left \| \sum_{j=1}^s \vec{\mathbf{e}}_j \right\|_2^2 = \sum_{x \in [n]}  \left( \sum_{j=1}^s e_j(x) \right)^2 \ge \frac{1}n \left(  \sum_{x \in [n]} \sum_{j=1}^s e_j(x) \right)^2 \ge \frac{s^2\epsilon^2}n\,.
		\end{equation}
		Therefore, we bound the first term from above:
		$$ \frac{4s \|q\|_2\left \| \sum_{j=1}^s \vec{\mathbf{e}}_j \right \|_4^2}{\left \| \sum_{j=1}^s \vec{\mathbf{e}}_j \right\|_2^4} \le \frac{4n \|q\|_2}{s \epsilon^2}. $$
		
		\item \textbf{Second term}: For the second term, we have:
		$$   \frac{2 \left \| \sum_{j=1}^s \vec{\mathbf{p}}_j \right \|_2^2}{\left \| \sum_{j=1}^s \vec{\mathbf{e}}_j \right\|_2^4}  = \frac{2s^2 \| \widetilde{\mathbf{p}} \|_2^2}{\left \| \sum_{j=1}^s \vec{\mathbf{e}}_j \right\|_2^4} $$
		where  $ \widetilde{\mathbf{p}} = \frac{1}s \sum_{j=1}^s \mathbf{p}_j $. We now consider two cases. If it is the case that 
		$ \| \widetilde{\mathbf{p}} \|_2 \le 3 \|q \|_2  $,
		then we have:
		$$\frac{2s^2 \| \widetilde{\mathbf{p}} \|_2^2}{\left \| \sum_{j=1}^s \vec{\mathbf{e}}_j \right\|_2^4} \le \left( \frac{5 n \|q\|_2}{s \epsilon^2} \right)^2.$$
		On the other hand, suppose that $ \| \widetilde{\mathbf{p}} \|_2 > 3 \|q \|_2 $. Then, using our \textit{structural condition}, we obtain:
		$$  \frac{2s^2 \| \widetilde{\mathbf{p}} \|_2^2}{\left \| \sum_{j=1}^s \vec{\mathbf{e}}_j \right\|_2^4} = \frac{2 \| \widetilde{\mathbf{p}} \|_2^2}{s^2\| \widetilde{\mathbf{p}}-q \|_2^4} $$
		and note that
		$$ \|\widetilde{\mathbf{p}} - q \|_2^2 \ge \| \widetilde{\mathbf{p}} \|_2^2 + \|q\|_2^2 - 2\|q\|_2 \|  \widetilde{\mathbf{p}}\|_2 \ge \| \widetilde{\mathbf{p}}\|_2^2/3 .$$
		Hence,
		$$  \frac{2 \| \widetilde{\mathbf{p}} \|_2^2}{s^2\| \widetilde{\mathbf{p}}-q \|_2^4} \le \frac{18}{s^2\|\widetilde{\mathbf{p}} \|_2^2} \le \left( \frac{5}{s \|q\|_2} \right)^2. $$
		
		\item \textbf{Third term}: Again, we use the Cauchy-Schwarz inequality to obtain:
		\begin{align*}
		    \left \| \sum_{j=1}^s \vec{\mathbf{e}}_j \right\|_3^3 & = \sum_{x\in[n]}\left(\sum_{j=1}^s  e_j(x) \right)^3 \leq \sqrt{\left(\sum_{x\in[n]}\left(\sum_{j=1}^s  e_j(x) \right)^2\right)\cdot \left(\sum_{x\in[n]}\left(\sum_{j=1}^s  e_j(x) \right)^4\right)}
		    \\& \leq \left \| \sum_{j=1}^s \vec{\mathbf{e}}_j \right\|_2 \cdot \left \| \sum_{j=1}^s \vec{\mathbf{e}}_j \right\|_4^2
		\end{align*}
		Now, we use Equation~\eqref{eq:ell2-ell4} and Equation~\eqref{eq:lb_ell2^2}, which we show earlier, to bound the third term:
		\begin{align*}
		    \frac{4  \left \| \sum_{j=1}^s \vec{\mathbf{e}}_j \right \|_3^3}{\left \| \sum_{j=1}^s \vec{\mathbf{e}}_j \right\|_2^4} 
		        & \leq \frac{4\,\left \| \sum_{j=1}^s \vec{\mathbf{e}}_j \right\|_2 \cdot \left \| \sum_{j=1}^s \vec{\mathbf{e}}_j \right\|_4^2}{\left \| \sum_{j=1}^s \vec{\mathbf{e}}_j \right\|_2^4} \leq \frac{4}{\left \| \sum_{j=1}^s \vec{\mathbf{e}}_j \right\|_2} \leq \frac{4\,\sqrt n}{s \epsilon}\,.
		\end{align*}
		
	\end{itemize}
	Thus in all cases, we have
	$$ \Pr\left[ |Z - \E[Z] | \ge \frac{\E[Z]}{4} \right] \le \frac{4n \|q\|_2}{s \epsilon^2} + \left( \frac{5n \|q\|_2}{s \epsilon^2} \right)^2  + \left( \frac{5}{s \|q\|_2} \right)^2 + \frac{4\,\sqrt n}{s \epsilon}.$$
	Note that $s = c_1n \|q\|_2/\epsilon^2$ and $\|q\|_2 \ge 1/\sqrt{n}$. Therefore, by letting $c_1$ be a sufficiently large constant,  we get that the above probability is smaller than $1/3$. Hence with probability at least $2/3$, we know $Z \ge 3s^2\epsilon^2/(4n)$ in the soundness case, so we reject with probability at least $2/3$, as desired.
\end{proof}

\subsection{Flattening Procedure} \label{sec:flattening_id}
In this section, we present the flattening procedure which allows us to assume that $\|q\|_2^2 = O(1/n)$ in Remark~\ref{remark:identity_sample_complexity} without loss of generality where $q$ is our known distribution. While this procedure is similar to the one used in \cite{DiakonikolasK:2016}, we state it here for the sake of completeness. For each $x \in [n]$, define
$$b_x \coloneqq \lfloor nq(x) \rfloor + 1. $$
We note that $b_x \ge 1$ for each $x \in [n]$.
Given a sample $x$ from a distribution $p$ over $[n]$, we can get a sample from the `flattened' distribution $p'$ over a new domain, $\mathcal{D}$, defined as
$$ \mathcal{D} \coloneqq \{(x,y) \mid x \in [n], y \in [b_x] \},$$
by drawing an element from $y \in [b_x]$ uniformly at random and creating the tuple $(x,y)$. This is the flattening procedure that we use for our version of identity testing. Note that the probability mass over $[n]$ placed by $p$ gets `flattened' to be a probability distribution over the domain $\mathcal{D}$.

Furthermore, this procedure has a few desired properties which we state here:
First, the size of this new domain is $O(n)$: 
$$ |\mathcal{D}| = \sum_{x\in[n]} b_x \le 2n. $$

Second, the procedure  preserves the $\ell_1$-distance between two distributions: let $p'$ and $q'$ denote the flattened versions of $p$ and $q$. Then, we have:
$$ \|q' - p'\|_1 = \sum_{x \in [n]} \sum_{y \in [b_x]} \frac{|q(x)-p(x)|}{|b_x|}  = \sum_{x \in [n]} |q(x)-p(x)| = \|q-p\|_1.$$
Third, by definition of the $b_x$'s, we can show that $q'$ has a low $\ell_2$-norm:
$$  \|q'\|_2^2 = \sum_{x \in [n]} \sum_{y \in [b_x]} \frac{q(x)^2}{b_x^2} = \sum_{x \in [n]} \frac{q(x)^2}{b_x} \le \sum_{x \in [n]} \frac{q(x)}n \le \frac{1}n \,.$$
The above inequality implies that the $\ell_2$-norm of $q'$ is within a constant factor of the smallest possible norm, which is $(|\mathcal{D}|)^{-1/2}$. 

Therefore whenever we get a sample over $[n]$ in \textsc{Identity-Tester}, we can use this flattening procedure to draw a sample over $\mathcal{D}$. By using this flattening procedure to draw samples from a slightly larger domain, we can assume that the $\ell_2$-norm of the known distribution $q$ is $O(1/\sqrt{n})$. Note that since the size of the larger domain is still $O(n)$, the flattening procedure only affects the sample complexity up to constant factors. Therefore, by combining with Theorem~\ref{thm:id}, we can perform our generalized version of identity testing by using $s = O(\sqrt{n}/\epsilon^2)$ samples, which is optimal up to constant factors.

\subsection{Proof of Lemma \ref{lem:id_ev}} \label{sec:id_ev}
\lemidEV*
\begin{proof}
	Let 
	$$Z_x =  (T_x-sq(x))^2-T_x.$$
	We have:
	$$Z_x^2 =  T_x^2 - 2sq(x)T_x + s^2q(x)^2 - T_x. $$
	We can compute that:
	\begin{align*}
		\E[Z_x] &= \E[T_x^2]-\E[T_x] - 2sq(x) \E[T_x] + s^2q(x)^2  \\
		&= \lambda_x^2 -2sq(x)\lambda_x + s^2q(x)^2
	\end{align*}
	where we have used the fact that the variance of a Poisson random variable with parameter $\lambda$ is also $\lambda$. We introduce the following notation $(-1)^{x \in B}$ which is defined as
	$$(-1)^{x\in B} = 
	\begin{cases}
	-1 \ &\text{if} \ x \in B, \\
	\ \ 1 \ &\text{if} \ x \not \in B.
	\end{cases}
	$$ Using the fact that $\lambda_x = \sum_{j=1}^sp_j(x)$, we can compute that
	$$\sum_{x \in [n]}\lambda_x= \sum_{j=1}^s \sum_{x \in [n]} \left(q(x)+(-1)^{x\in B}e_j(x)\right).$$
	In Appendix \ref{sec:moment_calcs}, we calculate the $\sum_{x \in [n]} \lambda_x^2$. There we show that
	\begin{align*}
		\sum_{x \in [n]} \lambda_x^2 = \  &s^2||q||_2^2 + 2s \sum_{j=1}^s \sum_{x \in [n]}(-1)^{x \in B}q(x)e_j(x) \\ 
		&+ \sum_{j=1}^s \sum_{x \in [n]} e_j(x)^2 + \sum_{j \ne k} \sum_{x \in [n]}e_j(x)e_k(x).
	\end{align*}
	Using these two results, we have:
	\begin{align*}
		\sum_{x \in [n]} \E[Z_x] &= \sum_{x \in [n]} \lambda_x^2 - 2s \sum_{x \in [n]} q(x) \lambda_x + s^2 \sum_{x \in [n]} q(x)^2 \\
		&= 2s \sum_{j=1}^s \sum_{x \in [n]} (-1)^{x \in B}q(x)e_j(x) + \sum_{j=1}^s \sum_{x \in [n]} e_j(x)^2 + \sum_{j\ne k} \sum_{ x \in [n]}e_j(x)e_k(x) \\
		&- 2s \sum_{j=1}^s \sum_{x \in [n]} (q(x)^2 + (-1)^{x \in B}q(x)e_j(x)) + 2s^2\|q\|_2^2 \\
		&= \sum_{j=1}^s \sum_{x \in [n]} e_j(x)^2 + \sum_{j\ne k} \sum_{x \in [n]} e_j(x)e_k(x) \\
		&= \| \vec{\mathbf{e}}_1 + \cdots + \vec{\mathbf{e}}_s\|_2^2.
	\end{align*}
	Therefore, our final result is as follows:
	$$ \E[Z] =  \| \vec{\mathbf{e}}_1 + \cdots + \vec{\mathbf{e}}_s\|_2^2,$$
	as desired.
\end{proof}
\begin{remark}
	Note that the quantity $\| \vec{\mathbf{e}}_1 + \cdots + \vec{\mathbf{e}}_s\|_2^2$ is a natural generalization of the quantity $s^2\|q-p\|_2^2$ which is the expectation of the random variable calculated by identity testers in the single distribution case where samples come from a fixed distribution $p$.
\end{remark}

\subsection{Proof of Lemma \ref{lem:id_var}} \label{sec:id_var}
\lemidVAR*
\begin{proof}
	Let
	$$Z_x = (T_x-sq(x))^2-T_x.$$
	Due to the independence of $T_x$, we have
	$$\Var[Z] = \sum_{x \in [n]} \Var[Z_x]  = \sum_{x \in [n]} \E[Z_x^2] - \E[Z_x]^2 $$
	where $$Z_x =  (T_x-sq(x))^2-T_x.$$
	Noting that $T_x$ is a Poisson with parameter $\lambda_x$, we can compute that
	$$ \E[Z_x^2] = \lambda_x^4 + 4\lambda_x^3(1-sq(x)) + 2 \lambda_x^2(1-4sq(x)+3s^2q(x)^2) + 4s^2q(x)^2 \lambda_x(1-sq(x)) + s^4q(x)^4.$$
	In Appendix \ref{sec:moment_calcs} we calculate the $\sum_{x \in [n]} \lambda_x^k$ for $k \in \{1,2,3,4\}$. Using these results and simplifying, we arrive at the following expression:
	\begin{align*}
		&\sum_{x \in [n]} \E[Z_x^2] = 4s \sum_{j = 1}^s \sum_{x \in [n]} q(x)e_j(x)^2 + 4s \sum_{j\ne k} \sum_{x \in [n]} q(x)e_j(x)e_k(x) +2s^2\|q\|_2^2 \\
		&+ 4s \sum_{j=1}^s \sum_{x \in [n]}(-1)^{x \in B}q(x)e_j(x) + 2 \sum_{j=1}^s \sum_{x \in [n]} e_j(x)^2 + 2 \sum_{j \ne k} \sum_{x \in [n]} e_j(x)e_k(x) \\
		&+ 4 \sum_{j=1}^s \sum_{x \in [n]} (-1)^{x \in B}e_j(x)^3 + 12 \sum_{j \ne k}  \sum_{x \in [n]} (-1)^{x \in B}e_j(x)^2e_k(x)  \\
		&+ 4\sum_{j \ne k \ne \ell} \sum_{x \in [n]} (-1)^{x \in B} e_j(x)e_k(x)e_{\ell}(x) + \Bigg ( \sum_{j=1}^s \sum_{x \in [n]} e_j(x)^4 + 6\sum_{j \ne k \ne \ell}e_j(x)^2e_k(x)e_\ell(x) \\
		&+ 4\sum_{j \ne k}e_j(x)^3e_k(x) + 3\sum_{j \ne k}e_j(x)^2e_k(x)^2  + \sum_{j \ne k \ne \ell \ne t} e_j(x)e_k(x)e_\ell(x)e_t(x) \Bigg ).
	\end{align*}
	We now simplify the above expression. First note that
	\begin{align*}
		&4s \sum_{j = 1}^s \sum_{x \in [n]} q(x)e_j(x)^2 + 4s \sum_{j\ne k} \sum_{x \in [n]} q(x)e_j(x)e_k(x) \\
		&= 4s \sum_{x \in [n]} q(x)\left( \sum_{j=1}^s e_j(x)^2 +  \sum_{j \ne k} e_j(x)e_k(x) \right) = 4s \sum_{x \in [n]} q(x) \left( \sum_{j=1}^s e_j(x)\right)^2 \\
		&\le 4s \|q\|_2 \| \vec{\mathbf{e}}_1 + \cdots + \vec{\mathbf{e}}_s \|_4^2.
	\end{align*}
	Furthermore, 
	\begin{align*}
		&2s^2\|q\|_2^2 + 4s \sum_{j=1}^s \sum_{x \in [n]}(-1)^{x \in B}q(x)e_j(x) + 2 \sum_{j=1}^s \sum_{x \in [n]} e_j(x)^2 + 2 \sum_{j \ne k} \sum_{x \in [n]} e_j(x)e_k(x) \\
		&= 2 \sum_{j \ne k} \sum_{x \in [n]} (q(x)+(-1)^{x \in B}e_j(x))(q(x)+(-1)^{x \in B}e_k(x)) \\
		&= 2 \| \vec{\mathbf{p}}_1 + \cdots \vec{\mathbf{p}}_s\|_2^2
	\end{align*}
	and
	\begin{align*}
		&4 \sum_{j=1}^s \sum_{x \in [n]} (-1)^{x \in B}e_j(x)^3 + 12 \sum_{j \ne k}  \sum_{x \in [n]} (-1)^{x \in B}e_j(x)^2e_k(x)  \\
		&+ 4\sum_{j \ne k \ne \ell} \sum_{x \in [n]} (-1)^{x \in B} e_j(x)e_k(x)e_{\ell}(x)\\
		&\le 4\left( \sum_{j=1}^s \sum_{x \in [n]}e_j(x)^3 + 3 \sum_{j \ne k}  \sum_{x \in [n]} e_j(x)^2e_k(x) + \sum_{j \ne k \ne \ell} \sum_{x \in [n]} e_j(x)e_k(x)e_{\ell}(x) \right) \\
		&= 4 \|\vec{\mathbf{e}}_1 + \cdots + \vec{\mathbf{e}}_s \|_3^3.
	\end{align*}
	Finally, the last expression inside the parenthesis in the expression for $\sum_{x\in[n]} \E[Z_x^2]$ which is:
	\begin{align*}
	   &\sum_{j=1}^s \sum_{x \in [n]} e_j(x)^4 + 6\sum_{j \ne k \ne \ell}e_j(x)^2e_k(x)e_\ell(x) + 4\sum_{j \ne k}e_j(x)^3e_k(x) + 3\sum_{j \ne k}e_j(x)^2e_k(x)^2  \\
	   &+ \sum_{j \ne k \ne \ell \ne t} e_j(x)e_k(x)e_\ell(x)e_t(x) 
	\end{align*}
	is precisely $\sum_{x \in [n]} \left( \sum_{j=1}^s e_j(x) \right)^4$. Therefore, we obtain the following bound:
	$$ \sum_{x \in [n]} \E[Z_x^2] \le 4s \|q\|_2 \left \| \sum_{j=1}^s \vec{\mathbf{e}}_j \right \|_4^2 + 2 \left \| \sum_{j=1}^s \vec{\mathbf{p}}_j \right \|_2^2 + 4  \left \| \sum_{j=1}^s \vec{\mathbf{e}}_j \right \|_3^3 + \sum_{x \in [n]} \left( \sum_{j=1}^s e_j(x) \right)^4.$$ 
	Using the calculation for $\E[Z_x]$ that we performed for Lemma \ref{lem:id_ev}, we see that:
	$$ \sum_{x \in [n]} \E[Z_x]^2 = \sum_{x \in [n]} \left( \sum_{j=1}^s e_j(x)^2 + \sum_{j \ne k} e_j(x)e_k(x) \right)^2 = \sum_{x \in [n]} \left( \sum_{j=1}^s e_j(x) \right)^4 $$
	so altogether,
	$$
	\Var[Z] \le 4s \|q\|_2 \left \| \sum_{j=1}^s \vec{\mathbf{e}}_j \right \|_4^2 + 2 \left \| \sum_{j=1}^s \vec{\mathbf{p}}_j \right \|_2^2 + 4  \left \| \sum_{j=1}^s \vec{\mathbf{e}}_j \right \|_3^3,
	$$
	as desired.
\end{proof}
\begin{remark}
	As stated in Section \ref{sec:id_ev}, the quantity $\left \| \sum_{j=1}^s \vec{\mathbf{e}}_j \right \|_4^2$ is a natural generalization of $s^2\|q-p\|_2^2$ which appears in the variance calculations of the statistic used by identity testers in the single distribution setting where samples come from a fixed distribution $p$. Similarly, the quantity $\left \| \sum_{j=1}^s \vec{\mathbf{p}}_j \right \|_2^2$ is a natural generalization of $s^2\|p\|_2^2$ which also appears in these calculations.
\end{remark}

\section{Closeness Testing with Multiple Sources} \label{sec:closeness_testing}
in this section, we present our algorithm for closeness testing with multiple sources and its analysis. Recall that our goal is to distinguish the following two cases with probability at least $2/3$:
\begin{itemize}
	\item {\textbf{Completeness case:}} $p_1, p_2, \cdots$ are identical to $q$
	\item{\textbf{Soundness case:}} $p_1, p_2, \cdots$ are all $\epsilon$-far from $q$ in $\ell_1$-distance
\end{itemize}
where we have access to two streams. In the first stream, all samples are i.i.d.\@ from some fixed distribution $q$ over $[n]$ that is unknown to us while in the second stream, we receive samples from $\{p_i\}_{i=1}^s$. In the soundness case, we also assume that the $p_i$'s satisfy the \textit{structural condition} given in Definition \ref{def:structural}: we assume there are disjoint sets $A$ and $B$ that partition $[n]$ such that all $p_i$'s are larger than $q$ on the indices in $A$, and all the $p_i$'s are smaller than $q$ on the indices in $B$. Note that the \textit{structural condition} is trivially satisfied in the completeness case.

\subsection{Algorithm for Closeness Testing}\label{sec:closeness_alg}
We now present our algorithm, \textsc{Closeness-Tester}, for closeness testing with multiple sources. Overall, our approach is very similar to our approach in Section \ref{sec:id_testing}. As in the identity testing section, we again make the assumption that we are able to receive $\text{Poi}(1)$ samples from each of the distributions $p_i$. This is a generalization of the standard assumption in the single source closeness testing~\cite{ChanDVV14,DiakonikolasK:2016} and it significantly simplifies the analysis of our algorithm by making certain random variables independent, similar to the identity testing section. Furthermore, as $\text{Poi}(s)$ is tightly concentrated around $s$, we can carry out this poissonization method at the expense of only constant factor increases in the sample complexity.

As in the identity testing section, the statistic calculated by our algorithm is introduced in~\cite{ChanDVV14, DiakonikolasK:2016}. We show that the expected value of our statistic is related to the `$\ell_2$-norm' of the difference of the distributions, as explained in Section \ref{sec:contributions}. If the value of the statistic is `large' compared to some threshold $\tau$, the algorithm outputs \reject and outputs \accept otherwise. As in uniformity testing, the challenge is to show that the value of this statistic concentrates which we do so by analyzing its variance. Ultimately, we prove that the sample complexity of our generalized identity tester depends on the $\ell_2$-norm of $q$, the distribution that we have i.i.d.\@ sample access from. We then present a randomized flattening procedure in Section \ref{sec:flattening_closeness} which shows how to reduce the $\ell_2$-norm of $q$ to $O(1/\sqrt{k})$ where $k$ is a parameter in our algorithm. This randomized flattening procedure is slightly different than the one used in Section \ref{sec:flattening_id} since we do not know $q$ in advance. Therefore, we must use some samples from $q$ to towards this procedure. We present our algorithm below along with the main theorem, Theorem \ref{thm:closeness}, which proves the correctness of our algorithm. 

\begin{algorithm}[!ht]
	\SetKwInOut{Input}{Input}
	\SetKwInOut{Output}{Output}
	\Input {$n, \epsilon$, sample access to $q$, \text{Poi}(1) samples from each of $p_1, p_2, \cdots, p_s$}
	\Output{Accept or Reject}
	\DontPrintSemicolon
	$k \gets \frac{n^{2/3}}{\epsilon^{4/3}}$ \;
	Draw $\text{Poi}(k)$ samples from $q$ to perform the randomized flattening procedure given in Section \ref{sec:flattening_closeness}. \;
	$s \gets \frac{c_1n}{\epsilon^2\sqrt{k}}$ \;
	
	Draw $\text{Poi}(s)$ samples from $q$. \;
	Draw $\text{Poi}(1)$ samples from each of the $s$ distributions $\{p_j\}_{j=1}^s$. \;
	$Y_x \gets$ $\#$ times we see element $x \in [n]$ among the $\text{Poi}(s)$ samples from $q$ \;
	$T_x \gets$ $\#$ times we see element $x \in [n]$ among the samples from $\{p_j\}_{j=1}^s$. \;
	$\tau \gets \frac{5s^2 \epsilon^2}{8n}$ \;
	$Z \gets \sum_{x \in [n]} (T_x-Y_x)^2-T_x-Y_x$\;
	\If{$Z \ge \tau$ }{Output \reject and abort.}
	Output \accept
	\caption{$\textsc{Closeness-Tester}$}
	\label{alg:closeness-tester}
\end{algorithm}

\begin{theorem}[Correctness of \textsc{Closeness-Tester}]\label{thm:closeness}
	There exist a constant $c_1$ independent of $n$ such that the following statements hold with probability $2/3$:
	\begin{itemize}
		\item \textsc{Closeness-Tester} outputs \accept if each of the $s$ distributions $p_1, \cdots, p_s$ are $q$.
		\item \textsc{Closeness-Tester} outputs \reject if each of the $p_i$'s are $\epsilon$-far from $q$ ( $\|p_i-q\|_1 \ge \epsilon$) and $\{p_i\}_{i=1}^s$ satisfy the structural condition given in Definition \ref{def:structural}.
	\end{itemize}
\end{theorem}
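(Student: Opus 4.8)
The plan is to mirror the proof of Theorem~\ref{thm:id} almost line for line, replacing the deterministic term $sq(x)$ by the random count $Y_x$. By Poissonization, $T_x\sim\text{Poi}(\lambda_x)$ with $\lambda_x=\sum_{j=1}^s p_j(x)$, $Y_x\sim\text{Poi}(sq(x))$, and all of $T_1,\dots,T_n,Y_1,\dots,Y_n$ are mutually independent except that each coordinate $x$ keeps its own pair $(T_x,Y_x)$. Writing $Z_x=(T_x-Y_x)^2-T_x-Y_x$, a direct second-moment computation (using $\E[T_x^2]=\lambda_x^2+\lambda_x$, $\E[Y_x^2]=(sq(x))^2+sq(x)$) gives $\E[Z_x]=(\lambda_x-sq(x))^2$, and by the \textit{structural condition} $\lambda_x-sq(x)=\sum_j(p_j(x)-q(x))=(-1)^{x\in B}\sum_j e_j(x)$, so $\E[Z_x]=\bigl(\sum_j e_j(x)\bigr)^2$ and hence $\E[Z]=\bigl\|\sum_{j=1}^s\vec{\mathbf{e}}_j\bigr\|_2^2$ — the exact analogue of Lemma~\ref{lem:id_ev}, and equal to $0$ in the completeness case.

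Next I would prove the variance analogue of Lemma~\ref{lem:id_var}. By independence across coordinates $\Var[Z]=\sum_x\Var[Z_x]$, and I would expand $Z_x^2$ and substitute the first four factorial moments of $T_x$ and $Y_x$ together with their mutual independence. Collecting terms by the same bookkeeping as in Section~\ref{sec:id_var} — but now with two Poisson vectors — I expect a bound roughly of the form $\Var[Z]\lesssim s^2\|q\|_2^2+s\|q\|_2\,\bigl\|\sum_j\vec{\mathbf{e}}_j\bigr\|_4^2+\bigl\|\sum_j\vec{\mathbf{e}}_j\bigr\|_3^3$, possibly with one additional $\bigl\|\sum_j\vec{\mathbf{p}}_j\bigr\|_2^2$-type term; the point to watch is that the $q$-dependence enters only through $\|q\|_2$ and never a higher norm of $q$, which is the sense in which the analysis improves on \cite{DiakonikolasK:2016}. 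In the completeness case this collapses to $\Var[Z]=\Theta(s^2\|q\|_2^2)$.

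To conclude, I would first invoke the randomized flattening of Section~\ref{sec:flattening_closeness}, which spends $\text{Poi}(k)$ samples of $q$ to build (without knowing $q$) a map onto a domain of size $O(n)$ that preserves $\ell_1$-distance, keeps the structural condition (each block of coordinates coming from a fixed $x$ lands entirely in $A'$ or entirely in $B'$, as already observed for the deterministic flattening), and achieves $\|q\|_2^2=O(1/k)$ with high probability. In the completeness case, Chebyshev then gives $\Pr[Z\ge\tau]\le\Var[Z]/\tau^2\lesssim (s^2/k)\cdot(n/(s^2\epsilon^2))^2=n^2/(k s^2\epsilon^4)=1/c_1^2$ for $s=c_1 n/(\epsilon^2\sqrt k)$. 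In the soundness case, Cauchy--Schwarz and $\|p_j-q\|_1\ge\epsilon$ give $\E[Z]=\sum_x\bigl(\sum_j e_j(x)\bigr)^2\ge\frac1n\bigl(\sum_x\sum_j e_j(x)\bigr)^2\ge s^2\epsilon^2/n>\tau$ with margin $\E[Z]-\tau\ge\frac38\E[Z]$, so $\Pr[Z<\tau]\le\Pr[|Z-\E[Z]|\ge\frac38\E[Z]]\le\frac{(8/3)^2\Var[Z]}{\E[Z]^2}$, and I would bound $\Var[Z]/\E[Z]^2$ term by term exactly as in the soundness part of Theorem~\ref{thm:id}, using $\bigl\|\sum_j\vec{\mathbf{e}}_j\bigr\|_4^2\le\bigl\|\sum_j\vec{\mathbf{e}}_j\bigr\|_2^2$ and $\bigl\|\sum_j\vec{\mathbf{e}}_j\bigr\|_3^3\le\bigl\|\sum_j\vec{\mathbf{e}}_j\bigr\|_2^3$ (Cauchy--Schwarz, valid since the $e_j(x)$ are nonnegative), $\bigl\|\sum_j\vec{\mathbf{e}}_j\bigr\|_2^2\ge s^2\epsilon^2/n$, and $\|q\|_2^2=O(1/k)$; with $k=n^{2/3}/\epsilon^{4/3}$ and $s=c_1 n/(\epsilon^2\sqrt k)=c_1 n^{2/3}/\epsilon^{4/3}$ every term is $O(1/c_1)$, so a large enough $c_1$ drives the total failure probability below $1/3$. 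Adding the $\text{Poi}(k)$ flattening samples and the $\text{Poi}(s)$ samples of $q$ yields total sample complexity $O(n^{2/3}/\epsilon^{4/3}+\sqrt n/\epsilon^2)$, matching the lower bound of \cite{DiakonikolasK:2016}.

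The main obstacle is the variance estimate: because $Z$ is built from two independent Poisson vectors, $\E[Z_x^2]$ expands into many more monomials in $\lambda_x=\sum_j p_j(x)$ and $sq(x)$ than in the identity case, and the delicate step is reorganizing them so the $q$-contribution collapses to $\|q\|_2$ alone — any residual higher-norm-of-$q$ term would not be controllable to $O(1/k)$ by the flattening. A secondary point, since $q$ is unknown, is to verify that estimating the $\lfloor nq(x)\rfloor$-type bucket sizes from only $\text{Poi}(k)$ samples still produces, with high probability, a (random) flattening map that both preserves $\ell_1$-distance and gives $\|q'\|_2^2=O(1/k)$.
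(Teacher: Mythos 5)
Your proposal is correct and follows essentially the same approach as the paper: Poissonize, show $\E[Z]=\bigl\|\sum_j\vec{\mathbf{e}}_j\bigr\|_2^2$ via $\E[Z_x]=(\lambda_x-sq(x))^2$ and the structural condition, bound $\Var[Z]$ by the three terms $s\|q\|_2\bigl\|\sum_j\vec{\mathbf{e}}_j\bigr\|_4^2$, $\bigl\|\sum_j\vec{\mathbf{p}}_j\bigr\|_2^2$, $\bigl\|\sum_j\vec{\mathbf{e}}_j\bigr\|_3^3$ (the paper's Lemma~\ref{lem:closeness_var} — note your conjectured $s^2\|q\|_2^2$ term appears only as the completeness-case specialization of $\bigl\|\sum_j\vec{\mathbf{p}}_j\bigr\|_2^2$), then apply Chebyshev in each case and the randomized flattening to set $\|q\|_2^2=O(1/k)$. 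The two obstacles you flag are exactly the work done in Lemma~\ref{lem:closeness_var} and Section~\ref{sec:flattening_closeness}, and both resolve as you anticipate.
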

\begin{remark}
	The sample complexity of $\textsc{Closeness-Tester}$ is $\Theta(k + n/(\epsilon^2 \sqrt{k}))$ using the flattening procedure of Section~\ref{sec:flattening_closeness}. Optimizing in $k$, the sample complexity of \textbf{Closeness-Tester} reduces to $\Theta(n^{2/3}/\epsilon^{4/3} + \sqrt{n}/\epsilon^2)$ which is optimal since the same lower bound holds for closeness testing in the standard single distribution setting~\cite{DiakonikolasK:2016}.
\end{remark}

\begin{remark}\label{remark:unequal-sized-closeness}
    Note that given the above result, we may need fewer sources as long as we have more samples from the distribution $q$. In particular, suppose we use $k_1 = \Theta\left(\min\left(n^{2/3}/\epsilon^{4/3} + \sqrt{n}/\epsilon^2, n\right)\right)$ samples from $q$ for flattening, which implies that the $\ell_2$-norm of the  ``flattened'' $q$ is $O(1/\sqrt{k_1})$ with high probability. 
    Then, we can use $s = \Theta(n/(\sqrt{k_1} \, \epsilon^2)+ \sqrt{n}/\epsilon^2)$ sources, and $\Theta(s)$ samples from the ``flattened'' $q$ and use  \textbf{Closeness-Tester} to distinguish between the completeness and the soundness case. This is an sample-optimal trade off up to constant factors since the same lower bound holds in the standard single distribution setting~\cite{DiakonikolasK:2016}.
\end{remark}

\vspace{2mm}\noindent \textbf{Overview of the proof:\quad} 
To prove the correctness of $\textsc{Closeness-Tester}$, we analyze the statistic 
$$ Z = \sum_{x \in [n]}(T_x-Y_x)^2- T_x-Y_x $$
where $T_x$ denote the number of times we observe element $x$ among the samples from $\{p_i\}_{i=1}^s$ and $Y_x$ denotes the number of times we see element $x$ among the $\text{Poi}(s)$ samples from $q$. Note that by the poissonization method, $T_x$ is a Poisson random variable with parameter $\lambda_x = \sum_{j=1}^s p_j(x)$, similar to the identity testing section. Furthermore, $T_x$ and $T_y$ are independent for $x \ne y$ and furthermore, $Y_x$ is a Poisson random variable with parameter $s\,q(x)$. Our goal is to show $Z$ is below the threshold $\tau$ in the completeness case, and above the threshold in the soundness case. To do so, we make use of the \textit{structural condition} to first define a convenient representation of $\E[Z]$ in Lemma \ref{lem:closeness_ev}. We then show a strong concentration around its expectation by bounding the variance of $Z$ in Lemma \ref{lem:closeness_var}. These two lemmas are very similar to the ones proved in Section \ref{sec:id_testing} due to the similarity of the statistic $Z$ used here and in Section \ref{sec:id_testing}. Finally, we show that $Z$ is always on the desired side of the threshold proving the correctness of our algorithm.

\begin{proof}[Proof of Theorem \ref{thm:closeness}] Note that we set the number of samples to be $c_1n/(\epsilon^2 \sqrt{k})$ for a sufficiently large constant $c_1$, and the threshold $\tau$ to be equal to $5s^2\epsilon^2/(8n)$. We begin by stating a convenient representation of $\E[Z]$. To motivate our calculations, note that for a fixed $x$, 
	\begin{align*}
		\E[(T_x-Y_x)^2-T_x-Y_x] &= \E[T_x^2]-\E[T_x] + \E[Y_x^2]-\E[Y_x]-2\E[Y_x]\E[T_x]  \\
		&= \lambda_x^2 -2sq(x)\lambda_x + s^2q(x)^2
	\end{align*}
	where we have used the fact that $T_x$ and $Y_x$ are Poisson random variables with parameters $\lambda_x = \sum_{j=1}^s p_j(x)$ and $sq(x)$ respectively. Now using the \textit{structural condition}, we can define error terms similar to Sections \ref{sec:unif_testing} and \ref{sec:id_testing}. For each distribution $p_j$, we define
	\begin{align*}
		e_j(x) &= p_j(x)-q(x) \quad \quad  \forall x \in A \\
		e_j(x)& = q(x) - p_j(x) \quad \quad  \forall x \in B.
	\end{align*}
	After plugging in $e_j(x)$ for for all $x$ into our expression for $\lambda_x$, we again get terms of the form $\sum_j e_j(x)^2$ and cross terms $\sum_{j \ne k} e_j(x)e_k(x)$, similar to the proof of Theorem \ref{thm:id}. Our goal is to combine these terms into a more useful representation. We precisely show this in Lemma \ref{lem:closeness_ev} where we prove that $\E[Z]$ is given by  $\| \vec{\mathbf{e}}_1 + \cdots + \vec{\mathbf{e}}_s\|_2^2$
	where we interpret the vector $\vec{\mathbf{e}}_j \in \mathbb{R}^n$ as the vector with entries $e_j(x) = |q(x)-p_j(x)|$. Note that this is a natural generalization of the quantity $s^2\|q-p\|_2^2$. More formally, we have the following lemma which we prove in Section \ref{sec:closeness_ev}.
	\begin{restatable}{lemma}{lemevCLOSENESS}\label{lem:closeness_ev}
		Let $\{p_i\}_{i=1}^s$ be distributions over $[n]$ that satisfy the structural condition given in Definition \ref{def:structural}. Suppose we draw $\textup{Poi}(1)$ samples from each $p_i$ and $\textup{Poi}(s)$ samples from $q$ and let $T_x$ be the number of times we see element $x \in [n]$ among the samples among the $p_i$, and let $Y_x$ be the number of times we see element $x \in [n]$ among the samples from $q$. Let $Z = \sum_{x \in [n]} (T_x-y_x)^2-T_x-Y_x$. Then
		$$ \E[Z]= \| \vec{\mathbf{e}}_1 + \cdots + \vec{\mathbf{e}}_s\|_2^2$$
		where $\vec{\mathbf{e}}_j \in \mathbb{R}^n$ has coordinates $|q(x)-p_j(x)|$.
	\end{restatable}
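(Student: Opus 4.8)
The plan is to mirror the proof of Lemma~\ref{lem:id_ev}, working one coordinate $x \in [n]$ at a time and exploiting the independence of the $T_x$'s and $Y_x$'s granted by poissonization. Write $Z_x = (T_x - Y_x)^2 - T_x - Y_x$, so that $Z = \sum_{x \in [n]} Z_x$. Since $T_x \sim \text{Poi}(\lambda_x)$ with $\lambda_x = \sum_{j=1}^s p_j(x)$ and $Y_x \sim \text{Poi}(sq(x))$ are independent, I would first compute $\E[Z_x]$ using $\E[T_x^2] = \lambda_x^2 + \lambda_x$ and $\E[Y_x^2] = s^2 q(x)^2 + sq(x)$. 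Expanding $(T_x - Y_x)^2 = T_x^2 - 2T_xY_x + Y_x^2$ and using independence to get $\E[T_xY_x] = \lambda_x \, sq(x)$, the linear correction terms $-T_x - Y_x$ are exactly what is needed to absorb the $+\lambda_x$ and $+sq(x)$ coming from the Poisson second moments, leaving $\E[Z_x] = \lambda_x^2 - 2\lambda_x sq(x) + s^2 q(x)^2 = (\lambda_x - sq(x))^2$.

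Next I would invoke the \textit{structural condition} to rewrite $\lambda_x - sq(x)$. Using the notation $(-1)^{x \in B}$ (equal to $-1$ when $x \in B$ and $+1$ otherwise), as in the proof of Lemma~\ref{lem:id_ev}, the definition of the $e_j$'s gives $p_j(x) = q(x) + (-1)^{x \in B} e_j(x)$ for every $j$, hence $\lambda_x = sq(x) + (-1)^{x \in B} \sum_{j=1}^s e_j(x)$ and therefore $(\lambda_x - sq(x))^2 = \bigl( \sum_{j=1}^s e_j(x) \bigr)^2$, with the sign disappearing upon squaring. Summing over all $x \in [n]$ then yields
\[
\E[Z] = \sum_{x \in [n]} \left( \sum_{j=1}^s e_j(x) \right)^2 = \left\| \sum_{j=1}^s \vec{\mathbf{e}}_j \right\|_2^2 = \| \vec{\mathbf{e}}_1 + \cdots + \vec{\mathbf{e}}_s \|_2^2,
\]
which is the claimed identity.

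I do not expect a genuine obstacle here: the computation is essentially the one behind Lemma~\ref{lem:id_ev}, the only new wrinkle being that the deterministic term $sq(x)$ appearing in the identity-testing statistic is replaced by the random variable $Y_x$. The point worth double-checking is that the extra second-moment contribution $\E[Y_x^2] - \E[Y_x] = s^2 q(x)^2$ collapses to exactly the same clean form as in the identity case, and that the $-Y_x$ summand baked into the definition of $Z$ is precisely what makes this happen (just as $-T_x$ does for the $T_x$ half). The remaining care is purely bookkeeping: keeping the sign convention $(-1)^{x \in B}$ consistent across the two halves $A$ and $B$ of the domain.
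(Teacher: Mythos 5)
Your proposal is correct. It matches the paper at the high level — poissonization gives independent Poisson counts, the second-moment identity $\E[T^2]-\E[T]=\lambda^2$ absorbs the $-T_x$ and $-Y_x$ corrections, and the structural condition lets you express everything in terms of the $e_j$'s — but the final algebraic step is cleaner than what the paper actually does. The paper reduces to $\E[Z_x] = \lambda_x^2 - 2sq(x)\lambda_x + s^2 q(x)^2$ and then (by invoking the proof of Lemma~\ref{lem:id_ev}) expands $\sum_x \lambda_x^2$ and $\sum_x q(x)\lambda_x$ via the appendix moment calculations, letting cross terms of the form $q(x)e_j(x)$ cancel pairwise. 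You instead notice that $\E[Z_x]$ factors as $(\lambda_x - sq(x))^2$, and since $p_j(x) = q(x) + (-1)^{x\in B}e_j(x)$ one has $\lambda_x - sq(x) = (-1)^{x\in B}\sum_j e_j(x)$, so the square kills the sign and the cancellation becomes a one-liner. This is the observation the paper even telegraphs in the overview of Theorem~\ref{thm:id} but does not carry through in the actual proof, so your route avoids the Appendix~\ref{sec:moment_calcs} machinery entirely for the expectation; both arrive at $\sum_x \bigl(\sum_j e_j(x)\bigr)^2 = \|\sum_j \vec{\mathbf{e}}_j\|_2^2$. No gaps.
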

	We now give a tight upper bound for the variance of our statistic $Z$. Defining $Z_x = (T_x-Y_x)^2-T_x-Y_x$ and recalling the poissonization method, we see that 
	$$\Var[Z] = \sum_{x \in [n]} \Var[Z_x]  = \sum_{x \in [n]} \E[Z_x^2] - \E[Z_x]^2. $$
	Expanding $Z_x^2$, we get terms involving $\lambda_x^k$ for $k \in \{1,2,3,4\}$. Combining these terms, we again get an upper bound of $\Var[Z]$ in terms of the vectors $\vec{\mathbf{e}}_j$. Formally, we prove the following lemma in Section \ref{sec:closeness_var}.
	
	\begin{restatable}{lemma}{lemvarCLOSENESS} \label{lem:closeness_var}
		Let $\{p_i\}_{i=1}^s$ be distributions over $[n]$ that satisfy the structural condition given in Definition \ref{def:structural}. Suppose we draw $\textup{Poi}(1)$ samples from each $p_i$ and $\textup{Poi}(s)$ samples from $q$ and let $T_x$ be the number of times we see element $x \in [n]$ among the samples among the $p_i$, and let $Y_x$ be the number of times we see element $x \in [n]$ among the samples from $q$. Let $Z = \sum_{x \in [n]} (T_x-y_x)^2-T_x-Y_x$. Then
		$$ \textup{Var}(Z) \le 8s \|q\|_2 \left \| \sum_{j=1}^s \vec{\mathbf{e}}_j \right \|_4^2 + 8 \left \| \sum_{j=1}^s \vec{\mathbf{p}}_j \right \|_2^2 + 4  \left \| \sum_{j=1}^s \vec{\mathbf{e}}_j \right \|_3^3$$
		where $\vec{\mathbf{e}}_j \in \mathbb{R}^n$ has coordinates $|q(x)-p_j(x)|$ and  $\vec{\mathbf{p}}_j \in \mathbb{R}^n$ has coordinates $p_j(x)$.
	\end{restatable}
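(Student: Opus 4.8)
The plan is to follow the route of the proof of Lemma~\ref{lem:id_var}, the only structural change being that the deterministic quantity $sq(x)$ is replaced everywhere by the Poisson random variable $Y_x$.

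First I would reduce to a sum of independent terms. By the poissonization, the variables $\{T_x\}_{x\in[n]}\cup\{Y_x\}_{x\in[n]}$ are mutually independent, with $T_x\sim\text{Poi}(\lambda_x)$ for $\lambda_x=\sum_{j=1}^s p_j(x)$ and $Y_x\sim\text{Poi}(sq(x))$. Writing $Z_x:=(T_x-Y_x)^2-T_x-Y_x$, the summands are independent, so
$$\Var[Z]=\sum_{x\in[n]}\Var[Z_x]=\sum_{x\in[n]}\bigl(\E[Z_x^2]-\E[Z_x]^2\bigr),$$
and from the proof of Lemma~\ref{lem:closeness_ev} we already have $\E[Z_x]=(\lambda_x-sq(x))^2$. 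To get $\E[Z_x^2]$, expand $Z_x^2$ and use $T_x\perp Y_x$: every monomial factors into a product of a moment of $T_x$ and a moment of $Y_x$, each of degree $\le 4$, so substituting the closed forms $\E[\text{Poi}(\mu)^k]$, $k\le 4$ (with $\mu=\lambda_x$ or $\mu=sq(x)$), gives a polynomial in $\lambda_x$ and $sq(x)$; equivalently, substituting $\lambda_x=sq(x)+(-1)^{x\in B}E(x)$ with $E(x):=\sum_{j=1}^s e_j(x)\ge 0$ and summing over $x$ via the formulas for $\sum_x\lambda_x^k$ in Appendix~\ref{sec:moment_calcs} gives the same thing. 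Either way the computation collapses to the identity
$$\Var[Z_x]=4\,(\lambda_x-sq(x))^2(\lambda_x+sq(x))+2\,(\lambda_x+sq(x))^2;$$
in contrast with the identity-testing case, the quartic contribution $(\lambda_x-sq(x))^4$ inside $\E[Z_x^2]$ is exactly cancelled by $\E[Z_x]^2$, so no $\ell_4^4$ residue survives.

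Next I would bound $\sum_x\Var[Z_x]$ by the three claimed norm expressions. Using $(\lambda_x-sq(x))^2=E(x)^2$ and $\lambda_x+sq(x)=2sq(x)+(-1)^{x\in B}E(x)\le 2sq(x)+E(x)$ (the structural condition, plus $e_j(x)\ge 0$), the first term obeys
$$4\sum_x E(x)^2(\lambda_x+sq(x))\le 8s\sum_x q(x)E(x)^2+4\sum_x E(x)^3\le 8s\|q\|_2\Bigl\|\sum_{j=1}^s\vec{\mathbf{e}}_j\Bigr\|_4^2+4\Bigl\|\sum_{j=1}^s\vec{\mathbf{e}}_j\Bigr\|_3^3$$
by Cauchy--Schwarz, exactly as in Lemma~\ref{lem:id_var}. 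For the second term $2\sum_x(\lambda_x+sq(x))^2$, I would regroup using $p_j(x)=q(x)+(-1)^{x\in B}e_j(x)$ so that $\lambda_x=\sum_j p_j(x)$ reappears and the sum reassembles, up to the stated constant, into $\|\sum_j\vec{\mathbf{p}}_j\|_2^2=\sum_x\lambda_x^2$; here one uses that $\lambda_x\ge sq(x)$ on $A$ and $E(x)\le sq(x)$ on $B$. Combining the two bounds yields the claim.

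The main obstacle is this last step, the passage from $\sum_x(\lambda_x+sq(x))^2$ to $\|\sum_j\vec{\mathbf{p}}_j\|_2^2$. Coordinatewise the naive bound $(\lambda_x+sq(x))^2\le 2\lambda_x^2+2s^2q(x)^2$ leaks an additive term of order $s^2\|q\|_2^2$ (equivalently a $\|\sum_j\vec{\mathbf{e}}_j\|_2^2$ term) that is \emph{not} in general dominated by $\|\sum_j\vec{\mathbf{p}}_j\|_2^2$; the structural condition — that $\lambda_x-sq(x)$ has a fixed sign on each of $A$, $B$, and that $E(x)\le sq(x)$ on $B$ — is precisely what is needed to show this leakage is harmless: either it is absorbed into the constant $8$ in front of $\|\sum_j\vec{\mathbf{p}}_j\|_2^2$, or, when it is not, it is of strictly lower order than $\E[Z]^2=\|\sum_j\vec{\mathbf{e}}_j\|_2^4\ge(s^2\epsilon^2/n)^2$ and so does not affect the Chebyshev step in the proof of Theorem~\ref{thm:closeness} (after the flattening of Section~\ref{sec:flattening_closeness}). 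Everything else is the mechanical moment expansion of $\E[Z_x^2]$, longer than in the identity case only because $Z_x$ now involves two Poisson variables rather than one.
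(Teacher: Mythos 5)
Your route is essentially the paper's: same decomposition $\Var[Z]=\sum_x\Var[Z_x]$, same Poisson moment computations, same bounding scheme for the $\|\cdot\|_4^2$ and $\|\cdot\|_3^3$ terms. The closed form you give,
$$\Var[Z_x]=4\,(\lambda_x-sq(x))^2(\lambda_x+sq(x))+2\,(\lambda_x+sq(x))^2,$$
is correct (it matches the paper's expanded $\E[Z_x^2]-\E[Z_x]^2$ term by term) and is a cleaner way to organize the algebra than the paper's appendix-driven expansion; it also makes transparent the observation, which the paper reaches only implicitly, that the $(\lambda_x-sq(x))^4$ contribution cancels against $\E[Z_x]^2$. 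The bound on the first summand, $4\sum_x E(x)^2(\lambda_x+sq(x))\le 8s\|q\|_2\|\sum_j\vec{\mathbf{e}}_j\|_4^2+4\|\sum_j\vec{\mathbf{e}}_j\|_3^3$, is fine and is the same Cauchy--Schwarz step as in Lemma~\ref{lem:id_var}.

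The obstacle you flag at the second summand, namely passing from $2\sum_x(\lambda_x+sq(x))^2$ to $8\|\sum_j\vec{\mathbf{p}}_j\|_2^2$, is genuine and you should not expect to close it: the inequality is false as stated, and so, therefore, is the lemma with the constant $8$. The paper asserts
$$8s^2\|q\|_2^2+8s\sum_{j,x}(-1)^{x\in B}q(x)e_j(x)+2\sum_x E(x)^2\ \le\ 8\|\textstyle\sum_j\vec{\mathbf{p}}_j\|_2^2$$
``similar to Lemma~\ref{lem:id_var},'' but in Lemma~\ref{lem:id_var} the analogous combination is an \emph{exact} rewriting of $2\|\sum_j\vec{\mathbf{p}}_j\|_2^2$ (same coefficients on both sides), whereas here the $(-1)^{x\in B}$ cross term on the left has coefficient $8s$ while the right side supplies $16s$; the claimed inequality reduces to $8s\sum_x(-1)^{x\in B}q(x)E(x)+6\sum_xE(x)^2\ge 0$, which fails whenever the $B$-mass is large, $E$ on $B$ is small relative to $sq$, and the $A$-element has negligible $q$-mass. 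Concretely: $n=3$, $s=1$, $q=(0.01,0.495,0.495)$, $p_1=(0.11,0.445,0.445)$, $A=\{1\}$, $B=\{2,3\}$. Then $\Var[Z]=\Var[Z_1]+2\Var[Z_2]\approx 0.034+2(1.777)\approx 3.587$, while $8s\|q\|_2\|\vec{\mathbf{e}}_1\|_4^2+8\|\vec{\mathbf{p}}_1\|_2^2+4\|\vec{\mathbf{e}}_1\|_3^3\approx 0.059+3.265+0.005\approx 3.330<3.587$. (One can also check that no finite replacement for the constant $8$ rescues a bound of this form for all admissible inputs.)

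So neither you nor the paper actually proves the statement; you are ahead of the paper in that you recognized the step was shaky. Your second escape hatch is the right one: replace the term $8\|\sum_j\vec{\mathbf{p}}_j\|_2^2$ by, say, $8\|\sum_j\vec{\mathbf{p}}_j\|_2^2+8s^2\|q\|_2^2$ (which follows immediately from $(\lambda_x+sq(x))^2\le 2\lambda_x^2+2s^2q(x)^2$ and needs no structural condition), and then observe that after the flattening of Section~\ref{sec:flattening_closeness} the extra $O(s^2\|q\|_2^2)=O(s^2/k)$ is dominated by $\E[Z]^2\ge s^4\epsilon^4/n^2$ when $s=c_1n/(\epsilon^2\sqrt{k})$ for large $c_1$; this leaves the proof of Theorem~\ref{thm:closeness} intact. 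You should restate the lemma with the corrected right-hand side rather than try to recover the one in the paper.
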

	We can now proceed to the proof of the theorem in the completeness case. 
	\vspace{2mm}\noindent \textbf{Proof of the Completeness Case:\quad} 
	In this case, Lemma \ref{lem:closeness_ev} gives us $\E[Z] = 0$ and Lemma \ref{lem:closeness_var} gives us $\Var[Z] \le 8s^2\|q\|_2^2$. Therefore by Chebyshev's inequality,
	$$\Pr[|Z| \ge \tau] =  \Pr\left[|Z| \ge \frac{s^2\epsilon^2}{4n} \right] \le \frac{128s^2\|q\|_2^2n^2}{s^4 \epsilon^4} = \frac{128\|q\|_2^2n^2}{s^2\epsilon^4}.$$
	The right hand side of the above inequality can be made arbitrarily small by letting $s = c_1n \|q\|_2/\epsilon^2$ for a sufficiently large constant $c_1$. Due to our randomized flattening procedure of Section \ref{sec:flattening_closeness}, we can assume that $\|q\| = O(1/\sqrt{k})$. Therefore, we can make the above probability bound arbitrarily small by letting $s = c_1n /(\epsilon^2\sqrt{k})$ for a sufficiently large constant $c_1$.
	
	\vspace{2mm}\noindent \textbf{Proof of the Soundness Case:\quad} In this case, 
	$$ \E[Z] = \| \vec{\mathbf{e}}_1 + \cdots + \vec{\mathbf{e}}_s\|_2^2 = \sum_{x \in [n]} \left( \sum_{j=1}^s e_j(x) \right)^2 \ge \frac{1}n \left( \sum_{j=1}^s \sum_{x \in [n]} e_j(x) \right)^2 \ge \frac{s^2\epsilon^2}n $$ where the first inequality is Cauchy-Schwarz and the last inequality follows from our assumption about the error terms $e_j(x)$ in the soundness case. Then by Chebyshev's inequality,
	\begin{align*}
		\Pr\left[ |Z - \E[Z] | \ge \frac{\E[Z]}{4} \right] &\le \frac{16 \Var[Z]}{\E[Z]^2} \\
		& \le  \frac{8s \|q\|_2\left \| \sum_{j=1}^s \vec{\mathbf{e}}_j \right \|_4^2}{\left \| \sum_{j=1}^s \vec{\mathbf{e}}_j \right\|_2^4} +   \frac{8 \left \| \sum_{j=1}^s \vec{\mathbf{p}}_j \right \|_2^2}{\left \| \sum_{j=1}^s \vec{\mathbf{e}}_j \right\|_2^4} +  \frac{4  \left \| \sum_{j=1}^s \vec{\mathbf{e}}_j \right \|_3^3}{\left \| \sum_{j=1}^s \vec{\mathbf{e}}_j \right\|_2^4}.
	\end{align*}
	Note that the right hand side of the above inequality is identical to the right hand side of Inequality \eqref{eq:chebyshevbound} that appears in the probability calculation in the proof of Theorem \ref{thm:id}. Using the identical bounds given there, we arrive at the following inequality:
	$$ \Pr\left[ |Z - \E[Z] | \ge \frac{\E[Z]}{4} \right] \le C\left( \frac{n \|q\|_2}{s \epsilon^2} + \left( \frac{n \|q\|}{s \epsilon^2} \right)^2  + \left( \frac{1}{s \|q\|_2} \right)^2 + \frac{\sqrt n}{s \, \epsilon}\right)$$
	for some constant $C$. Note that if we let $s = c_1n \|q\|_2/\epsilon^2$ for a sufficiently large constant $c_1$ and use the fact that $\|q\|_2 \ge 1/\sqrt{n}$, we have that the above probability is smaller than $1/3$. Again using the randomized flattening procedure of Section \ref{sec:flattening_closeness}, we see that we can let $s = c_1 n /(\epsilon^2 \sqrt{k})$. Hence with probability at least $2/3$, we know $Z \ge 3s^2\epsilon^2/(4n)$ in the soundness case so we reject with probability at least $2/3$, as desired.
\end{proof}
\subsection{Randomized Flattening Procedure}\label{sec:flattening_closeness}
in this section, we present our randomized flattening procedure. Let $k$ be some fixed parameter (which is an input to \textbf{Closeness-Tester}). We show that we can assume that $\|q\|_2^2 = O(1/k)$ for Section \ref{sec:closeness_testing} without loss of generality where $q$ is the distribution that we have i.i.d.\@ sample access to. This procedure is similar to the one used in \cite{DiakonikolasK:2016} for single distribution closeness testing.

First, suppose that we draw $\text{Poi}(k)$ i.i.d.\@ samples from $q$. Then for each $x \in [n]$, define $b_x$ to be the number of instances of element $x \in [n]$ that we see among these samples plus $1$. Note the resemblance between this definition and the one given in the flattening procedure for identity testing in Section \ref{sec:flattening_id}.
Now given a sample $x$ from a distribution $p$ over $[n]$, we can get a sample from the `flattened' distribution $p'$ over
$$ \mathcal{D} = \{(x,y) \mid x \in [n], y \in [b_x] \}$$
by drawing an element from $y \in [b_x]$ uniformly at random and creating the tuple $(x,y)$. This is the flattening procedure that we use for our generalized version closeness testing. Note that the probability mass over $[n]$ placed by $p$ gets `flattened' to be a probability distribution over $\mathcal{D}$, hence the name. The size of this new domain is $n+k$. We can calculate that this procedure preserves the $\ell_1$-distance:
$$ \|q' - p'\|_1 = \sum_{x \in [n]} \sum_{y \in [b_x]} \frac{|q(x)-p(x)|}{|b_x|}  = \sum_{x \in [n]} |q(x)-p(x)| = \|q-p\|_1.$$
Furthermore,
$$ \E[ \|q'\|_2^2] = \E \left[ \sum_{x \in [n]} \sum_{y \in [b_x]} \frac{q(x)^2}{b_x^2} \right] \le \sum_{x \in [n]}  q(x)^2\E \left[ 1/b_x \right].$$
By the poissonization method, we know that $b_x$ is distributed as $1+Z$ where $Z$ is a $\text{Poi}(kq(x))$ random variable. Therefore, similar to~\cite{DiakonikolasK:2016}, we have:
$$ \E\left[ 1/(Z+1) \right] = \E\left[ \int_0^1 s^z \ ds \right] = \int_0^1 \E[s^z] \ ds = \int_0^1 e^{kq(x)(s-1)} \ ds \le \frac{1}{kq(x)}$$
where we have used the probability generating function for a Poisson random variable. This gives us
$$ \E[ \|q'\|_2^2]\le \sum_{x \in [n]} \frac{q(x)}{k} \le \frac{1}k.$$
Hence by Markov's inequality, we can say that $\|q'\|_2^2 = O(1/k)$ holds with an arbitrarily large, constant probability. Now whenever we get a sample over $[n]$, we can use this flattening procedure to draw a sample over $\mathcal{D}$. Furthermore, by using this flattening procedure to draw samples from a slightly larger domain, we can assume that $\|q\|_2^2 = O(1/k)$ in Section \ref{sec:closeness_testing} at the expense of losing a negligible factor in our error probability. 

Note that the size of the larger domain is $O(n+k) = O(n)$ if we pick $k \le n$. Therefore, combining with Theorem \ref{thm:closeness}, we show that we can perform closeness testing with multiple sources by using 
$$O\left(k + \frac{n\|q\|_2}{\epsilon^2}\right) = O\left(k + \frac{n}{\epsilon^2\sqrt{k}}\right) = O\left(\frac{n^{2/3}}{\epsilon^{4/3}} + \frac{\sqrt{n}}{\epsilon^2}\right) $$
samples after optimizing the value of $k$. This sample complexity is optimal since a matching lower bound holds for the single distribution closeness testing setting.
\subsection{Proof of Lemma \ref{lem:closeness_ev}} \label{sec:closeness_ev}
\lemevCLOSENESS*
\begin{proof}
	Note that $T_x$ is a Poisson random variable with parameter $\lambda_x = \sum_{j=1}^s p_j(x)$ and $Y_x$ is a Poisson random variable with parameter $sq(x)$. Let
	$$Z_x = (T_x-Y_x)^2- T_x-Y_x.$$
	We can compute that 
	\begin{align*}
		\E[Z_x] &= \E[T_x^2]-\E[T_x] + \E[Y_x^2]-\E[Y_x]-2\E[Y_x]\E[T_x]  \\
		&= \lambda_x^2 -2sq(x)\lambda_x + s^2q(x)^2.
	\end{align*}
	This is the same as the expected value of the variable $Z_x$ in Lemma \ref{lem:id_ev}. Therefore, the same computations hold and we arrive at
	$$ \E[Z] = \sum_{x \in [n]} \E[Z_x]  = \| \vec{\mathbf{e}}_1 + \cdots + \vec{\mathbf{e}}_s\|_2^2,$$
	as desired.
\end{proof}

\subsection{Proof of Lemma \ref{lem:closeness_var}} \label{sec:closeness_var}
\lemvarCLOSENESS*
\begin{proof}
	Define
	$$Z_x = (T_x-Y_x)^2- T_x-Y_x.$$
	Due to the independence of $T_x$ and $Y_x$, we have
	$$\text{Var}(Z) = \sum_{x \in [n]} \text{Var}(Z_x)  = \sum_{x \in [n]} \E[Z_x^2] - \E[Z_x]^2. $$
	Noting that $T_x$ is Poisson with parameter $\lambda_x = \sum_{j=1}^sp_j(x)$ and $Y_x$ is Poisson with parameter $sq(x)$, we can compute that
	\begin{align*}
		\E[Z_x^2] &= \lambda_x^4 + 4\lambda_x^3(1-sq(x)) + 2 \lambda_x^2(3s^2q(x)^2-2sq(x)+1 ) \\
		& + 4sq(x) \lambda_x(1-sq(x) -s^2q(x)^2) + s^4q(x)^4 + 4s^3q(x)^3+2s^2q(x)^2.
	\end{align*}
	Using the formula for $\lambda_x^k$ for  $k \in \{1,2,3,4\}$ given in Appendix \ref{sec:moment_calcs} and simplifying, we arrive at the following expression:
	\begin{align*}
		&\sum_{x \in [n]} \E[Z_x^2] = 8s \sum_{j = 1}^s \sum_{x \in [n]} q(x)e_j(x)^2 + 8s \sum_{j\ne k} \sum_{x \in [n]} q(x)e_j(x)e_k(x) +8s^2\|q\|_2^2  \\
		&+ 8s \sum_{j=1}^s \sum_{x \in [n]}(-1)^{x \in B}q(x)e_j(x) + 2 \sum_{j=1}^s \sum_{x \in [n]} e_j(x)^2 + 2 \sum_{j \ne k} \sum_{x \in [n]} e_j(x)e_k(x) \\
		&+ 4 \sum_{j=1}^s \sum_{x \in [n]} (-1)^{x \in B}e_j(x)^3 + 12 \sum_{j \ne k}  \sum_{x \in [n]} (-1)^{x \in B}e_j(x)^2e_k(x)  \\
		&+ 4\sum_{j \ne k \ne \ell} \sum_{x \in [n]} (-1)^{x \in B} e_j(x)e_k(x)e_{\ell}(x) + \Bigg ( \sum_{j=1}^s \sum_{x \in [n]} e_j(x)^4 + 6\sum_{j \ne k \ne \ell}e_j(x)^2e_k(x)e_\ell(x)  \\
		&+ 4\sum_{j \ne k}e_j(x)^3e_k(x) + 3\sum_{j \ne k}e_j(x)^2e_k(x)^2 + \sum_{j \ne k \ne \ell \ne t} e_j(x)e_k(x)e_\ell(x)e_t(x) \Bigg ).
	\end{align*}
	Similar to Lemma \ref{lem:id_var}, we have
	$$
	8s \sum_{j = 1}^s \sum_{x \in [n]} q(x)e_j(x)^2 + 8s \sum_{j\ne k} \sum_{x \in [n]} q(x)e_j(x)e_k(x) \le 8s \|q\|_2 \| \vec{\mathbf{e}}_1 + \cdots + \vec{\mathbf{e}}_s \|_4^2,
	$$
	and
	\begin{align*}
		&8s^2\|q\|_2^2 + 8s \sum_{j=1}^s \sum_{x \in [n]}(-1)^{x \in B}q(x)e_j(x) + 2 \sum_{j=1}^s \sum_{x \in [n]} e_j(x)^2 + 2 \sum_{j \ne k} \sum_{x \in [n]} e_j(x)e_k(x) \\
		&\le 8 \| \vec{\mathbf{p}}_1 + \cdots \vec{\mathbf{p}}_s\|_2^2,
	\end{align*}
	and finally,
	\begin{align*}
		&4 \sum_{j=1}^s \sum_{x \in [n]} (-1)^{x \in B}e_j(x)^3 + 12 \sum_{j \ne k}  \sum_{x \in [n]} (-1)^{x \in B}e_j(x)^2e_k(x) \\
		&+ 4\sum_{j \ne k \ne \ell} \sum_{x \in [n]} (-1)^{x \in B} e_j(x)e_k(x)e_{\ell}(x) = 4 \|\vec{\mathbf{e}}_1 + \cdots + \vec{\mathbf{e}}_s \|_3^3.
	\end{align*}
	As in Lemma \ref{lem:id_var}, the last expression inside the parenthesis of $\E[Z_x^2]$ is precisely $$\sum_{x \in [n]} \left( \sum_{j=1}^s e_j(x) \right)^4.$$ Therefore,
	$$ \sum_{x \in [n]} \E[Z_x^2] \le 8s \|q\|_2 \left \| \sum_{j=1}^s \vec{\mathbf{e}}_j \right \|_4^2 + 8 \left \| \sum_{j=1}^s \vec{\mathbf{p}}_j \right \|_2^2 + 4  \left \| \sum_{j=1}^s \vec{\mathbf{e}}_j \right \|_3^3 + \sum_{x \in [n]} \left( \sum_{j=1}^s e_j(x) \right)^4.$$ 
	The same calculation as in Lemma \ref{lem:id_var} gives us 
	$$ \sum_{x \in [n]} \E[Z_x]^2 = \sum_{x \in [n]} \left( \sum_{j=1}^s e_j(x)^2 + \sum_{j \ne k} e_j(x)e_k(x) \right)^2 = \sum_{x \in [n]} \left( \sum_{j=1}^s e_j(x) \right)^4 $$
	so altogether,
	\begin{equation}
	\text{Var}(Z) \le 8s \|q\|_2 \left \| \sum_{j=1}^s \vec{\mathbf{e}}_j \right \|_4^2 + 8 \left \| \sum_{j=1}^s \vec{\mathbf{p}}_j \right \|_2^2 + 4  \left \| \sum_{j=1}^s \vec{\mathbf{e}}_j \right \|_3^3.
	\end{equation}
\end{proof}

\section{Failure of de Finetti's Theorem with Sublinear Number of Samples} \label{sec:definetti}

in this section, we prove Theorem \ref{thm:definettiex}. Recall the statement of Theorem \ref{thm:definettiex}.

\definettiex*

Note that an algorithm can turn samples $Y_1, \cdots, Y_s$ into an exchangeable sequence $X_1, \cdots, X_s$ by permuting randomly. In this section, we given an example of samples $Y_1, \cdots, Y_s$ such that after permuting them to turn them into an exchangeable sequence $X_1, \cdots, X_s$ , the exchangeable sequence is `far' from a mixture of product distributions. The main idea behind Theorem \ref{thm:definettiex} is based on Proposition $31$ in \cite{main_exchangeability}. Essentially, Diaconis and Freedman show in \cite{main_exchangeability} that a Polya's urn process generates an exchangeable sequence that is far from the mixture of any product distributions. This example does not quite work in our case since we would like the  \textit{structural condition} to hold. Therefore, we adapt the Polya's urn idea by partitioning our domain into a `large' set and a `small' set. We then apply a Polya's urn type process on the small set so that no collisions can happen on it. This results in an event $\mathcal{E}_1$ which we use to lower bound the distance from the distribution of our exchangeable sequence to any mixture of product distributions. However, we need an additional event $\mathcal{E}_2$ to deal with the large set. Combining these two events allows us to prove Theorem \ref{thm:definettiex}. This overview is formalized below.

\begin{proof}[Proof of Theorem \ref{thm:definettiex}]
	Let $\epsilon = 1/3$ and let $s$ be the number of samples required by our uniformity tester in Algorithm \ref{alg:uniformity-tester} for this value of $\epsilon$. In particular, $s^2 = Cn$ for some constant $C > 10$ and define $\delta$ as $ \delta = 1/C$. We now construct distributions $\{q_i\}_{i=1}^s$ as follows:
	$$ 
	q_i(x) = \begin{cases}
	1-\delta/20-s/n, &x = 1 \\
	1/n, & x \in  \{2 , \cdots, s+1\} \setminus \{i+1\} \\
	\delta/20, & x  = i+1.
	\end{cases}
	$$
	Note that all distributions are supported only on $\{1, \cdots, s+1\}$. We can check that for large enough $n$, we have $\|q_i-\UU_n\|_1 \ge 1/3$ for all $i$. We then draw sample $Y_i$ independently from $q_i$. Note that $A = \{1, \cdots, s+1\}$ and $B = [n] \setminus \{1, \cdots, s+1\}$ for the \textit{structural condition} in Definition \ref{def:structural}. In other words, all the distributions are larger than uniform on $A$ and smaller than uniform on $B$ so $\{q_i\}_{i=1}^s$ satisfy the conditions of the theorem.
	
	Now let $\{X_i\}_{i=1}^s$ be an exchangeable sequence derived from $\{Y_i\}_{i=1}^s$ (for example, by permuting them randomly). Let $p$ be the distribution of $(X_1, \cdots, X_s)$. Consider the following two events:
	\begin{align*}
		\mathcal{E}_1 &= \text{Event that }X_i =  X_j \in \{2, \cdots, s+1\} \text{ for some } i \ne j \\
		\mathcal{E}_2 &= \text{Event that at least }s(1-\delta^2/2) \text{ of the } X_i\text{'s are equal to }1. 
	\end{align*}
	We first compute $p(\mathcal{E}_1)$. Note that for any $i \ne j$, we have
	$$\mathbb{P}(X_i = X_j \in \{2, \cdots, s+1\}) \le \frac{\delta}{10n} + \frac{s}{n^2}. $$
	Therefore by a union bound, the probability that there exists some $i \ne j$ such that $\mathcal{E}_1$ holds is at most
	$$ \frac{C\delta}{10} + \frac{s^3}{n^2} \le \frac{1}{9} $$
	for sufficiently large $n$. We now compute $p(\mathcal{E}_2)$. We know that  $\mathbb{P}(X_i = 1) \le 1-\delta/20$ for all $i$. Therefore, the number of $1$'s among the $X_i$'s is sum of $s$ Bernoulli random variables with parameters at most $1-\delta/20$ and hence, the expected number of $1$'s is at most $s(1-\delta/20)$. Then by a Chernoff bound, we have $\mathbb{P}(\mathcal{E}_2) \le 1/100$ if we take $n$ (and therefore $s$) sufficiently large. 
	
	Now for a distribution $p_k$ over $[n]$, we let $p_k^s$ be the distribution of $s$ independent picks from $p_k$. 
	Let $M = \sum_k w_k p_k^s$ be any mixture of product distributions $p_k^s$. We wish to show that $\|p-M\|_1 = \Omega(1)$ for any $M$. 
	Note that all of the $\{q_i\}_{i=1}^s$ are only supported on $\{1, \cdots, s+1\}$. Therefore, we can assume without loss of generality that the $p_k$ is also supported only on $\{1, \cdots, s+1\}$ for all $k$. Now consider a single $p_k^s$. We consider two cases. 
	\vspace{2mm}\noindent \textbf{Case 1: $p_k(\{2,\cdots, s+1\}) \ge \delta^2$}. In this case, let $Z$ be the number of collisions among elements in $\{2, \cdots, s+1\}$ if we draw $s$ independent samples from $p_k$. We have
	$$ \E[Z] = \binom{s}2 \sum_{i \in \{2, \cdots, s\}}p_k(i)^2.$$
	Define  $\sum_{i \in \{2, \cdots, s\}}p_k(i)^2 = \|\widetilde{p_k}\|_2^2$. Using standard calculations as in the single distribution uniformity testing case, see \cite{Goldreich2011, Batu:2000}, we can compute
	$$ \Var[Z] \le 4\left( \binom{s}2 \| \widetilde{p_k} \|_2^2 \right)^{3/2}.$$ 
	Hence by Chebyshev's inequality,
	$$
	\mathbb{P}(Z = 0) \le \mathbb{P}(|Z - \E[Z]| \ge \E[Z]) \le \frac{4 \binom{s}2^{3/2} \| \widetilde{p_k} \|_2^3}{\binom{s}2^2 \| \widetilde{p_k} \|_2^4} \le \frac{C'}{s \|\widetilde{p_k} \|_2}
	$$
	for some absolute constant $C'$. Now by Cauchy Schwarz,
	$$ \|\widetilde{p_k}\|_2 \ge \frac{\delta^2}{\sqrt{s}} $$
	so we have $\mathbb{P}(Z = 0) \le 1/100$ for sufficiently large $n$. Therefore, $p_k^s(\mathcal{E}_1) \ge 99/100$. 
	\\
	\vspace{2mm}\noindent \textbf \noindent \textbf{Case 2: $p_k(\{2,\cdots, s+1\}) < \delta^2$.\quad} In this case, we have $p_k(1) \ge 1-\delta^2$. Therefore, if we draw $s$ samples from $p_k$, the number of $1$'s that we will see is at least $s(1-\delta^2)$ in expectation. By Chernoff, the probability that we see less than $s(1-\delta^2/2)$ number of $1$'s is at at most $1/100$ for sufficiently large $n$. Hence, $p_k^s(\mathcal{E}_2) \ge 99/100$.
	\\
	Now note that 
	$$ \| p-M\|_1 = 2 \| p-M\|_{TV}  \ge |P(\mathcal{E}_1) - M(\mathcal{E}_1)| + |p(\mathcal{E}_2) - M(\mathcal{E}_2)|.$$
	We have
	$$|p(\mathcal{E}_1) - M(\mathcal{E}_1)|  \ge \sum_{k \mid p_k \in \text{Case 1}} w_kp_k^s(\mathcal{E}_1) - \frac{1}9 \ge \frac{99}{100} \sum_{k \mid p_k \in \text{Case 1}} w_k - \frac{1}9. $$
	Similarly,
	$$|p(\mathcal{E}_2) - M(\mathcal{E}_2)| \ge   \sum_{k \mid p_k \in \text{Case 2}}w_kp_k^s(\mathcal{E}_2)  - \frac{1}{100} \ge  \frac{99}{100} \sum_{k \mid p_k \in \text{Case 2}} w_k - \frac{1}{100}.$$
	Hence we have that
	$$\| p-M\|_1  \ge \frac{99}{100} \left(\sum_k w_k\right) - \left( \frac{1}9 +  \frac{1}{100} \right) \ge \Omega(1).$$
	Since $M$ was arbitrary, we are done. Hence, $p$ is $\Omega(1)$-far from any mixture of product distributions.
\end{proof}

\paragraph{acknowledgements:} The authors would like to thank Sushruth Reddy, Rikhav Shah, and Greg Valiant for helpful discussions about de Finetti's theorem. The authors would also like to thank Ronitt Rubinfeld for valuable feedback. 

\bibliographystyle{alpha}
\bibliography{paper}

\appendix

\section*{Appendices}

\section{Moment Calculations of Sections \ref{sec:id_testing} and \ref{sec:closeness_testing} } \label{sec:moment_calcs}
In this section, we calculate the moments of random variables that appear in Section~\ref{sec:id_testing} and Section~\ref{sec:closeness_testing}. 
Suppose we have distributions $q$ and $p_1, \cdots, p_s$ such that there exist subsets $A$ and $B = [n] \setminus A$ with the property that $p_j(x) \ge q(x)$ for all $x \in A$ and all $j$ and $p_j(x) \le q(x)$ for all $x \in B$ and all $j$. Now, let $T_x$ be a Poisson random variable with parameter 
$$ \lambda_x = \sum_{j=1}^s p_j(x) .$$
We compute $\sum_{x \in [n]} \lambda_x^k$ for $k \in \{1,2,3,4\}$. We note that
\begin{equation}\label{eq:lam^1}
\sum_{x \in [n]}\lambda_x= \sum_{j=1}^s \sum_{x \in [n]} (q(x)+(-1)^{x\in B}e_j(x))
\end{equation}
We now compute $\sum_{x \in [n]} \lambda_x^2$. We use the notation $(-1)^{x \in B}$ as follows:
$$(-1)^{x\in B} = 
\begin{cases}
-1 \ &\text{if} \ x \in B \\
\ \ 1 \ &\text{if} \ x \not \in B
\end{cases}.$$
Note that:
$$ \sum_{x \in [n]} \lambda_x^2 = \sum_{x \in [n]} \left( \sum_{j=1}^s p_j(x) \right)^2 = \sum_{x \in [n]} \left( \sum_{j=1}^s p_j(x)^2 + \sum_{j \ne k} p_j(x)p_k(x) \right).$$
For fixed $j$ and $k$, we can compute that: 
$$ p_j(x)^2 = q(x)^2 + 2(-1)^{x \in B}q(x)e_j(x) +e_j(x)^2\,,$$
and we have:
$$ p_j(x)p_k(x) = q(x)^2 + (-1)^{x \in B}(q(x)e_k(x) + q(x)e_j(x)) + e_j(x)e_k(x)\,.$$
Putting everything together, we obtain:
\begin{align*}
\sum_{x \in [n]} \lambda_x^2 = \  &s^2||q||_2^2 + 2s \sum_{j=1}^s \sum_{x \in [n]}(-1)^{x \in B}q(x)e_j(x) \\ 
&+ \sum_{j=1}^s \sum_{x \in [n]} e_j(x)^2 + \sum_{j \ne k} \sum_{x \in [n]}e_j(x)e_k(x) \numberthis \label{eq:lam^2} .
\end{align*}
We now compute $\sum_{x \in [n] } \lambda_x^3$. For a fixed $x$, we have:
$$ \lambda_x^3 = \left( \sum_{j=1}^s p_j(x) \right)^3 = \sum_{j=1}^s p_j(x)^3 + 3\sum_{j \ne k} p_j(x)^2p_k(x) + \sum_{j \ne k \ne \ell} p_j(x)p_k(x)p_{\ell}(x).$$
Now for a fixed $j$,
\begin{align*}
   &p_j(x)^3 = (q(x)+(-1)^{x \in B} e_j(x))^3  \\
   &= q(x)^3 + 3(-1)^{x \in B} q(x)^2 e_j(x) + 3q(x)e_j(x)^2 + (-1)^{x \in B}e_j(x)^3,
\end{align*}
while for fixed $j \ne k$,
\begin{align*}
p_j(x)^2p_k(x) &= (q(x) + (-1)^{x \in B} e_j(x))^2(q(x) + (-1)^{x \in B} e_k(x)) \\
&= q(x)^3 + (-1)^{x \in B}q(x)^2 (2e_j(x) + e_k(x)) + (-1)^{x \in B}e_j(x)^2e_k(x) \\
&+ q(x)(e_j(x)^2 + 2e_j(x)e_k(x)+e_k^2(x)),
\end{align*}
and finally for $j \ne k \ne \ell$, we have: 
\begin{align*}
&p_j(x)p_k(x)p_{\ell}(x) = (q(x) + (-1)^{x \in B} e_j(x))(q(x) + (-1)^{x \in B} e_k(x))(q(x) + (-1)^{x \in B} e_{\ell}(x)) \\
&= q(x)^3 + (-1)^{x \in B}q(x)^2(e_j(x) + e_k(x)+e_{\ell}(x)) + (-1)^{x \in B}e_j(x)e_k(x)e_{\ell}(x) \\
&+ q(x)(e_j(x)e_k(x)+e_k(x)e_{\ell}(x)+e_{j}(x)e_{\ell}(x)).
\end{align*}
Putting everything together, we have
\begin{align*}
\sum_{x \in [n]} \lambda_x^3 &= s^3\|q\|_3^3 + 3s^2 \sum_{j=1}^s \sum_{x \in [n]} (-1)^{x \in B}q(x)^2 e_j(x) + 3 \sum_{j=1}^s \sum_{x \in [n]} q(x)e_j(x)^2 \\
&+ 3s \sum_{j \ne k} \sum_{x \in [n]} q(x)e_j(x)e_k(x) + 3 \sum_{j \ne k} \sum_{x \in [n]} (-1)^{x \in B}e_j(x)^2e_k(x)  \\
&+  \sum_{j =1}^s \sum_{x \in [n]} (-1)^{x \in B} e_j(x)^3  + \sum_{j \ne k \ne \ell} \sum_{x \in [n]} (-1)^{x\in B} e_j(x)e_k(x)e_{\ell}(x) \numberthis \label{eq:lam^3}.
\end{align*}
We now compute $\sum_{x \in [n]} \lambda_x^4$. We have
\begin{align*}
\lambda_x = \left( \sum_{j=1}^s p_j(x) \right)^4 &= \sum_{j=1}^s p_j(x)^4  + 3\sum_{j \ne k} p_j(x)^2p_k(x)^2 + 4 \sum_{j \ne k}p_j(x)^3p_k(x) \\
& + 6 \sum_{j \ne k \ne \ell} p_j(x)^2p_k(x)p_\ell(x)+  \sum_{j \ne k \ne \ell \ne t} p_j(x)p_k(x)p_\ell(x)p_t(x).
\end{align*}
We first analyze $ p_j(x)^4$ for a fixed $j$. We have:
\begin{align*}
p_j(x)^4 = (q(x)+ (-1)^{x \in B}e_j(x))^4  &= q(x)^4 + 4(-1)^{x\in B}q(x)^3e_j(x) + 6q(x)^2e_j(x)^2 \\
&+ 4(-1)^{x \in B}q(x)e_j(x)^3 + e_j(x)^4.
\end{align*}
Then for fixed $j \ne k$, we have:
\begin{align*}
p_j(x)^2p_k(x)^2 &=  (q(x)+ (-1)^{x \in B}e_j(x))^2 (q(x)+ (-1)^{x \in B}e_k(x))^2 \\
&= q(x)^4 + (-1)^{x \in B}q(x)^3(2e_j(x)+2e_k(x)) \\
&+ q(x^2(e_j(x)^2 + 4e_j(x)e_k(x) + e_k(x)^2) \\
&+ (-1)^{x \in B}q(x)(2e_j(x)^2e_k(x)+2e_j(x)e_k(x)^2) + e_j(x)^2e_k(x)^2 \,,
\end{align*}
and, we can get:
\begin{align*}
p_j(x)^3p_k(x) &=  (q(x)+ (-1)^{x \in B}e_j(x))^3 (q(x)+ (-1)^{x \in B}e_k(x)) \\
&= q(x)^4 + (-1)^{x \in B}q(x)^3(e_j(x)+e_k(x)) + q(x)^2(3e_j(x)^2 + 3e_j(x)e_k(x)) \\
&+ (-1)^{x \in B}q(x)(e_j(x)^3 + 3e_j(x)^2e_k(x)) + e_j(x)^3e_k(x).
\end{align*}
Furthermore, for fixed $j \ne k \ne \ell $, we have:
\begin{align*}
&p_j(x)^2p_k(x)p_\ell(x) \\
&=  (q(x)+ (-1)^{x \in B}e_j(x))^2 (q(x)+ (-1)^{x \in B}e_k(x))(q(x)+ (-1)^{x \in B}e_\ell(x)) \\
&= q(x)^4 +(-1)^{x \in B} q(x)^3(2e_j(x)+e_\ell(x)+e_k(x)) \\
&+q(x)^2(e_j(x)^2+2e_j(x)e_k(x)+2e_j(x)e_\ell(x)+e_k(x)e_\ell(x)) \\
&+(-1)^{x \in B}q(x)(e_j(x)^2e_\ell(x) + e_j(x)^2e_k(x) + 2e_j(x)e_k(x)e_\ell(x)) \\
&+ e_j(x)^2e_k(x)e_\ell(x).
\end{align*}
Finally, for fixed $j \ne k \ne \ell \ne t$, we have: 
\begin{align*}
&p_j(x)p_k(x)p_\ell(x)p_t(x) \\
&=  (q(x)+ (-1)^{x \in B}e_j(x)) (q(x)+ (-1)^{x \in B}e_k(x)) \\
&\cdot (q(x)+ (-1)^{x \in B}e_\ell(x)) (q(x)+ (-1)^{x \in B}e_t(x))  \\
&= q(x)^4 +(-1)^{x \in B}q(x)(e_j(x)+e_k(x)+e_\ell(x)+e_t(x)) \\
&+ q(x)^2 (e_j(x)e_k(x) + e_j(x)e_\ell(x) + e_j(x)e_t(x) + e_k(x)e_\ell(x) + e_k(x)e_t(x) + e_\ell(x)e_t(x) ) \\
&+(-1)^{x \in B}q(x)(e_j(x)e_k(x)e_\ell(x) + e_j(x)e_k(x)e_t(x) + e_j(x)e_t(x)e_\ell(x) + e_k(x)e_\ell(x)e_t(x)) \\
&+e_j(x)e_k(x)e_\ell(x)e_t(x).
\end{align*}
Altogether, we have: 
\begin{align*}
&\sum_{x \in [n]} \lambda_x^4 \\
&= s^4 \|q \|_4^4 + \sum_{x \in [n]}  \sum_{j=1}^s e_j(x)^4+6s^2 \sum_{j=1}^s \sum_{x \in [n]} q(x)^2e_j(x)^2 + 4s \sum_{j=1}^s \sum_{x \in [n]}(-1)^{x \in B}q(x)e_j(x)^3 \\
&+ 4s^3\sum_{j=1}^s \sum_{x \in [n]}(-1)^{x \in B}q(x)^3e_j(x) + 6s^2\sum_{j \ne k} \sum_{x \in [n]} q(x)^2e_j(x)e_k(x)  \\
&+ 12s \sum_{j \ne k} \sum_{x \in [n]} (-1)^{x \in B} q(x)e_j(x)^2e_k(x) + 4s \sum_{j \ne k \ne \ell} \sum_{x \in [n]} (-1)^{x \in B} q(x)e_j(x)e_k(x)e_\ell(x) \\
&+ 6 \sum_{j \ne k \ne \ell} e_j(x)^2e_k(x)e_\ell(x) + 4 \sum_{j \ne k} \sum_{x \in [n]}e_j(x)^3e_k(x) + 3  \sum_{j \ne k} \sum_{x \in [n]}e_j(x)^2e_k(x)^2 \\
&+ \sum_{ j \ne k \ne \ell \ne t} e_j(x)e_k(x)e_\ell(x)e_t(x) \numberthis \label{eq:lam^4}.
\end{align*}

\end{document}